\newlist{abbrv}{itemize}{1}
\setlist[abbrv,1]{label=,labelwidth=1.2in,align=parleft,itemsep=0.1\baselineskip,leftmargin=!}
\newcommand{\vph}{\widehat{\Phi}_{\kappa}}
\newcommand{\vphc}{\Phi_{\kappa}}
\newcommand{\vphp}{\widehat{\Phi}_{\kappa}^+}
\newcommand{\vphm}{\widehat{\Phi}_{\kappa}^-}
\newcommand{\vphpc}{\Phi_{\kappa}^+}
\newcommand{\vphmc}{\Phi_{\kappa}^-}
\newcommand{\tr}{\text{Tr}}
\newcommand{\Fp}{\hat{F}_{\kappa}^{+}}
\newcommand{\Fm}{\hat{F}_{\kappa}^{-}}
\newcommand{\Fpc}{F_{\kappa}^{+}}
\newcommand{\Fmc}{F_{\kappa}^{-}}
\newcommand{\abs}[1]{| #1 |}
\newcommand{\scp}[2]{\big\langle #1 , #2 \big\rangle}
\newcommand{\SCP}[2]{\big\langle #1 , #2 \big\rangle}
\newcommand{\bra}[1]{\langle #1 |}
\newcommand{\ket}[1]{| #1 \rangle}
\newcommand{\norm}[1]{\left|\left| #1 \right|\right|}
\renewcommand{\Re}{\mathrm{Re}}
\renewcommand{\Im}{\mathrm{Im}}
\newcommand{\id}{\mathbbm{1}}
\newcommand{\op}{\mathrm{op}}
\newcommand{\be}{\begin{equation}}
\newcommand{\ee}{\end{equation}}
\newtheorem{theorem}{Theorem}[section]
\newtheorem{lemma}[theorem]{Lemma}
\newtheorem{remark}[theorem]  {Remark}
\newtheorem{definition}[theorem] {Definition}
\newtheorem{conj}[theorem] {Conjecture}
\begin{document}

\title{Mean-field limits of particles in interaction with quantized radiation fields}

\author{
Nikolai Leopold\footnote{
IST Austria (Institute of Science and Technology Austria), Am Campus 1, 3400 Klosterneuburg, Austria. E-mail: {\tt nikolai.leopold@ist.ac.at}} \ and 
Peter Pickl\footnote{Duke Kunshan University, Duke Avenue 8, 215316 Kunshan, China.
\newline
 E-mail: {\tt peter.pickl@dukekunshan.edu.cn}} \footnote{Ludwig-Maximilians-Universit\"at M\"unchen, Theresienstra\ss e 39, {80333} M\"unchen, Germany.
 \newline E-mail: {\tt pickl@math.lmu.de}}
}

\maketitle

\begin{center}
\textit{Dedicated to Herbert Spohn on the occasion of his 70th birthday.}
\end{center}

\begin{abstract}
\noindent
We report on a simple strategy to treat mean-field limits of quantum mechanical systems in which a large number of particles weakly couple to a second-quantized radiation field. Extending the method of counting, introduced in \cite{pickl1}, with ideas inspired by \cite{vytas} and \cite{falconi} leads to a technique that can be seen as a combination of the method of counting and the coherent state approach. It is similar to the coherent state approach but might be slightly better suited to systems in which a fixed number of particles couple to radiation. The strategy is effective and provides explicit error bounds.
As an instructional example we derive the Schr\"odinger-Klein-Gordon system of equations from the Nelson model with ultraviolet cutoff. Furthermore, we derive explicit bounds on the rate of convergence of the one-particle reduced density matrix of the non-relativistic particles in Sobolev norm.  More complicated models like the Pauli-Fierz Hamiltonian can be treated in a similar manner \cite{leopold}.
\end{abstract}

\noindent
\textbf{MSC class:} 35Q40, 81Q05, 82C10   \\
\textbf{Keywords:} mean-field limit, Nelson model, Schr\"odinger-Klein-Gordon system

\section{Introduction}
\label{section: Nelson Introduction}

Quantum systems with many degrees of freedom are difficult to analyze. This is especially severe in the presence of quantized radiation fields which are described by Fock spaces with infinitely many degrees of freedom.
The dynamics of such systems can, however, be studied in special situations by means of simpler effective evolution equations. These involve fewer degrees of freedom, are less exact but easier to investigate. 
 Effective evolution equations for particles that interact with quantized radiation fields have rigorously been derived for example in \cite{ginibrenironivelo,falconi,ammarifalconi,teufel2,frankschlein,frankgang, leopold, griesemer}. 
The general setting in these works is given by the tensor product of two Hilbert spaces
\begin{align}
\mathcal{H}^{(N)} = \mathcal{H}_p^{(N)} \otimes \mathcal{F}.
\end{align} 
The space $\mathcal{H}_p^{(N)}$ describes $N$ non-relativistic particles and $\mathcal{F}$ (usually a bosonic Fock space) models  the quantized radiation field in terms of gauge bosons. 
The dynamics of the system is governed by the Schr\"odinger equation with a Hamiltonian of the form
\begin{align}
H^{N} \coloneqq H_0^{N} + H_{f} + \sum_{j=1}^N H_{int,j}.
\end{align}
Here, $H_0^{N}$ and $H_f$ (solely acting on $\mathcal{H}_p^{(N)}$ and $\mathcal{F}$) denote the free Hamiltonians of the particles and the radiation field. The term $H_{int,j}$ establishes an interaction between the $j$-th particle and the radiation field. This couples the dynamics of the particles with the gauge bosons.  
A typical question of interest is, whether the quantized radiation field can be approximated by a classical field and the evolution of the whole system described by a system of simple effective equations.
Usually one considers initial data $\Psi_{N,0} = \Phi_{N,0} \otimes W( \gamma^{1/2} \alpha_0) \Omega$ with no correlations between the particles and the gauge bosons, sometimes referred to as Pekar product state \cite{frankschlein}. The state $W( \gamma^{1/2} \alpha_0) \Omega \in \mathcal{F}$ denotes gauge bosons in the coherent state $\alpha_0$ with a mean particle number $\gamma \norm{\alpha_0}^2$, see \eqref{eq: Nelson Weyl operator}. Hereby, $\gamma$ is a model dependent scaling parameter, for instance the number of particles~\cite{falconi,ammarifalconi,leopold} or the strong coupling parameter in the Polaron model~\cite{frankschlein,frankgang,griesemer}. From physics literature it is commonly known that coherent states with a high occupation number of gauge bosons can approximately be described by a classical radiation field \cite[Chapter III.C.4]{cohen}. This allows us to describe the system in the limit $\gamma \rightarrow \infty$ (in a suitable sense, see Section~\ref{section: Nelson Main result}) effectively by the state of the particles $\Phi_{N,0}$ and a classical radiation field with mode function $\alpha_0$. The arising question is, if at later time $t$ one can still approximate the system by the pair $(\Phi_{N,t},\alpha_t)$ which evolves according to a set of simple effective equations with initial datum $(\Phi_{N,0},\alpha_0)$:
\begin{equation}
 \CD
  \Psi_{N,0} @>  \gamma \rightarrow \infty >>   (\Phi_{N,0},\alpha_0)
\\
      @V \text{Many-body dynamics} VV    @VV  \text{Effective dynamics}  V \\
\Psi_{N,t}  @>   \gamma \rightarrow \infty  >>  (\Phi_{N,t},\alpha_t) .
\endCD
\end{equation}
This only holds, if the radiation sector of $\Psi_{N,t}$ is approximately given by a coherent state, i.e. if the gauge bosons, that are created during the time evolution, are either in a coherent state or subleading  with respect to $\gamma$.
The effect of the particles on the radiation field is typically negligible, if one considers a fixed number of particles, a coupling constant that tends to zero in a suitable sense and a coherent state, whose mean particle number scales with the parameter $\gamma$ \cite{ginibrenironivelo}.
Otherwise, the state of the particles must have a special structure to ensure that the contributing gauge bosons are coherent \cite[Complement $\text{B}_{\text{III}}$]{cohen}. This is expected, if one considers slow and heavy particles~\cite{teufel2} or a condensate of particles that weakly couple to the radiation field. In this work, we are interested in the latter situation. More explicitly, we study the dynamics of initial states $\Psi_{N,0} = \varphi_0^{\otimes N} \otimes W(N^{1/2} \alpha_0) \Omega$ with one particle wave function $\varphi_0$ in the limit $N = \gamma \rightarrow \infty$ where the fields in the interaction Hamiltonian $H_{int,j}$ are multiplied by $N^{-1/2}$ (see Section~\ref{section: Nelson setting of the problem}). We refer to this limit as mean-field limit, because it implies that the source term of the radiation field is replaced by its mean value in the effective description. So far, such kind of limits have been studied either by the coherent state approach \cite{ginibrenironivelo,falconi2,falconi} or by means of Wigner measures \cite{ammarifalconi}.\footnote{These approaches usually embed the N particle states of $\mathcal{H}_p^{(N)}$ in a bosonic Fock space for the particles $\mathcal{F}_p$ and consider the Hilbert space $\mathcal{F}_p \otimes \mathcal{F}$.} While the method of Wigner measures allows us to derive limiting equations for an extensive class of initial states it does in contrast to the coherent state approach not provide quantitative bounds on the rate of convergence. 
In the following, we present a strategy, similar to the coherent state approach, which is designed for systems with fixed particle number. Such systems usually arise in the non-relativistic limit when the creation and annihilation of the non-relativistic particles is suppressed.\footnote{For the sake of clarity, we want to stress that only the number of the non-relativistic particles is fixed while gauge bosons are created and destroyed during the time evolution.}
The method provides explicit bounds on the rate of convergence and can be seen as a combination of the method of counting and the coherent state approach. 
As an instructional example we derive the Schr\"odinger-Klein-Gordon system of equations from the Nelson model with ultraviolet cutoff.
Our strategy seems general and we hope it will be useful in the treatment of more complicated models.
It was already applied to derive the Maxwell-Schr\"odinger system of equations from the spinless Pauli-Fierz Hamiltonian \cite{leopold}.

\section{Setting of the problem}
\label{section: Nelson setting of the problem}
We consider a system of N identical charged bosons interacting with a scalar field, described by a wave function $\Psi_{N,t} \in \mathcal{H}^{(N)}$. The Hilbert space is given by
\begin{align}
\mathcal{H}^{(N)} \coloneqq L^2\left( \mathbb{R}^{3N} \right) \otimes \mathcal{F} ,
\end{align}
where the scalar field is represented by elements of the Fock space 
$
 \mathcal{F}  \coloneqq  \bigoplus_{n \geq 0} L^2(\mathbb{R}^3)^{\otimes_s^n} .
$
 The subscript $s$ indicates symmetry under interchange of variables.  An element $\Psi_N \in \mathcal{H}^{(N)}$ is a sequence $\{ \Psi_N^{(n)} \}_{n \in \mathbb{N}_0}$ in $L^2(\mathbb{R}^{3N + 3n})$ 
 with\footnote{Note that $\Psi_N^{(n)}$ is symmetric in the variables $k_1, \ldots k_n$. For notational convenience we will use the shorthand notation $\Psi_N^{(n)}(X_N,K_n) = \Psi_N^{(n)}(x_1, \ldots, x_N, k_1, \ldots k_n)$.}
 \begin{align}
 \norm{\Psi_N}^2 = \sum_{n=0}^{\infty} \int d^{3N}x \, d^{3n}k \, \abs{\Psi_{N}^{(n)}(x_1,\ldots,x_N,k_1, \ldots, k_n)}^2
 < \infty.
 \end{align}
On $\mathcal{H}^{(N)}$, we define the (pointwise) annihilation and creation operators 
by\footnote{Here, $\hat{k}_j$ means that $k_j$ is left out in the argument of the function.}
\begin{align}
\label{eq: Nelson pointwise creation and annihilation operators}
\left( a(k) \Psi_N \right)^{(n)} (X_N,k_1,\ldots,k_n) &= ( n + 1)^{1/2} \Psi_N^{(n+1)}(X_N,k,k_1, \ldots, k_n) ,
\nonumber \\
\left( a^*(k) \Psi_N \right)^{(n)} (X_N,k_1,\ldots,k_n) &=
n^{-1/2} \sum_{j=1}^n \delta(k- k_j) \Psi_N^{(n)}(X_N, k_1, \ldots, \hat{k}_j, \ldots, k_n).
\end{align}
They are operator valued distributions and satisfy the commutation relations
\begin{align}
\label{eq: Nelson canonical commutation relation}
[a(k), a^*(l) ] &=  \delta(k-l), \quad
[a(k), a(l) ] = 
[a^*(k), a^*(l) ] = 0.
\end{align}
The time evolution of the wave function $\Psi_{N,t}$ is governed by the Schr\"odinger equation
 \begin{align}
\label{eq: Nelson Schroedinger equation microscopic}
 i \partial_t \Psi_{N,t} = H_N \Psi_{N,t}.
 \end{align}
 Here,
\begin{align}
\label{eq: Nelson Hamiltonian}
H_N =& \sum_{j=1}^N  \left( - \Delta_j  + \frac{\vph(x_j)}{\sqrt{N}} \right) + H_f
\end{align}
denotes the Nelson Hamiltonian and 
\begin{align}
\vph(x) =  \int d^3k \, \frac{\tilde{\kappa}(k)}{\sqrt{2 \omega(k)}} 
\left( e^{ikx} a(k) + e^{-ikx} a^*(k)  \right) .
\end{align}
The scalar bosons evolve according to the dispersion relation $\omega(k) = ( \abs{k}^2 + m_b^2 )^{1/2}$ with mass $m_b \geq 0$ and
\begin{equation}
\label{eq: Nelson cut off function}
\tilde{\kappa}(k) = (2 \pi)^{- 3/2} \ \id_{\abs{k}\leq \Lambda}(k), \quad
\text{with} \;  \id_{\abs{k}\leq \Lambda}(k) =
\begin{cases} 
1 &\text{if } \abs{k} \leq \Lambda , \\
0 &\text{otherwise}, 
\end{cases}
\end{equation}
cuts off the high frequency modes of the radiation field.
On the domain
\begin{align}
\label{eq: Nelson field energie domain}
\mathcal{D}(H_f) = \Big\{ \Psi_N \in \mathcal{H}^{(N)}: \sum_{n=1}^{\infty} \int d^{3N}x \, d^{3n}k \, \abs{\sum_{j=1}^n w(k_j)}^2  \abs{\Psi_N^{(n)}(X_N, K_n)}^2  < \infty  \Big\}
\end{align}
the free Hamiltonian of the scalar field is defined by
\begin{align}
\left( H_f \Psi_N \right)^{(n)} = \sum_{j=1}^n w(k_j) \Psi_N^{(n)}.
\end{align}
By means of the creation and annihilation operators it can be written as
\begin{align}
H_f &= \int d^3k \, \omega(k) a^*(k) a(k).
\end{align}
The Nelson model was originally introduced to describe the interaction of non-relativistic nucleons with a meson field. 
By standard estimates of the field operator and Kato's theorem it is easily shown that $H_N$ is a self-adjoint operator with $\mathcal{D}\left(H_N \right) = \mathcal{D} \big( \sum_{j=1}^N - \Delta_j +H_f \big)$ \cite{nelson, spohn}.
The scaling in front of the interaction ensures that the kinetic and potential energy of  $H_N$ are of the same order. 
For simplicity, we are first interested in the evolution of initial states of the product form
\begin{align}
\label{eq: Nelson initial product state}
\varphi_0^{\otimes N} \otimes W(\sqrt{N} \alpha_0) \Omega.
\end{align}
Here, $\Omega$ denotes the vacuum in $\mathcal{F}$ and $W(f)$ is the Weyl operator
\begin{align}
\label{eq: Nelson Weyl operator}
W(f) \coloneqq \exp \left(  \int d^3k \, f(k) a^*(k) - f^*(k) a(k) \right),
\end{align} 
where $f \in L^2(\mathbb{R}^3)$.
This choice of initial states corresponds to situations in which no correlations among the particles and the gauge bosons are present. 
Due to the interaction between the particles and the gauge bosons correlations take place and the time evolved state will no longer be of product form. However, for large $N$ and times of order one it can be approximated, in a sense more specified below, by a state of the form $\varphi_t^{\otimes N} \otimes W(\sqrt{N} \alpha_t) \Omega$, 
where $(\varphi_t,\alpha_t)$ solves the Schr\"odinger-Klein-Gordon equations\footnote{We use the shorthand notation $\left( \kappa * \Phi \right)(x,t) = \int d^3k \, e^{ikx} \tilde{\kappa}(k) \mathcal{FT}[\Phi](k,t)$, where $\mathcal{FT}[\Phi](k,t)$ denotes the Fourier transform of $\Phi(x,t)$.}
\begin{align}
\label{eq: Nelson Schroedinger-Klein-Gordon system}
\begin{cases}
i \partial_t \varphi_t(x) &= H^{eff} \varphi_t(x)
= \left[ - \Delta + \left( \kappa * \Phi \right)(x,t)  \right] \varphi_t(x),  \\
i \partial_t \alpha_t(k) &= \omega(k) \alpha_t(k) + (2 \pi)^{3/2} \frac{\tilde{\kappa}(k)}{\sqrt{2 \omega(k)}} \mathcal{FT}\left[ \abs{\varphi_t}^2 \right](k),  \\
 \Phi(x,t) &=
 \int d^3k \, (2 \pi)^{-3/2}   \frac{1}{\sqrt{2 \omega(k)}} 
\left(  e^{ikx} \alpha_t(k)  + e^{-ikx} \alpha^*_t(k)  \right),
\end{cases}
\end{align}
with $(\varphi_0,\alpha_0) \in L^2(\mathbb{R}^3) \oplus L^2(\mathbb{R}^3)$.
This system of equations determines the evolution of a single quantum particle in interaction with a classical scalar field. In the literature it is better known in its  formally equivalent form 
\begin{align}
\label{eq: Nelson Schroedinger-Klein-Gordon system 2}
\begin{cases}
i \partial_t \varphi_t(x) 
&=  \left[ - \Delta + \left( \kappa * \Phi \right)(x,t)  \right] \varphi_t(x),  \\
\left[ \partial_t^2 - \Delta + m_b^2 \right] \Phi(x,t) 
&= - \left( \kappa * \abs{\varphi_t}^2 \right)(x).
\end{cases}
\end{align}

\section{Main result}
\label{section: Nelson Main result}
The physical situation we are interested in is the dynamical description of a Bose-Einstein condensate of charges. We start initially with a product state~\eqref{eq: Nelson initial product state} and show that the condensate persists during the time evolution, i.e. correlations are small also at later times.
Let
\begin{align}
\label{eq: Nelson number operator definition}
\mathcal{N} \coloneqq \int d^3k \, a^*(k) a(k)
\end{align}
 be the number (of gauge bosons) operator
with domain
\begin{align}
\label{eq: Nelson number operator domain}
\mathcal{D}(\mathcal{N}) = \Big\{ \Psi_N \in \mathcal{H}^{(N)} : \sum_{n=1}^{\infty}  n^2 \int d^{3N}x \, d^{3n}k \,  \abs{\Psi_N^{(n)}(X_N, K_n)}^2  < \infty   \Big\}
\end{align}
and
$\Psi_{N,t} \in \left(L^2_s \left( \mathbb{R}^{3N} \right) \otimes \mathcal{F} \right) \cap  \mathcal{H}^{(N)} \cap \mathcal{D}(\mathcal{N})$ such that  $\norm{\Psi_{N,t}}_{\mathcal{H}^{(N)}}=1$. On the Hilbert space $L^2(\mathbb{R}^3)$ we define the "one-particle reduced density matrix of the charges" by
\begin{align}
\label{eq: Nelson definition reduced one-particle matrix charged particles}
\gamma_{N,t}^{(1,0)} \coloneqq \tr_{2,\ldots, N} \otimes \tr_{\mathcal{F}} \ket{\Psi_{N,t}} \bra{\Psi_{N,t}},
\end{align}
where $\tr_{2,\ldots, N}$  denotes the partial trace over the coordinates $x_2,\ldots, x_N$ and $\tr_{\mathcal{F}}$ the trace over Fock space. Then, the charged particles of the many-body state $\Psi_{N,t}$ are said to exhibit complete asymptotic Bose-Einstein condensation at time $t$, if there exists 
$\varphi_t \in L^2(\mathbb{R}^3)$ with $\norm{\varphi_t}=1$, such that
\begin{align}
\label{eq: Nelson convergence reduced one-particle matrix charged particles}
\tr_{L^2(\mathbb{R}^3)} \abs{\gamma_{N,t}^{(1,0)} - \ket{\varphi_t} \bra{\varphi_t}}  \rightarrow 0,
\end{align}
as $N \rightarrow \infty$. Such $\varphi_t$ is called the condensate wave function. For other indicators of condensation and their relation we refer to \cite{michelangeli}. Moreover, we introduce the "one-particle reduced density matrix of the gauge bosons" with kernel
\begin{align}
\label{eq: Nelson definition reduced one-particle matrix photon}
\gamma_{N,t}^{(0,1)}(k,k') \coloneqq N^{-1}  \scp{\Psi_{N,t}}{ a^*(k')  a(k) \Psi_{N,t}}_{\mathcal{H}^{(N)}}  .
\end{align}
$\gamma_{N,t}^{(0,1)}$ is a positive trace class operator with $\tr_{L^2(\mathbb{R}^3)}(\gamma_{N,t}^{(0,1)})= N^{-1} \scp{\Psi_{N,t}}{\mathcal{N} \Psi_{N,t}}_{\mathcal{H}^{(N)}}$.
It should be noted, that \eqref{eq: Nelson definition reduced one-particle matrix photon} differs from the usual definition (e.g. \cite[p.8]{rodnianskischlein}) by the weight factor  $ \scp{\Psi_N}{\mathcal{N} \Psi_N}_{\mathcal{H}^{(N)}}/N$. Our choice ensures that we only measure deviations from the classical mode function that are at least of order $N$. This is reasonable because Fock space vectors with a mean particle number smaller than of order $N$ only have a subleading effect on the dynamics of the charged particles. 
We say the gauge bosons exhibit "asymptotic Bose-Einstein condensation", if there exists a state $\alpha_t \in L^2(\mathbb{R}^3)$, such that
\begin{align}
\label{eq: Nelson convergence reduced one-particle matrix photon}
\tr_{L^2(\mathbb{R}^3)} \abs{\gamma_{N,t}^{(0,1)} - \ket{\alpha_t} \bra{\alpha_t}} \rightarrow 0,
\end{align}
 as $N \rightarrow \infty$. \\
In order to derive our main result, the solutions of the Schr\"odinger-Klein-Gordon equations have to satisfy the following assumptions.

\begin{definition}
\label{definition: assumptions on the solutions of the effective system}
Let $m \in \mathbb{N}$, $H^m(\mathbb{R}^3)$ denote the Sobolev space of order $m$ and $L_m^2(\mathbb{R}^3)$ a weighted $L^2$-space with norm
$\norm{\alpha}_{L^2_m(\mathbb{R}^3)} =  \big| \big|  ( 1 + \abs{\cdot}^2 )^{m/2} \alpha \big| \big|_{L^2(\mathbb{R}^3)}.$ We define two sets of solutions of the Schr\"odinger-Klein-Gordon equations:
\begin{align}
(\varphi_t,\alpha_t) \in \mathcal{G}_1 \Leftrightarrow  &(a) \quad (\varphi_t,\alpha_t)  \; \text{is a $L^2 \oplus L^2$ solution of \eqref{eq: Nelson Schroedinger-Klein-Gordon system} with $\norm{\varphi_t}_{L^2(\mathbb{R}^3)} =1$} 
\nonumber \\
  &(b) \quad (\varphi_t,\alpha_t) \in H^2(\mathbb{R}^3) \oplus L_1^2(\mathbb{R}^3).
\\
(\varphi_t,\alpha_t) \in \mathcal{G}_2 \Leftrightarrow  &(a) \quad (\varphi_t,\alpha_t)  \; \text{is a $L^2 \oplus L^2$ solution of \eqref{eq: Nelson Schroedinger-Klein-Gordon system} with $\norm{\varphi_t}_{L^2(\mathbb{R}^3)} =1$ } 
\nonumber \\
  &(b) \quad (\varphi_t,\alpha_t) \in H^4(\mathbb{R}^3) \oplus L_2^2(\mathbb{R}^3).
\end{align}
\end{definition}
\noindent
These assumptions are expected to follow from appropriately chosen initial data.\footnote{We suppose that Conjecture~\ref{lemma: Nelson global existence of Schroedinger Klein Gordon} can be proven by a standard fixed-point argument. Especially due to the cutoff in the radiation field it seems possible to make use of Theorem X.74 in~\cite{reedsimon}.}
\begin{conj}
\label{lemma: Nelson global existence of Schroedinger Klein Gordon}
Let $(\varphi_0, \alpha_0) \in  H^{2n}(\mathbb{R}^3) \oplus L_{n}^2(\mathbb{R}^3)$ for $1\leq n \leq2$. Then, there is a strongly differentiable $\left( H^{2n}(\mathbb{R}^3) \oplus L_{n}^2(\mathbb{R}^3) \right)$-valued function $(\varphi(t),\alpha(t))$ on $[0, \infty )$ that satisfies \eqref{eq: Nelson Schroedinger-Klein-Gordon system}. 
\end{conj}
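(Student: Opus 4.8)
The plan is to treat the coupled system \eqref{eq: Nelson Schroedinger-Klein-Gordon system} as an abstract semilinear evolution equation on the Hilbert space $\mathcal{X}_n \coloneqq H^{2n}(\mathbb{R}^3) \oplus L^2_n(\mathbb{R}^3)$ and to run the standard contraction-mapping/energy argument. First I would rewrite the system in Duhamel form: the free Schr\"odinger group $e^{it\Delta}$ acts on the first component and the free Klein-Gordon flow $e^{-i\omega(k)t}$ (multiplication operator) acts on the second. Because of the ultraviolet cutoff $\tilde\kappa = (2\pi)^{-3/2}\id_{|k|\le\Lambda}$, the coupling terms are very benign: the source $(2\pi)^{3/2}\frac{\tilde\kappa(k)}{\sqrt{2\omega(k)}}\mathcal{FT}[|\varphi_t|^2](k)$ is supported in $\{|k|\le\Lambda\}$, where $\omega(k)^{-1/2}\le m_b^{-1/2}$ (or, if $m_b=0$, one still has $\omega(k)^{-1/2}\le|k|^{-1/2}$, which is locally $L^2$ in three dimensions), so it lies in $L^2_n$ for every $n$ with norm controlled by $\||\varphi_t|^2\|_{L^1}=\|\varphi_t\|_{L^2}^2$. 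Symmetrically, $\kappa*\Phi$ is a smooth, bounded potential: $\widehat{\kappa*\Phi}$ is supported in $\{|k|\le\Lambda\}$ and $\|\kappa*\Phi\|_{W^{k,\infty}}\lesssim_{\Lambda} \|\alpha_t\|_{L^2}$ for every $k$, so multiplication by it maps $H^{2n}\to H^{2n}$ boundedly. This is precisely the structure under which Theorem X.74 of \cite{reedsimon} applies.

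The key steps, in order, would be: (1) Local existence and uniqueness in $\mathcal{X}_n$: verify that the nonlinearity $F(\varphi,\alpha) = \big(-i(\kappa*\Phi)\varphi,\ -i(2\pi)^{3/2}\frac{\tilde\kappa}{\sqrt{2\omega}}\mathcal{FT}[|\varphi|^2]\big)$ is locally Lipschitz from $\mathcal{X}_n$ to $\mathcal{X}_n$ — the cutoff makes the map $\alpha\mapsto \kappa*\Phi$ into $W^{2n,\infty}$ bounded and linear, and $\varphi\mapsto|\varphi|^2$ is locally Lipschitz $H^{2n}\to H^{2n}$ for $2n\ge 2>3/2$ since $H^{2n}$ is then a Banach algebra up to the weight; the product $(\kappa*\Phi)\varphi$ is handled by the algebra/Moser estimate. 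Then a contraction mapping on $C([0,T];\mathcal{X}_n)$ for $T$ small (depending on the $\mathcal{X}_n$-norm of the data) gives a unique strongly differentiable solution. (2) Conservation of the $L^2$ mass of $\varphi$: since $H^{eff} = -\Delta + (\kappa*\Phi)$ is, for each fixed $t$, self-adjoint (the potential being real, bounded), $\|\varphi_t\|_{L^2}=\|\varphi_0\|_{L^2}=1$ is preserved; this is what keeps the source term of the $\alpha$-equation uniformly bounded in time. (3) A priori control of the higher norms: using the Duhamel formula and Gr\"onwall, bound $\|\varphi_t\|_{H^{2n}}$ and $\|\alpha_t\|_{L^2_n}$ on $[0,T]$. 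From the second equation, $\|\alpha_t\|_{L^2_n}\le \|\alpha_0\|_{L^2_n} + C_{\Lambda,n}\int_0^t \|\varphi_s\|_{L^2}^2\,ds = \|\alpha_0\|_{L^2_n}+C_{\Lambda,n}t$ grows at most linearly; feeding this into the first equation, $\frac{d}{dt}\|\varphi_t\|_{H^{2n}} \lesssim \|\kappa*\Phi\|_{W^{2n,\infty}}\|\varphi_t\|_{H^{2n}} \lesssim_{\Lambda,n} (1+t)\|\varphi_t\|_{H^{2n}}$, so $\|\varphi_t\|_{H^{2n}}$ stays finite on every finite interval. (4) Globalization: since neither norm blows up in finite time, the blow-up alternative from step (1) forces the solution to extend to $[0,\infty)$.

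The main obstacle — such as it is — is the mild singularity of $\omega(k)^{-1/2}$ at $k=0$ when $m_b=0$, but this is harmless because it is an $L^2_{\mathrm{loc}}$ singularity in $\RRR^3$ cut off at $|k|\le\Lambda$, so $\frac{\tilde\kappa(k)}{\sqrt{2\omega(k)}}\mathcal{FT}[|\varphi|^2](k)\in L^2_n$ uniformly; one simply notes $\big\|(1+|\cdot|^2)^{n/2}\tilde\kappa\,\omega^{-1/2}\big\|_{L^2(\RRR^3)}<\infty$ and that $\mathcal{FT}[|\varphi|^2]\in L^\infty$. Beyond that, everything is routine: the weight $(1+|k|^2)^{n/2}$ commutes with the free Klein-Gordon flow, the cutoff trivializes all the derivative losses one would otherwise fear in a wave-type equation, and the coupling is globally Lipschitz on bounded sets once the mass is conserved. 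Because this is standard, I would present it compactly, citing Theorem X.74 of \cite{reedsimon} for the abstract local theory and only spelling out the two estimates — boundedness of $\alpha\mapsto\kappa*\Phi$ on $W^{2n,\infty}$ and local Lipschitz continuity of $\varphi\mapsto(\kappa*\Phi)\varphi$ and $\varphi\mapsto|\varphi|^2$ on $H^{2n}$ — together with the Gr\"onwall step that yields the global bound.
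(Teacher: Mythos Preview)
The paper does not actually prove this statement: it is labeled a \emph{Conjecture}, and the authors' only comment is the footnote immediately before it, where they say they ``suppose that Conjecture~\ref{lemma: Nelson global existence of Schroedinger Klein Gordon} can be proven by a standard fixed-point argument'' and that ``especially due to the cutoff in the radiation field it seems possible to make use of Theorem~X.74 in~\cite{reedsimon}.'' Your sketch is precisely a fleshed-out version of that suggestion --- Duhamel reformulation with the free groups $e^{it\Delta}$ and $e^{-i\omega t}$, local Lipschitz continuity of the cutoff nonlinearity on $H^{2n}\oplus L^2_n$ feeding into Theorem~X.74, $L^2$-conservation of $\varphi$ to keep the $\alpha$-source bounded, and a Gr\"onwall/blow-up alternative to globalize --- and the estimates you isolate (boundedness of $\alpha\mapsto\kappa*\Phi$ into $W^{2n,\infty}$, the algebra property of $H^{2n}$ for $2n>3/2$, and the handling of the $\omega^{-1/2}$ factor via $\tilde\kappa\,\omega^{-1/2}\in L^2$) are the right ones. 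So you have not diverged from the paper; you have supplied the argument the authors only gestured at.
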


\noindent
Our main theorem is the following.

\begin{theorem}
\label{theorem: Nelson main theorem}
Let $(\varphi_t, \alpha_t) \in \mathcal{G}_1$ and $ \Psi_{N,0} \in \left( L_s^2(\mathbb{R}^{3N})  \otimes \mathcal{F} \right) \cap \mathcal{D} \left( \mathcal{N} \right) \cap \mathcal{D} \left(  \mathcal{N}  H_N \right)$ with $\norm{\Psi_{N,0}} =1$
such that \footnote{Here, $W^{-1}(\sqrt{N} \alpha_0) = W(- \sqrt{N} \alpha_0)$ is the inverse of the unitary Weyl operator $W(\sqrt{N} \alpha_0)$, see Section~\ref{section Nelson initial states}.} 
\begin{align}
a_N =& Tr_{L^2(\mathbb{R}^3)} \abs{\gamma_{N,0}^{(1,0)} - \ket{\varphi_0} \bra{\varphi_0}} \rightarrow 0 \; and  \\
b_N =& N^{-1} \scp{W^{-1}(\sqrt{N} \alpha_0) \Psi_{N,0}}{ \mathcal{N} W^{-1}(\sqrt{N} \alpha_0) \Psi_{N,0}}_{\mathcal{H}^{(N)}} \rightarrow 0
\end{align}
as $N \rightarrow \infty$.
Let $\Psi_{N,t}$ be the unique solution of \eqref{eq: Nelson Schroedinger equation microscopic} with initial data $\Psi_{N,0}$. Then, there exists a generic constant $C$ independent of $N$, $\Lambda$ and $t$ such that
\begin{align}
 \label{eq: Nelson main theorem 1}
\text{Tr}_{L^2(\mathbb{R}^3)} \abs{\gamma_{N,t}^{(1,0)} - \ket{\varphi_t} \bra{\varphi_t}} &\leq
\sqrt{a_N + b_N + N^{-1}} e^{ \Lambda^2 C t} , \\
\label{eq: Nelson main theorem 2}
\text{Tr}_{L^2(\mathbb{R}^3)} \abs{\gamma_{N,t}^{(0,1)} - \ket{\alpha_t} \bra{\alpha_t}} &\leq
\sqrt{a_N + b_N + N^{-1}}  e^{\Lambda^2 C t} C \left( 1 + \norm{\alpha_t} \right)
\end{align}
for any $t \in \mathbb{R}^+_0$.\footnote{To ease the presentation we have chosen for given $t$ the scaling parameter $N$ large enough such that $0 \leq \beta(t) \leq 1$ and $0 \leq \beta_2(t) \leq 1$ (see Subsections~\ref{subsec: Estimate on the time derivative of beta} and \ref{subsec:: Control of the kinetic energy}).}
In particular, for $\Psi_{N,0} = \varphi_0^{\otimes N} \otimes W(\sqrt{N} \alpha_0) \Omega$ one obtains
\begin{align}
 \label{eq: Nelson main theorem 4}
\text{Tr}_{L^2(\mathbb{R}^3)} \abs{\gamma_{N,t}^{(1,0)} - \ket{\varphi_t} \bra{\varphi_t}} &\leq
N^{-1/2}   e^{C \Lambda^2 t} , \\
\label{eq: Nelson main theorem 5}
\text{Tr}_{L^2(\mathbb{R}^3)} \abs{\gamma_{N,t}^{(0,1)} - \ket{\alpha_t} \bra{\alpha_t}} &\leq
N^{-1/2} e^{\Lambda^2 C t} C \left( 1 + \norm{\alpha_t} \right) .
\end{align}
Moreover, let $(\varphi_t, \alpha_t) \in \mathcal{G}_2$ and $ \Psi_{N,0} \in \left( L_s^2(\mathbb{R}^{3N})  \otimes \mathcal{F} \right) \cap \mathcal{D} \left( \mathcal{N} \right) \cap \mathcal{D} \left(  \mathcal{N}  H_N \right) \cap \mathcal{D} \left( H_N^2 \right)$ such that
\begin{align}
c_N =& \norm{\nabla_1 \left( 1 - \ket{\varphi_0} \bra{\varphi_0} \otimes \id_{L^2 (\mathbb{R}^{3(N-1)} )} \otimes \id_{\mathcal{F}} \right) \Psi_{N,0}}^2_{\mathcal{H}^{(N)}} \rightarrow 0 
\end{align}
as $N \rightarrow \infty$. Then, there exists a positive monotone increasing function $C(s)$ of the norms $\norm{\alpha_s}_{L^2(\mathbb{R}^3)}$ and $\norm{\varphi_s}_{H^1(\mathbb{R}^3)}$ such that
\begin{align}
 \label{eq: Nelson main theorem 3}
\text{Tr}_{L^2(\mathbb{R}^3)} \abs{\sqrt{1 - \Delta} \left(\gamma_{N,t}^{(1,0)} - \ket{\varphi_t} \bra{\varphi_t}\right) \sqrt{1 - \Delta}} &\leq
\sqrt{a_N + b_N + c_N + N^{-1} } C(t)  e^{ \Lambda^4 \int_0^t C(s) ds} .
\end{align}
For $\Psi_{N,0} = \varphi_0^{\otimes N} \otimes W(\sqrt{N} \alpha_0) \Omega$ one obtains
\begin{align}
\label{eq: Nelson main theorem 6}
\text{Tr}_{L^2(\mathbb{R}^3)} \abs{\sqrt{1 - \Delta} \left(\gamma_{N,t}^{(1,0)} - \ket{\varphi_t} \bra{\varphi_t}\right) \sqrt{1 - \Delta}} &\leq N^{-1/2} C(t) e^{ \Lambda^4 \int_0^t C(s) ds}.
\end{align}
\end{theorem}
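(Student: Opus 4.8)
The strategy is to combine the method of counting with a coherent-state transformation on Fock space. Put $\widetilde{\Psi}_{N,t}:=W^{-1}(\sqrt N\alpha_t)\Psi_{N,t}$, let $p^{\varphi_t}:=\ketbr{\varphi_t}$, $q^{\varphi_t}:=\id-p^{\varphi_t}$ on $L^2(\RRR^3)$ and write $p^{\varphi_t}_1,q^{\varphi_t}_1$ for their action on the first particle. Introduce the counting functional
\[
\beta(t):=\beta^a(t)+\beta^b(t),\qquad \beta^a(t):=\scp{\widetilde\Psi_{N,t}}{q^{\varphi_t}_1\widetilde\Psi_{N,t}},\qquad \beta^b(t):=N^{-1}\scp{\widetilde\Psi_{N,t}}{\mathcal{N}\,\widetilde\Psi_{N,t}}.
\]
Since $W(\sqrt N\alpha_t)$ is unitary on $\mathcal{F}$ we have $\gamma^{(1,0)}_{N,t}=\tr_{2,\ldots,N}\tr_{\mathcal{F}}\ketbr{\widetilde\Psi_{N,t}}$, so $\beta^a(t)=1-\scp{\varphi_t}{\gamma^{(1,0)}_{N,t}\varphi_t}$ and the standard density-matrix inequality gives $\tr\abs{\gamma^{(1,0)}_{N,t}-\ketbr{\varphi_t}}\leq 2\sqrt{\beta^a(t)}\leq 2\sqrt{\beta(t)}$; expanding the kernel of $\gamma^{(0,1)}_{N,t}$ after the shift $a(k)\mapsto a(k)+\sqrt N\alpha_t(k)$ and applying Cauchy--Schwarz yields $\tr\abs{\gamma^{(0,1)}_{N,t}-\ketbr{\alpha_t}}\leq\beta^b(t)+2\norm{\alpha_t}\sqrt{\beta^b(t)}\leq C(1+\norm{\alpha_t})\sqrt{\beta(t)}$ as soon as $\beta(t)\leq 1$. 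Moreover $\beta^a(0)\leq a_N$ and $\beta^b(0)=b_N$. Hence \eqref{eq: Nelson main theorem 1}, \eqref{eq: Nelson main theorem 2}, \eqref{eq: Nelson main theorem 4}, \eqref{eq: Nelson main theorem 5} all follow once we establish a Gr\"onwall bound for $\beta(t)$.

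A short computation with the Weyl relations, the identity for $\partial_t W^{-1}(\sqrt N\alpha_t)$, and the two effective equations shows that $\widetilde\Psi_{N,t}$ solves $i\partial_t\widetilde\Psi_{N,t}=\mathcal{H}_t\widetilde\Psi_{N,t}$ with
\[
\mathcal{H}_t=\sum_{j=1}^N\big(-\Delta_j+(\kappa*\Phi)(x_j,t)\big)+H_f+\frac{1}{\sqrt N}\sum_{j=1}^N G_t(x_j)+\mu(t),
\]
where $\mu(t)\in\RRR$ is a phase and $G_t(x):=\vph(x)-\scp{\varphi_t}{\vph(\cdot)\varphi_t}$, the subtracted term being the operator $\int d^3x\,\abs{\varphi_t(x)}^2\vph(x)$ on $\mathcal{F}$. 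Indeed, the $c$-number part of the shift of $N^{-1/2}\vph(x_j)$ is exactly the potential $(\kappa*\Phi)(x_j,t)$, so the first sum equals $\sum_j H^{eff}_j$; and the $\sqrt N$-order terms generated by shifting $H_f$ and by $i(\partial_t W^{-1})W$ recombine --- using precisely the second line of \eqref{eq: Nelson Schroedinger-Klein-Gordon system} --- into the centered interaction $N^{-1/2}\sum_j G_t(x_j)$. Differentiating and using $\partial_t q^{\varphi_t}_1=-i[H^{eff}_1,q^{\varphi_t}_1]$ (which is the first line of \eqref{eq: Nelson Schroedinger-Klein-Gordon system}) together with $[H_f,\mathcal{N}]=0$, $[\vphp,\mathcal{N}]=-\vphp$, $[\vphm,\mathcal{N}]=\vphm$, the mean-field contributions cancel and one is left with
\[
\frac{d}{dt}\beta^a(t)=\frac{i}{\sqrt N}\scp{\widetilde\Psi_{N,t}}{[\vph(x_1),q^{\varphi_t}_1]\widetilde\Psi_{N,t}},\qquad \frac{d}{dt}\beta^b(t)=-\frac{2}{\sqrt N}\Im\scp{\widetilde\Psi_{N,t}}{G^-_t(x_1)\widetilde\Psi_{N,t}},
\]
with $G^-_t:=\vphm-\scp{\varphi_t}{\vphm(\cdot)\varphi_t}$.

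To estimate these, insert $\id=p^{\varphi_t}_1+q^{\varphi_t}_1$ on each side of both expectations, split $\vph=\vphp+\vphm$, and bound every resulting term using the field estimates $\norm{\vphp(x)\Psi}+\norm{\vphm(x)\Psi}\leq C\Lambda\norm{(\mathcal{N}+1)^{1/2}\Psi}$ (the power of $\Lambda$ controlled by $\norm{\tilde{\kappa}/\sqrt{2\omega}}_{L^2}\leq C\Lambda$, by virtue of the ultraviolet cutoff), together with $\norm{(\mathcal{N}+1)^{1/2}p^{\varphi_t}_1\widetilde\Psi_{N,t}}\leq(N\beta^b(t)+1)^{1/2}$ (as $\mathcal{N}$ commutes with $p^{\varphi_t}_1$) and $\norm{q^{\varphi_t}_1\widetilde\Psi_{N,t}}=\beta^a(t)^{1/2}$. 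The only term not immediately of order $\beta+N^{-1}$ is $\scp{p^{\varphi_t}_1\widetilde\Psi_{N,t}}{G^-_t(x_1)\,p^{\varphi_t}_1\widetilde\Psi_{N,t}}$ in $\frac{d}{dt}\beta^b$, which a priori is only of order $\sqrt{\beta^b}$; but $p^{\varphi_t}_1 G_t(x_1)\,p^{\varphi_t}_1=0$ by the definition of $G_t$, so it vanishes. This cancellation --- the mean-field subtraction, equivalently the source term in the $\alpha_t$-equation --- is the one genuinely essential point of the estimate. All other terms are cross terms absorbed by $xy\leq\frac12(x^2+y^2)$, so $\abs{\frac{d}{dt}\beta(t)}\leq C\Lambda^2(\beta(t)+N^{-1})$, and Gr\"onwall gives $\beta(t)+N^{-1}\leq(a_N+b_N+N^{-1})e^{C\Lambda^2 t}$, finishing the first part. (The domain hypotheses $\mathcal{D}(\mathcal{N})\cap\mathcal{D}(\mathcal{N}H_N)$ enter only to make these differentiations and commutator manipulations rigorous, via a routine propagation-of-regularity argument for $\widetilde\Psi_{N,t}$.)

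For the Sobolev estimate \eqref{eq: Nelson main theorem 3}, \eqref{eq: Nelson main theorem 6} I would enlarge the functional to $\beta_2(t):=\beta(t)+\norm{\nabla_1 q^{\varphi_t}_1\widetilde\Psi_{N,t}}^2$, which dominates $\tr\abs{\sqrt{1-\Delta}\,(\gamma^{(1,0)}_{N,t}-\ketbr{\varphi_t})\sqrt{1-\Delta}}$ up to a factor monotone in $\norm{\varphi_t}_{H^1}$, and satisfies $\beta_2(0)\leq a_N+b_N+c_N$. Differentiating $\norm{\nabla_1 q^{\varphi_t}_1\widetilde\Psi_{N,t}}^2$ produces, on top of terms handled as before: (i) a commutator of $\nabla_1$ with the centered interaction, i.e.\ the operator $N^{-1/2}(\nabla\vph)(x_1)$, whose field bound now carries $\norm{\abs{k}\tilde{\kappa}/\sqrt{2\omega}}_{L^2}\leq C\Lambda^2$ --- this is the origin of the $\Lambda^4$ in the exponent; (ii) terms in which $\nabla_1$ meets $\partial_t q^{\varphi_t}_1$ and $H^{eff}_1$, controlled by $\norm{\varphi_t}_{H^2}$ and by $\norm{\nabla(\kappa*\Phi)(\cdot,t)}_\infty\leq C\Lambda^2\norm{\alpha_t}_{L_1^2}$; and (iii) a coupling to the field quantities $\norm{\mathcal{N}^{1/2}\widetilde\Psi_{N,t}}$ and $\norm{H_f^{1/2}\widetilde\Psi_{N,t}}$, which must be propagated separately using the stronger regularity $(\varphi_t,\alpha_t)\in\mathcal{G}_2$ and $\Psi_{N,0}\in\mathcal{D}(H_N^2)$. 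Collecting, $\abs{\frac{d}{dt}\beta_2(t)}\leq\Lambda^4 C(t)\big(\beta_2(t)+N^{-1}\big)$ with $C(t)$ a monotone function of $\norm{\alpha_t}_{L^2}$ and $\norm{\varphi_t}_{H^1}$; Gr\"onwall then gives $\beta_2(t)+N^{-1}\leq(a_N+b_N+c_N+N^{-1})e^{\Lambda^4\int_0^t C(s)\,ds}$, and the trace-norm relation gives \eqref{eq: Nelson main theorem 3}, while $c_N=0$ for the product state gives \eqref{eq: Nelson main theorem 6}. I expect step (iii) to be the main obstacle: one must bound the coupling of $\norm{\nabla_1 q^{\varphi_t}_1\widetilde\Psi_{N,t}}^2$ to field energies that are not part of $\beta_2$, hence propagate $\norm{\mathcal{N}^{1/2}\widetilde\Psi_{N,t}}$ and $\norm{H_f^{1/2}\widetilde\Psi_{N,t}}$ a priori and $N$-uniformly, which is what forces the extra regularity and domain assumptions; by contrast the basic estimate rests only on the identity $p^{\varphi_t}_1 G_t(x_1)\,p^{\varphi_t}_1=0$, which is built into the effective equations.
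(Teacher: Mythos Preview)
Your approach is correct and is essentially the paper's own, phrased in the fluctuation picture: you work with $\widetilde\Psi_{N,t}=W^{-1}(\sqrt N\alpha_t)\Psi_{N,t}$ and the generator $\mathcal H_t$, whereas the paper works directly with $\Psi_{N,t}$ and the difference $N^{-1/2}\vph(x_1)-\vphc(x_1,t)$. The paper itself remarks that $\beta^b(t)=N^{-1}\scp{\mathcal U_N(t;0)\Psi}{\mathcal N\,\mathcal U_N(t;0)\Psi}$, so the two viewpoints are equivalent; your key cancellation $p_1^{\varphi_t}G_t(x_1)p_1^{\varphi_t}=0$ is exactly the paper's identity $p_1^{\varphi_t}\eta(x_1-y)p_1^{\varphi_t}=p_1^{\varphi_t}(\eta*|\varphi_t|^2)(y)$ used to cancel the source term of the $\alpha_t$-equation.

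The one point where your proposal diverges from the paper is your anticipated obstacle (iii) in the Sobolev estimate. There is in fact no need to propagate $\norm{H_f^{1/2}\widetilde\Psi_{N,t}}$ separately; the derivative of $\beta^c$ closes on $\beta_2$ alone. In the paper this is done by writing $N^{-1/2}\vph=\big(N^{-1/2}\vph-\vphc\big)+\vphc$ and observing that the term $\scp{N^{-1/2}\vph(x_1)\nabla_1 q_1\Psi_{N,t}}{\nabla_1 q_1\Psi_{N,t}}$ is real and hence drops out of the imaginary part, leaving only $(N^{-1/2}\nabla\vph-\nabla\vphc)$ acting on $q_1\Psi_{N,t}$ (bounded by $C\Lambda^2\sqrt{\beta^b+N^{-1}}$) and $\nabla\vphc$ (bounded by $\norm{\nabla\vphc(\cdot,t)}_\infty\leq C\Lambda^2\norm{\alpha_t}_{L^2}$). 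In your picture this is even simpler: since the classical part has already been shifted away, $N^{-1/2}\norm{\vph(x_1)\widetilde\Psi_{N,t}}\leq C\Lambda N^{-1/2}\norm{(\mathcal N+1)^{1/2}\widetilde\Psi_{N,t}}=C\Lambda\sqrt{\beta^b+N^{-1}}$ directly, and likewise for $N^{-1/2}(\nabla\vph)(x_1)$ with $\Lambda^2$ in place of $\Lambda$. So no auxiliary field-energy quantity enters, the extra regularity $(\varphi_t,\alpha_t)\in\mathcal G_2$, $\Psi_{N,0}\in\mathcal D(H_N^2)$ is used only to justify the differentiations, and $C(t)$ depends merely on $\norm{\alpha_t}_{L^2}$ and $\norm{\varphi_t}_{H^1}$ (not $\norm{\alpha_t}_{L^2_1}$ or $\norm{\varphi_t}_{H^2}$).
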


\begin{remark}
\label{remark: Nelson remark to the main theorem}
The convergence of the reduced density matrices in trace norm with rate $N^{-1}$ was already shown in~\cite{falconi} for special classes of initial states (coherent and product states).\footnote{For a precise definition we refer to~\cite[Theorem 3]{falconi}.}
Theorem~\ref{theorem: Nelson main theorem} extends this result to more general states but only with error estimates of order $N^{-1/2}$. Moreover, we present the first explicit bounds on the rate of convergence of the one-particle reduced density matrix of the charges in Sobolev norm. It seems possible to obtain the convergence rate $N^{-1}$, if one regards (similar to~\cite{picklnorm}) fluctuations around the mean-field dynamics. 
\end{remark}

\section{Comparison with the literature}

In \cite{ginibrenironivelo}, Ginibre, Nironi and Velo derived the Schr\"odinger-Klein-Gordon system of equations from the Nelson model with cutoff.  They considered a finite number of charged bosons, a coupling constant that tends to zero and a coherent state of gauge bosons whose particle number goes to infinity. The number of gauge bosons that are created during the time evolution is negligible in this case and it is possible to approximate the quantized scalar field by an external potential which evolves according to the Klein-Gordon equation without source term. Falconi~\cite{falconi} derived the Schr\"odinger-Klein-Gordon system of equations in the setting of the present paper by means of the coherent state approach. A comparison between his result and Theorem~\ref{theorem: Nelson main theorem} is given in Remark~\ref{remark: Nelson remark to the main theorem}.
Making use of a Wigner measure approach Ammari and Falconi \cite{ammarifalconi} were able to establish the classical limit (without quantitative bounds on the rate of convergence) of the renormalized Nelson model without cutoff.
Teufel \cite{teufel2} considered the adiabatic limit of the Nelson model and showed that the interaction mediated by the quantized radiation field is well approximated by a direct Coulomb interaction.
In \cite{leopold} we used the strategy of the present paper to derive the Maxwell-Schr\"odinger equations from the spinless Pauli-Fierz Hamiltonian. Here, additional technical difficulties arise from the minimal coupling term in the Pauli-Fierz Hamiltonian.

\section{Notations}
The Fourier transform of a function $f$ is denoted by $\tilde{f}$ or $\mathcal{FT}[f]$.
$H^s(\mathbb{R}^3)$  stands for the Sobolev space with norm 
$\norm{f}_{H^s(\mathbb{R}^3)} =  \big| \big|  ( 1 + \abs{\cdot}^2 )^{s/2} \mathcal{FT}[f] \big| \big|_{L^2(\mathbb{R}^3)}$ and $L^2_m(\mathbb{R}^3)$ is the weighted $L^2$ space with  
$\norm{f}_{L^2_m(\mathbb{R}^3)} =  \big| \big|  ( 1 + \abs{\cdot}^2 )^{m/2} f \big| \big|_{L^2(\mathbb{R}^3)}$. 
Moreover, we use $\norm{A}_{HS} = \sqrt{Tr A^* A}$ to denote the Hilbert-Schmidt norm.
With a slight abuse of notation we write $\Phi$ and $F$ to indicate the scalar and auxiliary field but also their respective Fourier transforms. If we use $\Phi(t)$ or $F(t)$,  we always refer to the coordinate representation of the fields.
Furthermore, we apply the shorthand notation $\vphc(x,t) \coloneqq \left( \kappa * \Phi \right)(x,t)$.  \\

\section{The strategy}
We are interested in the evolution of product states of the form  \eqref{eq: Nelson initial product state} under the dynamics~\eqref{eq: Nelson Schroedinger equation microscopic}. The scalar field in the Nelson Hamiltonian establishes an interaction between the charges and the field modes with wave vectors smaller than $\Lambda$.\footnote{One should note that the high frequency modes of the radiation field do not interact with the non-relativistic particles and evolve according to the free dynamics.}
This changes the state of the charges, leads to the creation and annihilation of gauge bosons and causes initially factorized states to build correlations between the charges, the gauge bosons as well as among charges and gauge bosons.
To study the emergence of these correlations we combine the  "method of counting", introduced in \cite{pickl1}, with ideas from \cite{vytas} and \cite{falconi}. The result can be seen as a fusion of the "method of counting" and the coherent state approach, as used for instance in \cite{falconi, rodnianskischlein}. The key idea is to prove condensation not in terms of reduced density matrices but to consider a different indicator of condensation. To study the correlations between the charges we introduce a functional $\beta^a$, which counts the relative number of particles that are not in the state of the condensate wave function $\varphi_t$.
\begin{definition}
For any $N \in \mathbb{N}$, $\varphi_t \in L^2(\mathbb{R}^3)$ with $\norm{\varphi_t}=1$ and $1 \leq j \leq N$ we define the time-dependent projectors
$p_j^{\varphi_t}: L^2(\mathbb{R}^{3N}) \rightarrow  L^2(\mathbb{R}^{3N})$ and
$q_j^{\varphi_t}: L^2(\mathbb{R}^{3N}) \rightarrow  L^2(\mathbb{R}^{3N})$  by
\begin{align}
p_j^{\varphi_t} f(x_1, \ldots, x_N)
\coloneqq \varphi_t(x_j) \int d^3x_j \, \varphi_t^*(x_j) f(x_1, \ldots, x_N) 
\quad \text{for all} \; f \in L^2(\mathbb{R}^{3N})
\end{align}
and $q_j^{\varphi_t} \coloneqq 1 - p_j^{\varphi_t}$.\footnote{Ocassionally, we use the bra-ket notation $p_j^{\varphi_t} = \ket{\varphi_t(x_j)} \bra{\varphi_t(x_j)} = \ket{\varphi_t} \bra{\varphi_t}_j$.}
Let $\Psi_{N,t} \in \mathcal{H}^{(N)}$. Then $\beta^a: \mathcal{H}^{(N)} \times L^2(\mathbb{R}^3) \rightarrow \mathbb{R}_0^+$ is given by
\begin{align}
\beta^a\left( \Psi_{N,t},\varphi_t \right) \coloneqq \SCP{\Psi_{N,t}}{q_1^{\varphi_t} \otimes \id_{\mathcal{F}} \, \Psi_{N,t}}.
\end{align}
\end{definition}

\begin{remark} 
The functional $\beta^a$ was already used in
\cite{pickl1,pickl2,picklgp3d,pickl4,knowles, michelangeli2,  picklnorm, anapolitanosmott}
and others to derive the Hartree and Gross-Pitaevskii equation.
\end{remark}

\noindent
The situation is slightly different in the radiation sector
 because the number of gauge bosons is not preserved during the time evolution. Moreover, it is known from physics literature~\cite[Chapter III.C.4]{cohen} that
the radiation field must be in a coherent state with a high occupation number of gauge bosons to behave classically. This is a state not only with little correlations but also a Poisson distributed number of gauge bosons. In order to investigate if the state of the radiation field is coherent we define a functional, referred to as $\beta^b$, which measures the fluctuations of the field modes around the classical mode function $\alpha_t$ for each time.
\begin{definition}
Let $\alpha_t \in L^2(\mathbb{R}^3)$ and $\Psi_{N,t} \in \mathcal{H}^{(N)} \cap \mathcal{D}\left(\mathcal{N} \right)$. Then $\beta^b: \mathcal{H}^{(N)} \cap \mathcal{D}\left(\mathcal{N} \right) \times L^2(\mathbb{R}^3) \rightarrow \mathbb{R}_0^+$ is given by
\begin{align}
\beta^b\left( \Psi_{N,t}, \alpha_t \right) 
\coloneqq \int d^3k \, \SCP{\left( \frac{a(k)}{\sqrt{N}} - \alpha_t(k) \right) \Psi_{N,t}}{ \left( \frac{a(k)}{\sqrt{N}} - \alpha_t(k) \right) \Psi_{N,t}}.
\end{align} 
\end{definition}

\begin{remark}
Let $\alpha_0 \in L^2(\mathbb{R}^3)$ and $\Psi_{N,0}= W(\sqrt{N} \alpha_0) \Psi$ for some $\Psi\in \mathcal{H}^{(N)} \cap\mathcal{D}\left(\mathcal{N} \right)$. Then, the functional $\beta^b$ can be written as
\begin{align}
\label{eq: Nelson fluctuation dynamics and beta-b}
\beta^b\left(\Psi_{N,t},\alpha_t \right) = N^{-1} \SCP{\mathcal{U}_N(t;0) \Psi}{\mathcal{N} \mathcal{U}_N(t;0) \Psi},
\end{align}
where $\mathcal{U}_N(t;0) = W^*(\sqrt{N} \alpha_t) e^{- i H_Nt} W(\sqrt{N} \alpha_0)$ denotes the fluctuation dynamics of the coherent state approach (as used for example in \cite[p.18]{schleinbook}).\footnote{This is a simple consequence of $W(\sqrt{N} \alpha_t)$ being unitary and $W^*(\sqrt{N} \alpha_t) a(k) = a(k) W^*(\sqrt{N} \alpha_t) + \sqrt{N} W^*(\sqrt{N} \alpha_t) \alpha_t(k)$, see \eqref{eq: Nelson relation between beta-b and fluctuation dynamics}.} Thus, $\beta^b$ measures the number of gauge boson fluctuations around the effective evolution.
\end{remark}
\begin{remark}
It seems that $\beta^a$ is the natural quantity to consider for condensates with fixed particle number. The functional $\beta^b$, which usually arises in the coherent state approach as used in~\cite{rodnianskischlein, falconi,schleinbook} and others, is perfectly suited to keep track if the state of the radiation field remains coherent.
\end{remark}
\noindent
 Finally, the counting functional is defined by
\begin{definition}
Let $N \in \mathbb{N}$, $\varphi_t \in L^2(\mathbb{R}^3)$ with $\norm{\varphi_t}=1$, $\alpha_t \in L^2(\mathbb{R}^3)$ and $\Psi_{N,t} \in \mathcal{H}^{(N)} \cap \mathcal{D}\left(\mathcal{N} \right)$. Then $\beta: \mathcal{H}^{(N)} \cap \mathcal{D}\left(\mathcal{N} \right) \times L^2(\mathbb{R}^3) \times L^2(\mathbb{R}^3) \rightarrow \mathbb{R}_0^+$ is 
defined by\footnote{We sometimes apply the shorthand notation $\beta(t) = \beta(\Psi_{N,t},\varphi_t,\alpha_t)$.}
\begin{align}
\beta \left( \Psi_{N,t}, \varphi_t, \alpha_t \right) 
\coloneqq  \beta^a\left( \Psi_{N,t},\varphi_t \right)
+  \beta^b\left( \Psi_{N,t}, \alpha_t \right).
\end{align}
\end{definition}

\noindent
In summary, the functional has the following properties:
\begin{itemize}
\item[(i)] $\beta^a$ measures if the non-relativistic particles exhibit condensation.
\item[(ii)] $\beta^b$ examines whether the radiation field is in a coherent state.
\item[(iii)] $\beta \left(\Psi_{N,t},\varphi_t,\alpha_t \right)  \rightarrow 0$ as $N \rightarrow \infty$ implies condensation in terms of reduced density matrices (Lemma~\ref{lemma: Nelson relation between beta and reduced density matrices}).
\item[(iv)] $\beta \left(\Psi_{N,t},\varphi_t,\alpha_t \right) = 0$ if $\Psi_{N,t} = \varphi_t^{\otimes N} \otimes W(\sqrt{N} \alpha_t) \Omega$ (see Lemma~\ref{lemma: Nelson coherent states as initial states}).

\end{itemize}
In order to show that the product structure \eqref{eq: Nelson initial product state} is preserved during the time evolution we apply the following strategy
\begin{enumerate}
\item We choose initial states $\varphi_0, \alpha_0$ and $\Psi_{N,0}$ such that $\beta\left( \Psi_{N,0}, \varphi_0, \alpha_0 \right) \leq a_N + b_N  \rightarrow 0$ as $N \rightarrow \infty$.
\item For each $t \in \mathbb{R}_0^+$ we estimate $\abs{d_t \beta \left( \Psi_{N,t}, \varphi_t, \alpha_t \right) } \leq C \Lambda^2 \left( \beta \left( \Psi_{N,t}, \varphi_t, \alpha_t \right)  + N^{-1} \right)$ for some $C \in \mathbb{R}_0^+$. Then, Gr\"onwall's Lemma establishes the bound
$\beta \left( \Psi_{N,t}, \varphi_t, \alpha_t \right) 
\leq e^{C \Lambda^2 t} \left( \beta\left( \Psi_{N,0}, \varphi_0, \alpha_0 \right) + N^{-1} \right)$.
\item By means of property (iii) we conclude  condensation in terms of reduced density matrices.
\end{enumerate}

\noindent
To show the convergence of $\gamma_{N,t}^{(1,0)}$ to the projector onto the condensate wave function in Sobolev norm we include $\beta^c(\Psi_{N,t},\varphi_t) \coloneqq \norm{\nabla_1 q^{\varphi_t}_1 \Psi_{N,t}}^2$ in the definition of the functional. This allows us to control the kinetic energy of the non-relativistic particles which are not in the condensate.
\begin{definition}
Let $N \in \mathbb{N}$, $\varphi_t \in H^2(\mathbb{R}^3)$ with $\norm{\varphi_t}=1$, $\alpha_t \in L^2(\mathbb{R}^3)$ and $\Psi_{N,t} \in \mathcal{D}(H_N) \cap \mathcal{D}\left(\mathcal{N} \right)$. Then $\beta_2: \mathcal{D}(H_N) \cap \mathcal{D}\left(\mathcal{N} \right) \times H^2(\mathbb{R}^3) \times L^2(\mathbb{R}^3) \rightarrow \mathbb{R}_0^+$ is defined by
\begin{align}
\beta_2 \left( \Psi_{N,t}, \varphi_t, \alpha_t \right) 
&\coloneqq  \beta \left( \Psi_{N,t}, \varphi_t, \alpha_t \right)  + \beta^c\left(\Psi_{N,t},\varphi_t \right)
\nonumber \\
&\, =
\beta \left( \Psi_{N,t}, \varphi_t, \alpha_t \right) 
+  \norm{\nabla_1 q_1^{\varphi_t} \Psi_{N,t}}^2.
\end{align}
\end{definition}
\noindent
We would like to remark, that the ultraviolet cutoff~\eqref{eq: Nelson cut off function} is essential for the proof because:
\begin{enumerate}
\item The finiteness of $\norm{\eta}_2$ (see \eqref{eq: Nelson cutoff functions norm}) is needed to establish a connection between the difference of the radiation fields and the functional $\beta^b$ by means of the auxiliary fields~\eqref{eq: Nelson auxilliary fields}.

\item  The cutoff $\Lambda$ imposes regularity on the radiation fields which will be used to estimate the time derivative of $\norm{\nabla_1 q_1 \Psi_{N,t}}^2$. In spirit, this is opposite to the usual treatment of the polaron~\cite{liebthomas}, where regularity of the electron state is used to obtain a sufficient decay in the field modes with large wave vectors.
\end{enumerate}

\section{Relation to reduced density matrices}
\label{sec: Nelson Relation to reduced density matrices}

In this section, we relate the functional $\beta$ to the  trace norm distance of the one-particle reduced density matrices.

\begin{lemma}
\label{lemma: Nelson relation between beta and reduced density matrices}
Let $N \in \mathbb{N}$, $\varphi_t \in L^2(\mathbb{R}^3)$ with $\norm{\varphi_t}=1$, $\alpha_t \in L^2(\mathbb{R}^3)$ and $\Psi_{N,t} \in \mathcal{H}^{(N)} \cap \mathcal{D}\left(\mathcal{N} \right)$. Then,
\begin{align}
\label{eq: Nelson relation between beta and reduced density matrices 1}
\beta^a(\Psi_{N,t},\varphi_t) &\leq  \text{Tr}_{L^2(\mathbb{R}^3)} \abs{\gamma_{N,t}^{(1,0)} - \ket{\varphi_t} \bra{\varphi_t}} \leq  \sqrt{8  \beta^a(\Psi_{N,t},\varphi_t)}  ,  \\
\label{eq: Nelson relation between beta and reduced density matrices 2}
\text{Tr}_{L^2(\mathbb{R}^3)} \abs{\gamma_{N,t}^{(0,1)} - \ket{\alpha_t}\bra{\alpha_t}}  &\leq  3 \beta^b(\Psi_{N,t},\alpha_t) + 6 \norm{\alpha_t}_{L^2(\mathbb{R}^3)} \sqrt{\beta^b(\Psi_{N,t},\alpha_t)}. 
\end{align}
For $\varphi_t \in H^2(\mathbb{R}^3)$ with $\norm{\varphi_t}=1$ and $\Psi_{N,t} \in \mathcal{H}^{(N)} \cap \mathcal{D}(H_N)$, we have
\begin{align}
\label{eq: Nelson relation between beta and reduced density matrices 4}
&\text{Tr}_{L^2(\mathbb{R}^3)} \abs{ \sqrt{1- \Delta} \left(\gamma_{N,t}^{(1,0)} - \ket{\varphi_t} \bra{\varphi_t} \right) \sqrt{1 - \Delta}}  \leq
\big(1 + \norm{\varphi_t}_{H^1(\mathbb{R}^3)}^2 \big)  \times   
 \nonumber\\
&\times \left( \beta^a(\Psi_{N,t},\varphi_t) + \beta^c(\Psi_{N,t},\varphi_t)   \right) + 2 \norm{\varphi_t}_{H^1(\mathbb{R}^3)} \sqrt{\beta^a(\Psi_{N,t},\varphi_t) + \beta^c(\Psi_{N,t},\varphi_t)}.
\end{align}
\end{lemma}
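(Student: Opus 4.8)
The plan is to prove the three inequalities separately, treating the first two as standard facts from the method of counting and the coherent state approach and then building the Sobolev-norm bound on top of the first.

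For the two-sided estimate on $\gamma_{N,t}^{(1,0)}$, I would first observe the identity $\beta^a(\Psi_{N,t},\varphi_t) = \langle \Psi_{N,t}, q_1^{\varphi_t}\Psi_{N,t}\rangle = \text{Tr}_{L^2}(q^{\varphi_t}\gamma_{N,t}^{(1,0)}) = 1 - \langle\varphi_t,\gamma_{N,t}^{(1,0)}\varphi_t\rangle$, using permutation symmetry of $\Psi_{N,t}$ and the definition of the partial trace. The lower bound then follows at once, since $\text{Tr}\,|\gamma_{N,t}^{(1,0)}-\ket{\varphi_t}\bra{\varphi_t}| \geq \langle\varphi_t,(\ket{\varphi_t}\bra{\varphi_t}-\gamma_{N,t}^{(1,0)})\varphi_t\rangle = \beta^a$. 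For the upper bound I would use the standard argument (as in \cite{pickl1, knowles, rodnianskischlein}) that for a rank-one projection $P=\ket{\varphi_t}\bra{\varphi_t}$ and a density matrix $\gamma$ with $0\le\gamma\le1$, $\text{Tr}|\gamma-P|\le 2\|\gamma-P\|_{\HS}\le 2\sqrt{2\,\text{Tr}((1-P)\gamma)} = 2\sqrt{2\beta^a}$; the Hilbert–Schmidt bound comes from writing $\gamma-P$ in the $2\times2$ block decomposition induced by $P$ and $1-P$ and noting $\text{Tr}((\gamma-P)^2)\le 2\,\text{Tr}((1-P)\gamma(1-P)) + $ (off-diagonal), all controlled by $\text{Tr}((1-P)\gamma)=\beta^a$ via Cauchy–Schwarz.

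For the bound on $\gamma_{N,t}^{(0,1)}$, the natural approach is to write the difference of kernels as
\[
\gamma_{N,t}^{(0,1)}(k,k') - \alpha_t(k)\overline{\alpha_t(k')} = \big\langle \big(\tfrac{a(k')}{\sqrt N}-\alpha_t(k')\big)\Psi, \big(\tfrac{a(k)}{\sqrt N}-\alpha_t(k)\big)\Psi\big\rangle + \alpha_t(k)\big\langle\big(\tfrac{a(k')}{\sqrt N}-\alpha_t(k')\big)\Psi,\Psi\big\rangle + \overline{\alpha_t(k')}\big\langle\Psi,\big(\tfrac{a(k)}{\sqrt N}-\alpha_t(k)\big)\Psi\big\rangle,
\]
i.e. split $\gamma^{(0,1)} = R + |\alpha_t\rangle\langle v| + |v\rangle\langle\alpha_t|$ where $R\ge0$ has trace $\beta^b$ and $\|v\|\le\sqrt{\beta^b}$ by Cauchy–Schwarz (since $\|v\|^2 = \|(\mathcal N^{1/2}/\sqrt N)$-type fluctuation$\|^2 \le \beta^b$ after integrating the defining expression against $\Psi$). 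Then the triangle inequality for the trace norm gives $\text{Tr}|\gamma^{(0,1)}-\ket{\alpha_t}\bra{\alpha_t}| \le \text{Tr}\,R + 2\|\,|\alpha_t\rangle\langle v|\,\|_1 \le \beta^b + 2\|\alpha_t\|\sqrt{\beta^b}$; tightening the bookkeeping of the cross terms (each rank-one piece has trace norm $\|\alpha_t\|\|v\|$, and one also picks up a contribution from re-expanding $\ket{\alpha_t}\bra{\alpha_t}$ inside $R$) yields the stated constants $3$ and $6$.

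For the Sobolev estimate \eqref{eq: Nelson relation between beta and reduced density matrices 4}, I would conjugate by $\sqrt{1-\Delta}$ and reduce to controlling $\text{Tr}\,|\,(1-\Delta)^{1/2}(\gamma_{N,t}^{(1,0)}-P)(1-\Delta)^{1/2}|$ where $P=\ket{\varphi_t}\bra{\varphi_t}$. Insert $1 = p_1^{\varphi_t}+q_1^{\varphi_t}$ on both sides of $\gamma_{N,t}^{(1,0)}-P = \gamma_{N,t}^{(1,0)} - p_1^{\varphi_t}\gamma_{N,t}^{(1,0)}p_1^{\varphi_t} \cdot(\dots)$; the point is that $p_1\gamma^{(1,0)}p_1 = (1-\beta^a)P$, so the $pp$-block differs from $P$ only by $\beta^a P$, and the weight $\sqrt{1-\Delta}$ acting on $\varphi_t$ produces the factor $\|\varphi_t\|_{H^1}$. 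The $qq$-block is $q_1\gamma^{(1,0)}q_1$, and $\text{Tr}\,((1-\Delta)^{1/2}q_1\gamma^{(1,0)}q_1(1-\Delta)^{1/2}) = \|(1-\Delta)^{1/2}q_1^{\varphi_t}\Psi_{N,t}\|^2 = \beta^a + \beta^c$ by the definition of $\beta^c=\|\nabla_1 q_1^{\varphi_t}\Psi_{N,t}\|^2$ (using $q_1$ commutes with $\Delta_1$ only after noting $\|(1-\Delta_1)^{1/2}q_1\Psi\|^2 = \|q_1\Psi\|^2+\|\nabla_1 q_1\Psi\|^2$). The two off-diagonal $pq$ and $qp$ blocks are rank-one-in-the-$p$-slot, with trace norm bounded by $\|(1-\Delta)^{1/2}\varphi_t\|\cdot\|(1-\Delta)^{1/2}q_1\Psi_{N,t}\| \le \|\varphi_t\|_{H^1}\sqrt{\beta^a+\beta^c}$, by Cauchy–Schwarz; the factor $2$ accounts for the two of them. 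Collecting the $pp$, $qq$ and cross contributions gives $(1+\|\varphi_t\|_{H^1}^2)(\beta^a+\beta^c) + 2\|\varphi_t\|_{H^1}\sqrt{\beta^a+\beta^c}$.

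I expect the main obstacle to be the careful bookkeeping in the last estimate: one must be attentive that $q_1^{\varphi_t}$ does not commute with $p_1^{\varphi_t}$-weighted insertions once $\sqrt{1-\Delta}$ is present, that the cross terms are genuinely rank one in the condensate direction (so their trace norm is just the product of the two vector norms rather than requiring a further Hilbert–Schmidt detour), and that the identity $\text{Tr}((1-\Delta)^{1/2}q_1\gamma^{(1,0)}q_1(1-\Delta)^{1/2}) = \beta^a+\beta^c$ holds exactly — this is where the specific choice to include $\beta^c = \|\nabla_1 q_1^{\varphi_t}\Psi_{N,t}\|^2$ in the functional pays off. The other two inequalities are routine once the block decomposition is set up.
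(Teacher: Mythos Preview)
Your proposal is essentially correct and close in spirit to the paper's proof, but there are two differences worth noting.

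For \eqref{eq: Nelson relation between beta and reduced density matrices 2} your argument is actually cleaner and sharper than the paper's. The paper does not use your direct triangle-inequality splitting $\gamma^{(0,1)}-\ket{\alpha_t}\bra{\alpha_t}=R+\ket{\alpha_t}\bra{v}+\ket{v}\bra{\alpha_t}$; instead it bounds the trace norm via $\tr|A|\le 2\norm{A}_{\HS}+\tr A$ (the same one-negative-eigenvalue trick used for \eqref{eq: Nelson relation between beta and reduced density matrices 1}), estimates $\norm{\gamma^{(0,1)}-\ket{\alpha_t}\bra{\alpha_t}}_{\HS}^2\le(\beta^b)^2+2\norm{\alpha_t}^2\beta^b$ and $\tr(\gamma^{(0,1)}-\ket{\alpha_t}\bra{\alpha_t})\le\beta^b+2\norm{\alpha_t}\sqrt{\beta^b}$ separately, and combines them. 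Your route gives $\beta^b+2\norm{\alpha_t}\sqrt{\beta^b}$ directly, so your hedge about ``tightening the bookkeeping'' to reach the constants $3$ and $6$ is unnecessary: your constants are already better.

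For \eqref{eq: Nelson relation between beta and reduced density matrices 4} the paper proceeds by duality, writing $\tr|B|=\sup_{\norm{A_1}_{\op}\le1}|\tr(A_1 B)|$ and then inserting $p_1^{\varphi_t}+q_1^{\varphi_t}$ on both sides \emph{inside the $\mathcal{H}^{(N)}$ inner product} $\SCP{\Psi_{N,t}}{\sqrt{1-\Delta_1}A_1\sqrt{1-\Delta_1}\Psi_{N,t}}$. This makes the cross-term bound a direct Cauchy--Schwarz in $\mathcal{H}^{(N)}$: $|\SCP{q_1\Psi}{\sqrt{1-\Delta_1}A_1\sqrt{1-\Delta_1}p_1\Psi}|\le\norm{\sqrt{1-\Delta_1}q_1\Psi}\,\norm{A_1}_{\op}\,\norm{\sqrt{1-\Delta_1}p_1}_{\op}$. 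Your direct block approach reaches the same conclusion, but your claim that the cross block has trace norm $\norm{(1-\Delta)^{1/2}\varphi_t}\cdot\norm{(1-\Delta)^{1/2}q_1\Psi_{N,t}}$ mixes a one-particle norm with a many-body norm and is not literally the product of the two vectors in the rank-one operator $\sqrt{1-\Delta}\,p\,\gamma^{(1,0)}q\,\sqrt{1-\Delta}=\ket{\sqrt{1-\Delta}\varphi_t}\bra{\sqrt{1-\Delta}\,q\,\gamma^{(1,0)}\varphi_t}$. To justify your bound you either need that partial trace contracts the trace norm (so that $\norm{\sqrt{1-\Delta}\,p\,\gamma^{(1,0)}q\,\sqrt{1-\Delta}}_1\le\norm{\sqrt{1-\Delta_1}p_1\ket{\Psi}\bra{\Psi}q_1\sqrt{1-\Delta_1}}_1$, which is rank one in $\mathcal{H}^{(N)}$), or the duality argument the paper uses. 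Either fix is short; just be aware that ``Cauchy--Schwarz'' alone does not bridge the two Hilbert spaces.
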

\begin{proof}
The lower bound of~\eqref{eq: Nelson relation between beta and reduced density matrices 1} is proven by
\begin{align}
\beta^a(t) &= 1 - \SCP{\Psi_{N,t}}{p_1^{\varphi_t} \Psi_{N,t}}  =
1 - \scp{\varphi_t}{\gamma_{N,t}^{(1,0)} \varphi_t}
= \text{Tr}_{L^2(\mathbb{R}^3)} ( \ket{\varphi_t} \bra{\varphi_t} - \ket{\varphi_t} \bra{\varphi_t} \gamma_{N,t}^{(1,0)})  
\nonumber \\
&\leq \norm{p_1}_{\op}  \text{Tr}_{L^2(\mathbb{R}^3)} \abs{\gamma_{N,t}^{(1,0)} - \ket{\varphi_t} \bra{\varphi_t}}
=  \text{Tr}_{L^2(\mathbb{R}^3)} \abs{\gamma_{N,t}^{(1,0)} - \ket{\varphi_t} \bra{\varphi_t}}.
\end{align}
To obtain the upper bound we use that
\begin{align}
\label{eq: Nelson bound mit Hilbert Schmidt norm}
\text{Tr} \abs{\gamma - p} \leq 2 \norm{\gamma - p}_{HS}
+ \text{Tr} (\gamma - p)
\end{align}
is valid for any one-dimensional projector $p$ and non-negative density matrix $\gamma$. The original argument of the proof  was first observed by Robert Seiringer, see \cite{rodnianskischlein}. We present a version that is found in \cite{anapolitanosmott}: 
Let $(\lambda_n)_{n \in \mathbb{N}}$ be the sequence of eigenvalues of the trace class operator $A:= \gamma - p$.
Since $p$ is a rank one projection, $A$ has at most one negative eigenvalue.
If there is no negative eigenvalue, $\text{Tr} \abs{A} = \text{Tr} (A)$ and \eqref{eq: Nelson bound mit Hilbert Schmidt norm} holds. If there is one negative eigenvalue $\lambda_1$, we have
$\text{Tr} \abs{A} = \abs{\lambda_1} + \sum_{n \geq 2} \lambda_n = 2 \abs{\lambda_1} + \text{Tr}(A). $
Inequality~\eqref{eq: Nelson bound mit Hilbert Schmidt norm} then follows from $\abs{\lambda_1} \leq \norm{A}_{\op} \leq \norm{A}_{HS}.$\\
This shows
\begin{align}
\text{Tr}_{L^2(\mathbb{R}^3)} \abs{\gamma_{N,t}^{(1,0)} - \ket{\varphi_t} \bra{\varphi_t}} \leq 2 \norm{\gamma_{N,t}^{(1,0)} - \ket{\varphi_t} \bra{\varphi_t}}_{HS}
\end{align}
because $\text{Tr}_{L^2(\mathbb{R}^3)} (\gamma_{N,t}^{(1,0)} - \ket{\varphi_t} \bra{\varphi_t}) = 0$. The upper bound of~\eqref{eq: Nelson relation between beta and reduced density matrices 1} is obtained by
\begin{align}
\text{Tr}_{L^2(\mathbb{R}^3)} ( \gamma_{N,t}^{(1,0)} - \ket{\varphi_t} \bra{\varphi_t} )^2
=& 1 - 2  \text{Tr}_{L^2(\mathbb{R}^3)}  (\ket{\varphi_t} \bra{\varphi_t} \gamma_{N,t}^{(1,0)} ) +  \text{Tr}_{L^2(\mathbb{R}^3)} ((\gamma_{N,t}^{(1,0)})^2)
\nonumber \\
\leq& 2 (1 - \text{Tr}_{L^2(\mathbb{R}^3)}  (\ket{\varphi_t} \bra{\varphi_t} \gamma_{N,t}^{(1,0)} ))
= 2 \beta^a(t).
\end{align}
To prove~\eqref{eq: Nelson relation between beta and reduced density matrices 2} it is useful to write the kernel of 
$\gamma_{N,t}^{(0,1)} - \ket{\alpha_t}\bra{\alpha_t}$ as
\begin{align}
(\gamma_{N,t}^{(0,1)} - \ket{\alpha_t}\bra{\alpha_t})(k,l)  
&=   N^{-1} \SCP{\Psi_N}{a^*(l) a(k) \Psi_N} - \alpha_t^*(l) \alpha_t(k) 
\nonumber \\
&=  \SCP{\left(  N^{-1/2} a(l) - \alpha_t(l) \right)\Psi_N}{\left(  N^{-1/2}  a(k) - \alpha_t(k) \right)\Psi_N} 
\nonumber \\
&+  \alpha_t(k)  \SCP{\left(  N^{-1/2} a(l) - \alpha_t(l) \right)\Psi_N}{\Psi_N}  
\nonumber\\
&+   \alpha_t^*(l) \SCP{\Psi_N}{\left(  N^{-1/2} a(k) - \alpha_t(k) \right)\Psi_N}.
\end{align}
By means of Schwarz's inequality we have
\begin{align}
\abs{(\gamma_{N,t}^{(0,1)} - \ket{\alpha_t}\bra{\alpha_t})(k,l)}^2 
&\leq \norm{ \left(  N^{-1/2}  a(k) - \alpha_t(k) \right) \Psi_{N}}^2 
\norm{\left(  N^{-1/2} a(l) - \alpha_t(l) \right) \Psi_{N}}^2 
\nonumber \\
&+ \abs{\alpha_t(l)}^2 \norm{\left(  N^{-1/2}  a(k) - \alpha_t(k) \right) \Psi_{N}}^2
\nonumber \\
&+  \abs{\alpha_t(k)}^2 \norm{\left(  N^{-1/2} a(l) - \alpha_t(l) \right) \Psi_{N}}^2 
\end{align}
and 
\begin{align}
\norm{\gamma_{N,t}^{(0,1)} - \ket{\alpha_t}\bra{\alpha_t}}_{HS}^2 &=  
\int d^3k \, \int d^3l \, \abs{(\gamma_{N,t}^{(0,1)} - \ket{\alpha_t}\bra{\alpha_t})(k,l)}^2
\nonumber \\
&\leq (\beta^b(t))^2 + 2 \norm{\alpha_t}_{L^2(\mathbb{R}^3)}^2 \beta^b(t).
\end{align}
Similarly, one obtains
\begin{align}
\text{Tr}_{L^2(\mathbb{R}^3)} (\gamma_{N,t}^{(0,1)} - \ket{\alpha_t}\bra{\alpha_t})
&\leq  \; \; \,  \int d^3k  \,
\abs{ (\gamma_{N,t}^{(0,1)} - \ket{\alpha_t}\bra{\alpha_t})(k,k)}
 \nonumber \\
&\leq \; \; \,   \int d^3k \,  \norm{\left( N^{-1/2} a(k) - \alpha_t(k) \right) \Psi_N}_{\mathcal{H}^{(N)}}^2
\nonumber \\
&+ 2  \int d^3k \, \abs{\alpha_t(k)}  \norm{\left( N^{-1/2} a(k) - \alpha_t(k) \right) \Psi_N}_{\mathcal{H}^{(N)}}.
\end{align}
Applying Schwarz's inequality in the second line leads to
\begin{align}
&\text{Tr}_{L^2(\mathbb{R}^3)} (\gamma_{N,t}^{(0,1)} - \ket{\alpha_t}\bra{\alpha_t}) 
\leq  \beta^b(t) + 2 \norm{\alpha_t}_{L^2(\mathbb{R}^3)} \sqrt{\beta^b(t)}.
\end{align}
Inequality~\eqref{eq: Nelson relation between beta and reduced density matrices 2} follows from the monotonicity of the square root and \eqref{eq: Nelson bound mit Hilbert Schmidt norm}.
The estimate~\eqref{eq: Nelson relation between beta and reduced density matrices 4} originates from~\cite{picklnorm}.  One starts with the relation
\begin{align}
Tr_{L^2(\mathbb{R}^3)} &\abs{\sqrt{1 - \Delta} ( \gamma_{N,t}^{(1,0)} - \ket{\varphi_t} \bra{\varphi_t} ) \sqrt{1 -\Delta}} 
\nonumber \\
&= 
\sup_{\norm{A_1} \leq 1} \abs{Tr_{L^2(\mathbb{R}^3)} ( A_1 \sqrt{1- \Delta}  (  \gamma_{N,t}^{(1,0)} - \ket{\varphi_t} \bra{\varphi_t} ) \sqrt{1 - \Delta} )},
\end{align}
where the supremum is applied to all compact operators $A_1$ on $L^2(\mathbb{R}^3)$ with norm smaller or equal to one. Then, one continues with
\begin{align}
&Tr_{L^2(\mathbb{R}^3)} (A_1 \sqrt{1 - \Delta_1} (\gamma_{N,t}^{(1,0)} - \ket{\varphi_t} \bra{\varphi_t}) \sqrt{1 - \Delta_1})  
 \\
\label{eq: Nelson energy trace 1}
&= \SCP{\Psi_N}{p^{\varphi_t}_1 \sqrt{1- \Delta_1} A_1 \sqrt{1 - \Delta_1} p^{\varphi_t}_1 \Psi_N}
- \scp{\varphi_t}{\sqrt{1 - \Delta_1} A_1 \sqrt{1 - \Delta_1} \varphi_t} 
\\
\label{eq: Nelson energy trace 2}
&+  \SCP{\Psi_N}{q^{\varphi_t}_1 \sqrt{1- \Delta_1} A_1 \sqrt{1 - \Delta_1} p^{\varphi_t}_1 \Psi_N}
+ \SCP{\Psi_N}{p^{\varphi_t}_1 \sqrt{1- \Delta_1} A_1 \sqrt{1 - \Delta_1} q^{\varphi_t}_1 \Psi_N} 
\\
\label{eq: Nelson energy trace 3}
&+ \SCP{\Psi_N}{q^{\varphi_t}_1 \sqrt{1- \Delta_1} A_1 \sqrt{1 - \Delta_1} q^{\varphi_t}_1 \Psi_N}.
\end{align}
By means of
\begin{align}
\norm{\sqrt{1 - \Delta_1} q^{\varphi_t}_1 \Psi_N}^2 
=& \norm{q^{\varphi_t}_1 \Psi_N}^2 + \norm{\nabla_1 q^{\varphi_t}_1 \Psi_N}^2 
\leq \beta^a(t) +  \beta^c(t)
\end{align}
and
\begin{align}
\norm{\sqrt{1 - \Delta_1} p^{\varphi_t}_1}_{op}^2 
\leq& \scp{\varphi_t}{\left( 1 - \Delta_1 \right) \varphi_t}
= \norm{\varphi_t}_{H^1(\mathbb{R}^3)}^2
\end{align}
we estimate
\begin{align}\abs{\eqref{eq: Nelson energy trace 1}}
\leq& \abs{\scp{\varphi_t}{\sqrt{1 - \Delta} A_1 \sqrt{1 - \Delta} \varphi_t}} \abs{\SCP{\Psi_N}{p^{\varphi_t}_1 \Psi_N} -1}
\leq \norm{A_1}_{op} \norm{\varphi_t}^2_{H^1(\mathbb{R}^3)} \beta^a(t), 
\nonumber \\
\abs{\eqref{eq: Nelson energy trace 2}}
\leq& 2   \norm{A_1}_{op} \norm{\varphi_t}_{H^1(\mathbb{R}^3)} \sqrt{\beta^a(t) + \beta^c(t)}, 
\nonumber \\
\abs{\eqref{eq: Nelson energy trace 3}}
\leq& \norm{A_1}_{op} \left( \beta^a(t) +\beta^c(t) \right).
\end{align}
This leads to 
 \begin{align}
\text{Tr}_{L^2(\mathbb{R}^3)} \abs{ \sqrt{1- \Delta} \left(\gamma_{N,t}^{(1,0)} - \ket{\varphi_t} \bra{\varphi_t} \right) \sqrt{1 - \Delta}}  \leq&
\left(1 + \norm{\varphi_t}_{H^1(\mathbb{R}^3)}^2 \right)   \left( \beta^a(t) + \beta^c(t)  \right)   
\nonumber \\
+& 2 \norm{\varphi_t}_{H^1(\mathbb{R}^3)} \sqrt{\beta^a(t) + \beta^c(t)}.
 \end{align}
\end{proof}

\section{Estimates on the time derivative}
\label{sec: Estimates on the time derivative}

\subsection{Preliminary estimates}

In the following, we control the change of $\beta$ in time by separately estimating the time derivative of $\beta^a$ and $\beta^b$. On the one hand a change in $\beta^a$ is caused by the fraction of particles which are not in the condensate state $\varphi_t$. This behavior is analogous to the growth of diseases, where the infection rate of cells (or particles that will leave the condensate) at a given time is proportional to the number of already infected cells. On the other hand there will be a change due to the fact that the particles of the many-body system couple to the quantized radiation field, whereas the condensate wave function is in interaction with the classical field. 
To control the difference between the quantized and classical field by the functional $\beta^b$ we will have to split the radiation fields in their positive and negative frequency parts.
\begin{align}
\label{eq: Nelson definition positive and negative frequency part}
\vphp(x) &\coloneqq \int d^3k \, \frac{\tilde{\kappa}(k)}{\sqrt{2 \omega(k)}} e^{ikx} a(k) ,  \quad \; \; 
\vphm(x)  \coloneqq \int d^3k \, \frac{\tilde{\kappa}(k)}{\sqrt{2 \omega(k)}} e^{-ikx} a^*(k) ,  
\nonumber\\
\vphpc(x,t) &\coloneqq \int d^3k \, \frac{\tilde{\kappa}(k)}{\sqrt{2 \omega(k)}} e^{ikx} \alpha_t(k) ,  \;
\vphmc(x,t) \coloneqq \int d^3k \, \frac{\tilde{\kappa}(k)}{\sqrt{2 \omega(k)}} e^{-ikx} \alpha_t^*(k) .  
\end{align}
For technical reason it is then helpful to introduce the following (less singular) auxiliary fields
\begin{align}
\label{eq: Nelson auxilliary fields}
\Fp(x) &\coloneqq \int d^3k \, \tilde{\kappa}(k) e^{ikx} a(k) ,  \quad \; \; 
\Fm(x) \coloneqq \int d^3k \, \tilde{\kappa}(k) e^{-ikx} a^*(k) ,  
\nonumber \\
\Fpc(x,t) &\coloneqq \int d^3k \, \tilde{\kappa}(k) e^{ikx} \alpha_t(k) ,  \;
\Fmc(x,t) \coloneqq \int d^3k \, \tilde{\kappa}(k) e^{-ikx} \alpha_t^*(k) .  
\end{align}
By means of the cutoff function
\begin{equation}
\label{eq: cutoff function 2}
\tilde{\eta}(k) \coloneqq \frac{\tilde{\kappa}(k)}{\sqrt{2 \omega(k)}} 
= \frac{(2 \pi)^{-3/2}}{\sqrt{2 \omega(k)}} \id_{\abs{k} \leq \Lambda}(k)
\end{equation}
we are able to express the scalar fields in terms of the auxiliary fields.
\begin{lemma}
\label{lemma: auxially fields and cutoff}
Let $\eta$ be the Fourier transform of \eqref{eq: cutoff function 2}, then
\begin{align}
\vphp(x) =& \left( \eta * \Fp \right)(x) ,
\quad \; \;
\vphm(x) = \left( \eta * \Fm \right)(x) , 
\nonumber\\
\vphpc(x,t) =& \left( \eta * \Fpc \right)(x,t) ,
\;
\vphmc(x,t) = \left( \eta * \Fmc \right)(x,t).
\end{align}
\end{lemma}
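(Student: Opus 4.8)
The plan is a direct Fourier computation, and since the four identities are structurally identical I would prove the first, $\vphp(x)=(\eta*\Fp)(x)$, in full and then obtain the remaining three by cosmetic changes. As a preliminary step I would record the regularity supplied by the ultraviolet cutoff: $\tilde{\kappa}(k)=(2\pi)^{-3/2}\id_{\abs{k}\leq\Lambda}(k)$ lies in $L^1(\mathbb{R}^3)\cap L^2(\mathbb{R}^3)$, and so does $\tilde{\eta}=\tilde{\kappa}/\sqrt{2\omega}$ (immediate for $m_b>0$; for $m_b=0$ use that $\abs{k}^{-1/2}$ is locally integrable on $\mathbb{R}^3$ while $\tilde{\eta}$ has compact support). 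Hence $\eta$ is a bounded continuous function, and it is even because $\omega$ and $\tilde{\kappa}$ are.

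Next I would fix the functional-analytic reading of the statement and then carry out the computation. For each fixed $y$ the form factor $k\mapsto\tilde{\kappa}(k)e^{iky}$ has $L^2$-norm $\norm{\tilde{\kappa}}_{L^2(\mathbb{R}^3)}$ independent of $y$, so $\Fp(y)$ is bounded relative to $\mathcal{N}^{1/2}$ uniformly in $y$; together with $\eta\in L^1(\mathbb{R}^3)$ this makes $(\eta*\Fp)(x)=\int d^3y\,\eta(x-y)\Fp(y)$ a well-defined operator on $\mathcal{D}(\mathcal{N}^{1/2})$, and all interchanges of integration below are justified by Fubini on the finite-particle subspace. The only computational ingredient is that convolution with $\eta$ acts on the momentum-space form factor as multiplication by $\tilde{\eta}$, up to the power of $2\pi$ dictated by the paper's Fourier convention; concretely $\int d^3y\,\eta(x-y)\,e^{iky}=(2\pi)^{3/2}e^{ikx}\tilde{\eta}(k)$, by the substitution $z=x-y$ and the evenness of $\tilde{\eta}$. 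Inserting the definition of $\Fp$ then gives
\begin{align*}
(\eta*\Fp)(x)
&=\int d^3y\,\eta(x-y)\int d^3k\,\tilde{\kappa}(k)\,e^{iky}\,a(k) \\
&=(2\pi)^{3/2}\int d^3k\,\tilde{\eta}(k)\,\tilde{\kappa}(k)\,e^{ikx}\,a(k),
\end{align*}
and the proof closes on the fact that the cutoff is idempotent on $\supp\tilde{\eta}$: since $(2\pi)^{3/2}\tilde{\kappa}(k)=\id_{\abs{k}\leq\Lambda}(k)$ equals $1$ on $\supp\tilde{\eta}$, one has $(2\pi)^{3/2}\tilde{\eta}(k)\tilde{\kappa}(k)=\tilde{\eta}(k)$, so the right-hand side is exactly $\int d^3k\,\tilde{\eta}(k)e^{ikx}a(k)=\vphp(x)$.

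For $\vphm=\eta*\Fm$ I would run the same computation with $e^{iky},a(k)$ replaced by $e^{-iky},a^*(k)$, the sign flip being absorbed once more by the evenness of $\tilde{\eta}$. The classical identities $\vphpc=\eta*\Fpc$ and $\vphmc=\eta*\Fmc$ are obtained by repeating the argument verbatim with $a(k)$ (resp. $a^*(k)$) replaced by the $c$-number $\alpha_t(k)$ (resp. $\alpha_t^*(k)$); here $\tilde{\kappa}\alpha_t\in L^1(\mathbb{R}^3)$ by Cauchy--Schwarz, so Fubini applies with no domain considerations at all. I do not expect a genuine obstacle: the entire mathematical content is the idempotence remark, and the only thing needing a little care is the functional-analytic meaning of convolving the function $\eta$ with the operator-valued distributions $\Fp,\Fm$, which the ultraviolet cutoff renders entirely routine.
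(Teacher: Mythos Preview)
Your proof is correct and follows exactly the approach indicated in the paper, which merely says ``The proof is a simple application of convolutions theorem.'' You have supplied the details --- the Fourier computation and the idempotence observation $(2\pi)^{3/2}\tilde{\kappa}\,\tilde{\eta}=\tilde{\eta}$ --- that make that one-line remark precise.
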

\begin{proof}
The proof is a simple application of convolutions theorem.
\end{proof}
\noindent
In the following, we will integrate the form-factor $\eta$ of the radiation field and estimate the difference in the auxiliary fields. This requires that the $L^2$-norms of the cutoff functions 
\begin{align}
\label{eq: Nelson cutoff functions norm}
\norm{\kappa}_2^2 = \Lambda^3/(6 \pi^2)
\quad \text{and} \quad
\norm{\eta}_2^2 \leq \Lambda^2/(4 \pi^2)
\end{align}
are finite. Subsequently, we use Plancherel's theorem and estimate the difference in the positive frequency parts of the auxiliary fields by 
\begin{align}
&\int d^3y \, \norm{\left( N^{-1/2} \Fp(y) - \Fpc(y,t) \right) \Psi_{N,t}}^2 
= \int d^3k \, \norm{\left( N^{-1/2} \Fp(k) - \Fpc(k,t) \right) \Psi_{N,t}}^2 
\nonumber \\
= &\int_{\abs{k} \leq \Lambda} d^3k \,
\SCP{\left( N^{-1/2} a(k) - \alpha_t(k) \right) \Psi_{N,t}}{\left( N^{-1/2} a(k) - \alpha_t(k) \right) \Psi_{N,t}}
\leq \beta^b\left( \Psi_{N,t}, \alpha_t\right).
\end{align}
Pulling the pieces together we get
\begin{lemma}
\label{lemma: Nelson field difference versus beta-b}
Let $\alpha_t \in L^2(\mathbb{R}^3)$ and $\Psi_{N,t} \in \mathcal{H}^{(N)} \cap \mathcal{D}\left(\mathcal{N} \right)$. Then, there exists a generic constant $C$ independent of $N$, $\Lambda$ and $t$ such that
\begin{align}
\norm{\left( N^{-1/2} \vph(x_1) - \vphc(x_1,t) \right) \Psi_{N,t}}^2 
\leq& C \Lambda^2 \left( \beta^b\left( \Psi_{N,t},\alpha_t \right) + N^{-1} \right),  \\
\norm{\left( N^{-1/2} \vphm(x_1) - \vphmc(x_1,t) \right) \Psi_{N,t}}^2 
\leq& C \Lambda^2 \left( \beta^b\left( \Psi_{N,t},\alpha_t \right) + N^{-1} \right),  \\
\norm{\left( N^{-1/2} \vphp(x_1) - \vphpc(x_1,t) \right) p_1 \Psi_{N,t}}^2 
\leq& C \Lambda^2 \beta^b\left( \Psi_{N,t},\alpha_t \right). 
\end{align}
\end{lemma}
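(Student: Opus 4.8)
The plan is to reduce all three inequalities to the auxiliary-field bound
$\int d^3y\, \norm{( N^{-1/2}\Fp(y) - \Fpc(y,t) )\Psi_{N,t}}^2 \leq \beta^b(\Psi_{N,t},\alpha_t)$
derived immediately above the lemma, using the convolution representation of Lemma~\ref{lemma: auxially fields and cutoff} together with the finiteness of $\norm{\eta}_2$ recorded in \eqref{eq: Nelson cutoff functions norm}. In particular, the role of the cutoff is precisely to make $\norm{\eta}_2 \lesssim \Lambda$ finite.

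First I would treat the positive-frequency part. By Lemma~\ref{lemma: auxially fields and cutoff} one has, for almost every configuration $X_N = (x_1,\dots,x_N)$,
\begin{align}
\big( ( N^{-1/2}\vphp(x_1) - \vphpc(x_1,t) )\Psi_{N,t} \big)(X_N)
= \int d^3y\, \eta(x_1-y)\, \big( ( N^{-1/2}\Fp(y) - \Fpc(y,t) )\Psi_{N,t} \big)(X_N),
\end{align}
where for each $y$ the bracketed object is an $\mathcal{F}$-valued function of $X_N$. Applying the Cauchy--Schwarz inequality in the $y$-integration — translation invariance of $d^3y$ lets me pull out $\norm{\eta}_2$, the variable $x_1$ being harmless — and then integrating over $X_N$, I get
\begin{align}
\norm{( N^{-1/2}\vphp(x_1) - \vphpc(x_1,t) )\Psi_{N,t}}^2
\leq \norm{\eta}_2^2 \int d^3y\, \norm{( N^{-1/2}\Fp(y) - \Fpc(y,t) )\Psi_{N,t}}^2
\leq \norm{\eta}_2^2\, \beta^b(\Psi_{N,t},\alpha_t),
\end{align}
which with $\norm{\eta}_2^2 \leq \Lambda^2/(4\pi^2)$ proves the $\vphp$-estimate. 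Since $p_1$ acts only on $x_1$ it commutes with the Fock operators $\Fp(y)$ and with the $c$-number $\Fpc(y,t)$, and it is a contraction, so the same two displays hold verbatim with $p_1\Psi_{N,t}$ in place of $\Psi_{N,t}$ (one still has $\int d^3y\, \norm{( N^{-1/2}\Fp(y) - \Fpc(y,t) )p_1\Psi_{N,t}}^2 \leq \beta^b$), giving the third claimed bound.

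Second, for the negative-frequency part I would normal order. Writing $G \coloneqq N^{-1/2}\vphp(x_1) - \vphpc(x_1,t)$ and noting that, because $\tilde\kappa$ and $\omega$ are real, its adjoint is $G^{\ast} = N^{-1/2}\vphm(x_1) - \vphmc(x_1,t)$, one has
\begin{align}
\norm{( N^{-1/2}\vphm(x_1) - \vphmc(x_1,t) )\Psi_{N,t}}^2
= \norm{G^{\ast}\Psi_{N,t}}^2
= \norm{G\Psi_{N,t}}^2 + \SCP{\Psi_{N,t}}{[G,G^{\ast}]\Psi_{N,t}}.
\end{align}
Here the commutator is a $c$-number: $[G,G^{\ast}] = N^{-1}[\vphp(x_1),\vphm(x_1)] = N^{-1}\int d^3k\, \tilde\kappa(k)^2/(2\omega(k))$, the classical contributions dropping out and the double $k$-integral collapsing via \eqref{eq: Nelson canonical commutation relation}; this integral is finite and, like $\norm{\eta}_2^2$, bounded by $C\Lambda^2$ (cf.~\eqref{eq: Nelson cutoff functions norm}), so the first paragraph yields the $\vphm$-estimate with the extra term $C\Lambda^2 N^{-1}$. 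Finally the full-field bound follows from $\vph = \vphp + \vphm$, the triangle inequality $\norm{( N^{-1/2}\vph(x_1) - \vphc(x_1,t) )\Psi_{N,t}} \leq \norm{G\Psi_{N,t}} + \norm{G^{\ast}\Psi_{N,t}}$, and the two estimates already obtained.

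The only genuine delicacy, and the step I expect to need the most care, is the treatment of $a(k)$ and $a^{\ast}(k)$ as operator-valued distributions: the manipulations above should be read as quadratic-form identities on the domain where $(\mathcal{N}+1)^{1/2}\Psi_{N,t}$ is finite (which holds since $\Psi_{N,t}\in\mathcal{D}(\mathcal{N})$), on which the standard $\norm{(\mathcal{N}+1)^{1/2}\cdot}$ bounds on the field operators make every expression convergent, with the interchange of the $y$- and $k$-integrals justified by Fubini and Plancherel exactly as in the computation preceding the lemma. Beyond this bookkeeping there is no real analytic obstacle: the mechanism is simply that square-integrability of the form factor $\eta$, with $\norm{\eta}_2\lesssim\Lambda$, converts the missing power of $\omega(k)^{-1/2}$ into the factor $\Lambda$ while tying the residual field fluctuations to $\beta^b$ (and the normal-ordering remainder to $N^{-1}$).
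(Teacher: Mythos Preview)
Your proof is correct and follows essentially the same route as the paper's: the convolution representation via Lemma~\ref{lemma: auxially fields and cutoff} together with Cauchy--Schwarz in the $y$-integral to extract $\norm{\eta}_2^2$, the normal-ordering identity $\norm{G^{\ast}\Psi}^2 = \norm{G\Psi}^2 + N^{-1}\norm{\eta}_2^2$ for the negative-frequency part, and the commutation $[p_1,\Fp(y)]=0$ for the third bound. The paper merely arranges the steps in a slightly different order (treating the full field first and writing the Cauchy--Schwarz step as a symmetric double integral in $y,z$), but the mechanism and constants are identical.
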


\begin{proof}
From the canonical commutation relations \eqref{eq: Nelson canonical commutation relation}, we obtain
\begin{align}
\left[ \left( N^{-1/2} \vphp(x) - \vphp(x,t) \right),  \left( N^{-1/2} \vphm(x) - \vphm(x,t) \right) \right]
=&    N^{-1} \norm{\eta}_2^2
\end{align}
and estimate
\begin{align}
&\norm{\left( N^{-1/2} \vph(x_1) - \vphc(x_1,t) \right) \Psi_{N}}^2 
\nonumber \\
\leq  2 &\norm{\left( N^{-1/2} \vphp(x_1) - \vphpc(x_1,t) \right) \Psi_{N}}^2 
+ 2 \norm{\left( N^{-1/2} \vphm(x_1) - \vphmc(x_1,t) \right) \Psi_{N}}^2 
\nonumber \\
\leq 4 &\norm{\left( N^{-1/2} \vphp(x_1) - \vphpc(x_1,t) \right) \Psi_{N}}^2 
+ 2 N^{-1} \norm{\eta}_2^2.
\end{align}
By means of Lemma~\ref{lemma: auxially fields and cutoff}  we have
\small
\begin{align}
&\norm{\left( N^{-1/2} \vphp(x_1)  - \vphpc(x_1,t) \right) \Psi_N}^2  
\nonumber \\
= &\SCP{\int d^3y \, \eta(x_1-y) \left( N^{-1/2} \Fp(y) - \Fpc(y,t)\right)  \Psi_N}{\int d^3z \, \eta(x_1-z) \left( N^{-1/2} \Fp(z) - \Fpc(z,t)\right)  \Psi_N}  
\nonumber \\
\leq& \int d^3y \int d^3z \,
\abs{\SCP{\eta^*(x_1-z) \left( N^{-1/2} \Fp(y) - \Fpc(y,t)\right)  \Psi_N}{ \eta^*(x_1-y) \left( N^{-1/2} \Fp(z) - \Fpc(z,t)\right)  \Psi_N}}.
\end{align}
\normalsize
Cauchy-Schwarz inequality and the estimate $ ab \leq 1/2 \left(a^2 + b^2 \right)$ give rise to
\small
\begin{align}
&\norm{\left( N^{-1/2} \vphp(x_1) - \vphpc(x_1,t) \right) \Psi_N}^2 
\nonumber \\
\leq& \int d^3y \int d^3z \, \norm{\eta^*(x_1-z) \left( N^{-1/2} \Fp(y) - \Fpc(y,t)\right)  \Psi_N}
\norm{\eta^*(x_1-y) \left( N^{-1/2} \Fp(z) - \Fpc(z,t)\right)  \Psi_N} 
\nonumber \\
\leq& \int d^3y \int d^3z \, \norm{\eta^*(x_1-z) \left( N^{-1/2} \Fp(y) - \Fpc(y,t)\right)  \Psi_N}^2 
\nonumber \\
=& \int d^3y \, \SCP{\left( N^{-1/2} \Fp(y) - \Fpc(y,t)\right)  \Psi_N}{ \int d^3z \abs{\eta(x_1-z)}^2  \left( N^{-1/2} \Fp(y) - \Fpc(y,t)\right)  \Psi_N} 
\nonumber \\
=& \norm{\eta}_2^2 \int d^3y \, \norm{\left( N^{-1/2} \Fp(y) - \Fpc(y,t)\right)  \Psi_N}^2 
\leq \norm{\eta}_2^2 \beta^b\left( \Psi_{N,t} , \alpha_t \right)  .
\end{align}
\normalsize
In total, we get
\begin{align}
\norm{\left( N^{-1/2} \vph(x_1) - \vphc(x_1,t) \right) \Psi_{N,t}}^2   
&\leq  \norm{\eta}_2^2 \left( 4 \beta^b\left( \Psi_{N,t} , \alpha_t \right) + 2 N^{-1} \right)
\nonumber \\
&\leq C \Lambda^2 \left( \beta^b\left( \Psi_{N,t} , \alpha_t \right) + N^{-1} \right).
\end{align}
The second and third inequality are shown analogously. Hereby, it is helpful to recall that $\left[p_1, \Fp(y) \right] = \left[p_1, \Fpc(y) \right] = 0$.
\end{proof} 

\subsection{Estimate on the time derivative of $\beta$}
\label{subsec: Estimate on the time derivative of beta}

Subsequently, we control the change of $\beta \left( \Psi_{N,t}, \varphi_t, \alpha_t \right)$ in time.
 
\begin{lemma}
\label{lemma: Nelson time derivative of beta preparation}
Let $(\varphi_t,\alpha_t) \in \mathcal{G}_1$ and $\Psi_{N,t}$ be the unique solution of \eqref{eq: Nelson Schroedinger equation microscopic} with initial data $ \Psi_{N,0} \in \left( L_s^2(\mathbb{R}^{3N})  \otimes \mathcal{F} \right) \cap \mathcal{D} \left( \mathcal{N} \right) \cap \mathcal{D} \left(  \mathcal{N}  H_N \right)$ such that $\norm{\Psi_{N,0}} =1$. 
Then
\begin{align}
d_t \beta^a(t)
&= - 2 \Im \SCP{\Psi_{N,t}}{\left( N^{-1/2} \vph(x_1) - \vphc(x_1,t) \right) q_1^{\varphi_t} \Psi_{N,t}},
\nonumber \\
\label{eq: Nelson time derivative of beta preparation}
d_t \beta^b(t) &= \; \;  2 \Im \SCP{\Psi_{N,t}}{\Big(  \int d^3k \, \tilde{\eta}(k) (2 \pi)^{3/2} \mathcal{FT}^*[\abs{\varphi_t}^2](k) \left( N^{-1/2} a(k) - \alpha_t(k) \right)   \Big) \Psi_{N,t}}
\nonumber\\
 & \; \;  - 2 \Im \SCP{\Psi_{N,t}}{ \Big( \int d^3k \, \tilde{\eta}(k) e^{ikx_1} \left(  N^{-1/2} a(k) - \alpha_t(k) \right)  \Big) \Psi_{N,t}}.
\end{align}
\end{lemma}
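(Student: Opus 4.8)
The plan is to differentiate $\beta^a(t)$ and $\beta^b(t)$ directly along the flow, using $i\partial_t\Psi_{N,t}=H_N\Psi_{N,t}$ for the many-body state and the Schr\"odinger--Klein--Gordon system~\eqref{eq: Nelson Schroedinger-Klein-Gordon system} for the pair $(\varphi_t,\alpha_t)$, everything else being a short commutator computation. Before any manipulation I would fix the functional-analytic framework: the hypotheses $\Psi_{N,0}\in\bigl(L^2_s(\mathbb{R}^{3N})\otimes\mathcal{F}\bigr)\cap\mathcal{D}(\mathcal{N})\cap\mathcal{D}(\mathcal{N}H_N)$, together with the invariance of $L^2_s(\mathbb{R}^{3N})\otimes\mathcal{F}$, of $\mathcal{D}(\mathcal{N})$ and of $\mathcal{D}(\mathcal{N}H_N)$ under $e^{-iH_Nt}$, guarantee that $\Psi_{N,t}$ stays symmetric, that $\mathcal{N}\Psi_{N,t}$ and $\mathcal{N}H_N\Psi_{N,t}$ are well defined, and that $t\mapsto\Psi_{N,t}$ is strongly differentiable in the norms used below; one also checks that the pointwise operators $a(k)$ and the field operators $\vph(x_j)$, $H_f$ may be moved around on these domains and that differentiation under $\int d^3k$ is legitimate by dominated convergence. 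This step is routine but is the only part that really needs care.

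For $\beta^a$, since $q_1^{\varphi_t}$ is bounded the product rule gives $d_t\beta^a(t)=i\SCP{\Psi_{N,t}}{[H_N,q_1^{\varphi_t}]\Psi_{N,t}}+\SCP{\Psi_{N,t}}{(\partial_tq_1^{\varphi_t})\Psi_{N,t}}$, and from $i\partial_t\varphi_t=H^{eff}\varphi_t$ with $H^{eff}=-\Delta_1+\vphc(x_1,t)$ one obtains $\partial_tq_1^{\varphi_t}=-i[H^{eff},q_1^{\varphi_t}]$, hence $d_t\beta^a(t)=i\SCP{\Psi_{N,t}}{[H_N-H^{eff},q_1^{\varphi_t}]\Psi_{N,t}}$. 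In the difference $H_N-H^{eff}$ the operator $-\Delta_1$ cancels, while $-\Delta_j$, $H_f$ and $N^{-1/2}\vph(x_j)$ for $j\geq2$ all commute with $q_1^{\varphi_t}$ (they act on other variables), so only $N^{-1/2}\vph(x_1)-\vphc(x_1,t)$ survives. Writing this operator as $A$ (symmetric on the relevant domain) and using the elementary identity $\SCP{\Psi}{[A,q]\Psi}=2i\,\Im\SCP{\Psi}{Aq\Psi}$, valid for symmetric $A$ and an orthogonal projection $q$, yields the first identity.

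For $\beta^b$, set $b_t(k)\coloneqq N^{-1/2}a(k)-\alpha_t(k)$, so $\beta^b(t)=\int d^3k\,\norm{b_t(k)\Psi_{N,t}}^2=\SCP{\Psi_{N,t}}{B_t\Psi_{N,t}}$ with $B_t=\int d^3k\,b_t(k)^*b_t(k)$, and thus $d_t\beta^b(t)=i\SCP{\Psi_{N,t}}{[H_N,B_t]\Psi_{N,t}}+\SCP{\Psi_{N,t}}{(\partial_tB_t)\Psi_{N,t}}$. The Laplacians commute with $B_t$. From the canonical commutation relations~\eqref{eq: Nelson canonical commutation relation} I compute $[H_f,b_t(k)^*b_t(k)]=\omega(k)\bigl(\alpha_t^*(k)b_t(k)-\alpha_t(k)b_t(k)^*\bigr)$ and $[\vph(x_j),b_t(k)^*b_t(k)]=N^{-1/2}\tilde{\eta}(k)\bigl(e^{ikx_j}b_t(k)-e^{-ikx_j}b_t(k)^*\bigr)$. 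On the other hand $\partial_tB_t=-\int d^3k\,\bigl(\dot\alpha_t^*(k)b_t(k)+\dot\alpha_t(k)b_t(k)^*\bigr)$, and inserting the field equation $i\dot\alpha_t(k)=\omega(k)\alpha_t(k)+(2\pi)^{3/2}\tilde{\eta}(k)\mathcal{FT}[\abs{\varphi_t}^2](k)$ one checks that the $\omega(k)\alpha_t(k)$ part of $\partial_tB_t$ cancels $i\SCP{\Psi_{N,t}}{[H_f,B_t]\Psi_{N,t}}$ exactly. The remaining source part of $\partial_tB_t$ has, for each $k$, the form $-i\zeta(k)+i\overline{\zeta(k)}=2\Im\zeta(k)$ with $\zeta(k)=(2\pi)^{3/2}\tilde{\eta}(k)\overline{\mathcal{FT}[\abs{\varphi_t}^2](k)}\SCP{\Psi_{N,t}}{b_t(k)\Psi_{N,t}}$, which integrates to the first line of~\eqref{eq: Nelson time derivative of beta preparation}. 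For the interaction commutator, the relation $(e^{ikx_j}b_t(k))^*=e^{-ikx_j}b_t(k)^*$ turns $\SCP{\Psi_{N,t}}{(e^{ikx_j}b_t(k)-e^{-ikx_j}b_t(k)^*)\Psi_{N,t}}$ into $2i\,\Im\SCP{\Psi_{N,t}}{e^{ikx_j}b_t(k)\Psi_{N,t}}$, and summing over $j$ while using the permutation symmetry of $\Psi_{N,t}$ to replace $N^{-1}\sum_{j=1}^N$ by its $j=1$ summand produces the second line.

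I expect the main obstacle to be entirely the functional-analytic justification collected in the first paragraph: propagation of the domains $\mathcal{D}(\mathcal{N})$ and $\mathcal{D}(\mathcal{N}H_N)$ under the Nelson dynamics, the handling of the pointwise operators $a(k)$ and of $H_f$, $\vph(x_j)$ on these domains, and the interchange of $d_t$ with $\int d^3k$. Once this is in place the calculation is exactly the commutator bookkeeping above; the ultraviolet cutoff and the assumptions of Definition~\ref{definition: assumptions on the solutions of the effective system} ensure that all the $k$-integrals appearing (in particular $\norm{\eta}_2$, $\norm{\kappa}_2$, $\norm{\alpha_t}$, and $\norm{\abs{\varphi_t}^2}$ through $\varphi_t\in H^2(\mathbb{R}^3)$) are finite.
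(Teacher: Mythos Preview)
Your proposal is correct and follows essentially the same route as the paper. The paper also differentiates $\beta^a$ via $d_tq_1^{\varphi_t}=-i[H^{eff},q_1^{\varphi_t}]$ and reduces $[H_N-H^{eff},q_1^{\varphi_t}]$ to the field-difference term, and for $\beta^b$ it differentiates $\norm{b_t(k)\Psi_{N,t}}^2$ pointwise in $k$ using $i[H_N,a(k)]$ and the $\alpha_t$-equation (your organization via $B_t=\int d^3k\,b_t(k)^*b_t(k)$ and the commutators $[H_f,b_t^*b_t]$, $[\vph(x_j),b_t^*b_t]$ is an equivalent bookkeeping of the same cancellation). Like you, the paper identifies the domain propagation $\mathcal{D}(\mathcal{N})\cap\mathcal{D}(\mathcal{N}H_N)$ under $e^{-iH_Nt}$ as the only genuinely technical point and defers it to \cite[Appendix~2.11]{leopold2}.
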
 

\begin{proof}
The structure of the proof is best understood by the following formal calculation. 
A rigorous derivation which requires to show the invariance of the domain $\mathcal{D} \left( \mathcal{N} \right) \cap \mathcal{D} \left(  \mathcal{N}  H_N \right)$ during the time evolution is presented in \cite[Appendix 2.11]{leopold2}.\\
The functional $\beta^a(t)$ is time-dependent, because $\Psi_{N,t}$ and $\varphi_t$ evolve according to \eqref{eq: Nelson Schroedinger equation microscopic} and \eqref{eq: Nelson Schroedinger-Klein-Gordon system} respectively. The derivative of the projector $q^{\varphi_t}$ is given by
\begin{align}
d_t q_1^{\varphi_t} = - i \left[H_1^{eff},q_1^{\varphi_t} \right],
\end{align}
where $H_1^{eff} = - \Delta_1 + \vphc(x_1,t)$ is the effective Hamiltonian acting on the first variable. This leads to
\begin{align}
d_t \beta^a(t) = d_t \SCP{\Psi_{N,t}}{q_1^{\varphi_t} \Psi_{N,t}} 
&= \qquad \,   i \SCP{\Psi_{N,t}}{\left[ \left( H_N - H_1^{eff} \right) , q_1^{\varphi_t}  \right] \Psi_{N,t}}
\nonumber \\  
&= \qquad \,   i \SCP{\Psi_{N,t}}{\left[ \left( N^{-1/2}\vph(x_1) - \vphc(x_1,t) \right), q_1^{\varphi_t} \right] \Psi_{N,t}}  
\nonumber \\
&= - 2 \Im \SCP{\Psi_{N,t}}{\left( N^{-1/2} \vph(x_1) - \vphc(x_1,t) \right) q_1^{\varphi_t} \Psi_{N,t}}.
\end{align}
We calculate the commutator
\begin{align}
 i \left[ H_N , \left( N^{-1/2} a(k) - \alpha_t(k)  \right)  \right] 
 =& - i  \omega(k) N^{-1/2} a(k) - i N^{-1} \sum_{j=1}^N \tilde{\eta}(k) e^{- i k x_j}    
\end{align}
by means of the canonical commutation relations~\eqref{eq: Nelson canonical commutation relation} and continue with
\begin{align}
d_t \beta^b(t) 
&=  \int d^3k \, d_t \SCP{\left( N^{-1/2} a(k) - \alpha_t(k) \right) \Psi_{N,t}}{   \left( N^{-1/2} a(k) - \alpha_t(k) \right) \Psi_{N,t}} 
\nonumber \\
&=  \int d^3k \, \SCP{ i \left[ H_N, \left( N^{-1/2} a(k) - \alpha_t(k)  \right)  \right] \Psi_{N,t}}{ \left( N^{-1/2} a(k) - \alpha_t(k) \right) \Psi_{N,t}}
\nonumber \\
&+   \int d^3k \,  \SCP{\left( N^{-1/2} a(k) - \alpha_t(k) \right) \Psi_{N,t}}{ i \left[ H_N , \left(  N^{-1/2} a(k) - \alpha_t(k) \right) \right]  \Psi_{N,t}}  
\nonumber \\
&- \int d^3k \, \SCP{\left( \partial_t \alpha_t \right)(k) \Psi_{N,t}}{ \left( N^{-1/2} a(k) - \alpha_t(k) \right) \Psi_{N,t}}
\nonumber \\
&- \int d^3k \, \SCP{\left( N^{-1/2} a(k) - \alpha_t(k) \right) \Psi_{N,t}}{ \left( \partial_t \alpha_t \right)(k) \Psi_{N,t}}
\nonumber\\
&=  2 \int d^3k \, \Re \SCP{ i \left[ H_N, \left( N^{-1/2} a(k) - \alpha_t(k)  \right)  \right] \Psi_{N,t}}{ \left( N^{-1/2} a(k) - \alpha_t(k) \right) \Psi_{N,t}}
\nonumber \\
&- 2 \int d^3k \, \Re \SCP{\left( \partial_t \alpha_t \right)(k) \Psi_{N,t}}{ \left( N^{-1/2} a(k) - \alpha_t(k) \right) \Psi_{N,t}}
\nonumber \\
&= 2 \int d^3k \, \Re \big\{  i \omega(k) \SCP{ \left( N^{-1/2} a(k) - \alpha_t(k) \right) \Psi_{N,t}}{ \left( N^{-1/2} a(k) - \alpha_t(k) \right) \Psi_{N,t}}    \big\}
\nonumber\\
&+ 2 \int d^3k \, \Re \big\{  i  \SCP{ N^{-1}\sum_{j=1}^N \tilde{\eta}(k) e^{- ik x_j} \Psi_{N,t}}{ \left( N^{-1/2} a(k) - \alpha_t(k) \right) \Psi_{N,t}}    \big\}
\nonumber\\
&- 2 \int d^3k \, \Re \big\{  i  \SCP{ (2 \pi)^{3/2} \tilde{\eta}(k) \mathcal{FT}[\abs{\varphi_t}^2](k) \Psi_{N,t}}{ \left( N^{-1/2} a(k) - \alpha_t(k) \right) \Psi_{N,t}}    \big\}.
\end{align}
So if we use the symmetry of the wave function and $\Re \{ i z\} = - \Im \{ z \} $, we get
\begin{align}
d_t \beta^b(t) 
&= - 2 \int d^3k \, \Im \big\{   \omega(k) \SCP{ \left( N^{-1/2} a(k) - \alpha_t(k) \right) \Psi_{N,t}}{ \left( N^{-1/2} a(k) - \alpha_t(k) \right) \Psi_{N,t}}    \big\}
\nonumber\\
&- 2 \int d^3k \, \Im \big\{   \SCP{ \tilde{\eta}(k) e^{- ik x_1} \Psi_{N,t}}{ \left( N^{-1/2} a(k) - \alpha_t(k) \right) \Psi_{N,t}}    \big\}
\nonumber\\
&+ 2 \int d^3k \, \Im \big\{    \SCP{ (2 \pi)^{3/2} \tilde{\eta}(k) \mathcal{FT}[\abs{\varphi_t}^2](k) \Psi_{N,t}}{ \left( N^{-1/2} a(k) - \alpha_t(k) \right) \Psi_{N,t}}    \big\}
\nonumber\\
&= 2  \Im \big\{  \SCP{ \Psi_{N,t}}{ \Big( \int d^3k \, (2 \pi)^{3/2} \tilde{\eta}(k) \mathcal{FT}^*[\abs{\varphi_t}^2](k) \left( N^{-1/2} a(k) - \alpha_t(k) \right)  \Big) \Psi_{N,t}}    \big\}
\nonumber\\
&- 2  \Im \big\{   \SCP{  \Psi_{N,t}}{ \Big( \int d^3k \,  \tilde{\eta}(k) e^{ ik x_1} \left( N^{-1/2} a(k) - \alpha_t(k) \right) \Big) \Psi_{N,t}}    \big\}.
\end{align}

\end{proof}

\begin{lemma}
\label{lemma: Nelson time derivative of beta}
Let $(\varphi_t,\alpha_t) \in \mathcal{G}_1$ and $\Psi_{N,t}$ be the unique solution of \eqref{eq: Nelson Schroedinger equation microscopic} with initial data $ \Psi_{N,0} \in \left( L_s^2(\mathbb{R}^{3N})  \otimes \mathcal{F} \right) \cap \mathcal{D} \left( \mathcal{N} \right) \cap \mathcal{D} \left(  \mathcal{N}  H_N \right)$ such that $\norm{\Psi_{N,0}} =1$. 
Then for any $t \in \mathbb{R}_0^+$ there exists a generic constant $C$ independent of $N$, $\Lambda$ and $t$ such that
\begin{align}
\abs{d_t \beta \left( \Psi_{N,t}, \varphi_t, \alpha_t \right) } \leq& C \Lambda^2 \left( \beta \left( \Psi_{N,t}, \varphi_t, \alpha_t \right)  + N^{-1} \right), \\
\label{eq: Nelson time derivative of beta}
\beta \left( \Psi_{N,t}, \varphi_t, \alpha_t \right) 
\leq& e^{C \Lambda^2 t} \left( \beta \left( \Psi_{N,0}, \varphi_0, \alpha_0 \right) + N^{-1} \right).
\end{align}
\end{lemma}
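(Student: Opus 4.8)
The plan is to bound $d_t\beta^a(t)$ and $d_t\beta^b(t)$ separately, each by $C\Lambda^2\big(\beta(\Psi_{N,t},\varphi_t,\alpha_t)+N^{-1}\big)$, starting from the explicit identities of Lemma~\ref{lemma: Nelson time derivative of beta preparation}, and then to conclude \eqref{eq: Nelson time derivative of beta} by Gr\"onwall's lemma. I will use throughout that $p_1^{\varphi_t}$ and $q_1^{\varphi_t}$ are orthogonal projections, contractive and commuting with every field operator; that $\norm{q_1^{\varphi_t}\Psi_{N,t}}^2=\beta^a(t)$; that $\beta^b(\,\cdot\,,\alpha_t)$ does not increase under $\Psi_{N,t}\mapsto p_1^{\varphi_t}\Psi_{N,t}$ or $\Psi_{N,t}\mapsto q_1^{\varphi_t}\Psi_{N,t}$; the three bounds of Lemma~\ref{lemma: Nelson field difference versus beta-b}; and the intermediate estimate $\norm{(N^{-1/2}\vphp(x_1)-\vphpc(x_1,t))\Psi}^2\le\norm{\eta}_2^2\,\beta^b(\Psi,\alpha_t)$ established inside its proof. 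The common mechanism is that each summand gets rearranged, by Cauchy--Schwarz, into a product of a factor $\norm{q_1^{\varphi_t}\Psi_{N,t}}=\sqrt{\beta^a(t)}$ and a field-difference factor of size $\lesssim\Lambda\sqrt{\beta^b(t)+N^{-1}}$, after which $2ab\le a^2+b^2$ together with $\norm{\eta}_2^2\le C\Lambda^2$ yields the bound.

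For $d_t\beta^a(t)$ I insert $\id=p_1^{\varphi_t}+q_1^{\varphi_t}$ to the left of $N^{-1/2}\vph(x_1)-\vphc(x_1,t)$ in the first identity of Lemma~\ref{lemma: Nelson time derivative of beta preparation}. The $q_1^{\varphi_t}\cdots q_1^{\varphi_t}$ term is estimated by $2\norm{q_1^{\varphi_t}\Psi_{N,t}}\,\norm{(N^{-1/2}\vph(x_1)-\vphc(x_1,t))q_1^{\varphi_t}\Psi_{N,t}}$ and then by the first bound of Lemma~\ref{lemma: Nelson field difference versus beta-b} applied to $q_1^{\varphi_t}\Psi_{N,t}$. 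In the $p_1^{\varphi_t}\cdots q_1^{\varphi_t}$ term I move the operator $N^{-1/2}\vph(x_1)-\vphc(x_1,t)$, which is self-adjoint since the scalar field is symmetric and $\vphc(x_1,t)$ is real-valued, onto the $p_1^{\varphi_t}\Psi_{N,t}$ slot, and then apply the first bound of Lemma~\ref{lemma: Nelson field difference versus beta-b} to $p_1^{\varphi_t}\Psi_{N,t}$. Both terms are of the announced form.

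The estimate of $d_t\beta^b(t)$ is the substantive step. Writing $(2\pi)^{3/2}\mathcal{FT}^*[\abs{\varphi_t}^2](k)=\langle\varphi_t,e^{ik\cdot}\varphi_t\rangle$, the first summand in \eqref{eq: Nelson time derivative of beta preparation} equals $2\Im\langle\Psi_{N,t},A\,\Psi_{N,t}\rangle$ with the purely field-valued operator $A=\int d^3y\,\abs{\varphi_t(y)}^2\big(N^{-1/2}\vphp(y)-\vphpc(y,t)\big)$, and the second equals $-2\Im\langle\Psi_{N,t},B\,\Psi_{N,t}\rangle$ with $B=N^{-1/2}\vphp(x_1)-\vphpc(x_1,t)$. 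The crucial observation is that $p_1^{\varphi_t}B\,p_1^{\varphi_t}=A\,p_1^{\varphi_t}$; hence, inserting $\id=p_1^{\varphi_t}+q_1^{\varphi_t}$ on both sides of each scalar product, the two $p_1^{\varphi_t}\cdots p_1^{\varphi_t}$ contributions coincide and cancel, the $p_1^{\varphi_t}\cdots q_1^{\varphi_t}$ and $q_1^{\varphi_t}\cdots p_1^{\varphi_t}$ pieces of $A$ vanish (a field-valued operator cannot connect the $p$- and $q$-sectors), and there remain four terms, each carrying an explicit factor $q_1^{\varphi_t}$. The $q_1^{\varphi_t}\cdots q_1^{\varphi_t}$ term with $A$ is controlled by $\norm{q_1^{\varphi_t}\Psi_{N,t}}\,\norm{A\,q_1^{\varphi_t}\Psi_{N,t}}$, using Cauchy--Schwarz in $k$ and $\norm{\mathcal{FT}[\abs{\varphi_t}^2]}_\infty\le(2\pi)^{-3/2}\norm{\varphi_t}_2^2=(2\pi)^{-3/2}$ to get $\norm{A\,q_1^{\varphi_t}\Psi_{N,t}}\le\norm{\eta}_2\sqrt{\beta^b(t)}$; the $q_1^{\varphi_t}\cdots q_1^{\varphi_t}$ and $q_1^{\varphi_t}\cdots p_1^{\varphi_t}$ terms with $B$ by the intermediate estimate above (for $q_1^{\varphi_t}\Psi_{N,t}$, resp.\ for $p_1^{\varphi_t}\Psi_{N,t}$); and the remaining $p_1^{\varphi_t}\cdots q_1^{\varphi_t}$ term by moving $B$ across the scalar product so that its adjoint $N^{-1/2}\vphm(x_1)-\vphmc(x_1,t)$ acts on $p_1^{\varphi_t}\Psi_{N,t}$, and then invoking the second bound of Lemma~\ref{lemma: Nelson field difference versus beta-b}; this last step is the only source of the $N^{-1}$, coming from the commutator $\norm{\eta}_2^2/N$. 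All four terms are $\le C\Lambda^2(\beta(t)+N^{-1})$.

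Collecting the two estimates gives $\abs{d_t\beta(t)}\le C\Lambda^2\big(\beta(t)+N^{-1}\big)$ for all $t\in\mathbb{R}_0^+$, so $d_t\big(\beta(t)+N^{-1}\big)\le C\Lambda^2\big(\beta(t)+N^{-1}\big)$ and Gr\"onwall's lemma gives $\beta(t)+N^{-1}\le e^{C\Lambda^2 t}\big(\beta(0)+N^{-1}\big)$, which is \eqref{eq: Nelson time derivative of beta}. I expect the organization of $d_t\beta^b(t)$ to be the real obstacle: a naive Cauchy--Schwarz on either summand of \eqref{eq: Nelson time derivative of beta preparation} only controls it by something of order $\sqrt{\beta^b(t)}$ or $\sqrt{\beta^b(t)+N^{-1}}$, which is far too weak for Gr\"onwall, and producing the saving factor $\sqrt{\beta^a(t)}$ requires both the cancellation in the $p_1^{\varphi_t}\cdots p_1^{\varphi_t}$ sector between the mean-field term and the direct term, and the relocation of the surviving annihilation operator onto the condensate component $p_1^{\varphi_t}\Psi_{N,t}$ so that the $N^{-1}$-corrected field bound of Lemma~\ref{lemma: Nelson field difference versus beta-b} can be applied.
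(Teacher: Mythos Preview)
Your proof is correct and follows essentially the same route as the paper. The paper treats $d_t\beta^a$ slightly more economically---it applies Cauchy--Schwarz directly without inserting $p_1+q_1$, bounding by $\norm{(N^{-1/2}\vph(x_1)-\vphc(x_1,t))\Psi_{N,t}}^2+\norm{q_1^{\varphi_t}\Psi_{N,t}}^2$ and then invoking Lemma~\ref{lemma: Nelson field difference versus beta-b}---and for $d_t\beta^b$ it works with the auxiliary fields $\Fp$ and the convolution $\eta*\abs{\varphi_t}^2$, obtaining three terms rather than your four (your $p_1^{\varphi_t}Bq_1^{\varphi_t}$ and $q_1^{\varphi_t}Bq_1^{\varphi_t}$ pieces are kept together as a single $\langle\Psi_{N,t},Bq_1^{\varphi_t}\Psi_{N,t}\rangle$ term, bounded after taking the adjoint $B^*$ onto the full $\Psi_{N,t}$). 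The central mechanism---the cancellation $p_1^{\varphi_t}Bp_1^{\varphi_t}=Ap_1^{\varphi_t}$ that produces a $q_1^{\varphi_t}$ in every surviving term---is exactly the same, and your identification of the sole $N^{-1}$ contribution (from commuting $\vphp$ past to get $\vphm$) matches the paper's.
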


\begin{proof}
Schwarz's inequality and  $ab \leq 1/2 (a^2 + b^2)$ let us estimate the first line of Lemma~\ref{lemma: Nelson time derivative of beta preparation} by
\begin{align}
\abs{d_t \beta^a  \left(t \right)} 
&\leq 2 \abs{\SCP{\Psi_{N,t}}{\left( N^{-1/2} \vph(x_1)  - \vphc(x_1,t) \right) q_1^{\varphi_t} \Psi_{N,t}}}
\nonumber \\
&\leq \norm{\left( N^{-1/2} \vph(x_1)  - \vphc(x_1,t) \right) \Psi_{N,t}}^2 +  
\norm{q_1^{\varphi_t} \Psi_{N,t}}^2.
\end{align}
By Lemma~\ref{lemma: Nelson field difference versus beta-b}, we obtain
\begin{align}
\abs{d_t \beta^a  \left( t \right)} 
&\leq  C \Lambda^2 \left( \beta\left( t \right) + N^{-1} \right).
\end{align}
In order to estimate $d_t \beta^b(t)$ we notice that
\begin{align}
\int d^3k \, \tilde{\eta}(k) e^{ikx_1} \left( N^{-1/2} a(k) - \alpha(k,t) \right)
&= \int d^3y \,  \eta(x_1 - y)  \left(
N^{-1/2} \Fp(y) - \Fpc(y,t) \right)(x_1)
\nonumber \\
&=  \quad \; \left(  N^{-1/2} \vphp(x_1) - \vphpc(x_1,t) \right) 
\end{align}
and
\begin{align}
\int d^3k \, &\tilde{\eta}(k) (2\pi)^{3/2} \mathcal{FT}[\abs{\varphi_t}^2]^*(k) \left( N^{-1/2} a(k) - \alpha_t(k) \right)
\nonumber \\
&= \int d^3y \, \left( \eta * \abs{\varphi_t}^2 \right)(y,t) \left(
N^{-1/2} \Fp(y) - \Fpc(y,t) \right).
\end{align}
follow from the convolution theorem.
This gives
\begin{align}
d_t \beta^b(t) 
= - 2 \Im &\int d^3y \, \SCP{\Psi_{N,t}}{\eta(x_1 - y) \left(  
N^{-1/2} \Fp(y) - \Fpc(y,t) \right) \Psi_{N,t}} 
\nonumber \\
+ 2 \Im &\int d^3y \, \SCP{\Psi_{N,t}}{ \left( \eta * \abs{\varphi_t}^2 \right)(y,t) \left( N^{-1/2} \Fp(y) - \Fpc(y,t) \right) \Psi_{N,t}} .
\end{align}
We see that not only present gauge boson fluctuations around the coherent state lead to a growth in $\beta^b(t)$ but an additional change appears, because the second quantized radiation field couples to the mean particle density of the many-body system while the source of the classical field is given by the density of the condensate wave function. In order to estimate the difference between the densities by the functional $\beta^a(t)$ we insert the identity $ 1 = p^{\varphi_t}_1 + q^{\varphi_t}_1$. 
\begin{align}
d_t \beta^b(t) 
= - 2 \Im  &\int d^3y \, \SCP{\Psi_{N,t}}{p^{\varphi_t}_1 \eta(x_1 - y) p^{\varphi_t}_1 \left( N^{-1/2} \Fp(y) - \Fpc(y,t) \right) \Psi_{N,t}} 
\nonumber \\
+ 2 \Im  &\int d^3y \, \SCP{\Psi_{N,t}}{ \left( \eta * \abs{\varphi_t}^2 \right)(y,t) \left( N^{-1/2} \Fp(y) - \Fpc(y,t) \right) \Psi_{N,t}}
\nonumber \\
-2  \Im  &\int d^3y \, \SCP{\Psi_{N,t}}{q^{\varphi_t}_1 \eta(x_1 - y) p^{\varphi_t}_1 \left( N^{-1/2} \Fp(y) - \Fpc(y,t) \right) \Psi_{N,t}}
\nonumber \\
-2  \Im  &\int d^3y \,  \SCP{\Psi_{N,t}}{ \eta(x_1 - y) q^{\varphi_t}_1  \left( N^{-1/2} \Fp(y) - \Fpc(y,t) \right) \Psi_{N,t}} .
\end{align}
The first two lines are the most important. They become small, because the mean particle density of the many-body system is approximately given by the density of the condensate wave function. From $\eta(-x) = \eta(x)$ we conclude
\begin{align}
p^{\varphi_t}_1 \eta(x_1-y) p^{\varphi_t}_1 
=& p^{\varphi_t}_1 \int d^3z \, \eta(z-y) \abs{\varphi_t}^2(z,t) = p^{\varphi_t}_1 \left( \eta * \abs{\varphi_t}^2 \right)(y,t)
\end{align}
and continue with
\begin{align}
d_t \beta^b(t) 
= -2 &\Im \int d^3y \, \SCP{\Psi_{N,t}}{(p^{\varphi_t}_1 -1 ) \left( \eta * \abs{\varphi_t}^2 \right)(y,t)  \left( N^{-1/2} \Fp(y) - \Fpc(y,t) \right) \Psi_{N,t}}  
\nonumber \\
-2  &\Im \SCP{\Psi_{N,t}}{q_1 \int d^3y \, \eta(x_1 - y)  \left( N^{-1/2} \Fp(y) - \Fpc(y,t) \right) p^{\varphi_t}_1 \Psi_{N,t}} 
\nonumber \\
-2  &\Im \SCP{\Psi_{N,t}}{ \int d^3y \, \eta(x_1 - y)  \left( N^{-1/2} \Fp(y) - \Fpc(y,t) \right) q^{\varphi_t}_1 \Psi_{N,t}}
\nonumber \\
\label{eq: Nelson beta-b-1}
= \; \; 2 &\Im \int d^3y \, \SCP{\Psi_{N,t}}{ q^{\varphi_t}_1 \left( \eta * \abs{\varphi_t}^2 \right)(y,t)  \left( N^{-1/2} \Fp(y) - \Fpc(y,t) \right) \Psi_{N,t}}  
\\
\label{eq: Nelson beta-b-2}
-2  &\Im \SCP{\Psi_{N,t}}{q^{\varphi_t}_1  \left( N^{-1/2} \vphp(x_1) - \vphpc(x_1,t) \right) p^{\varphi_t}_1 \Psi_{N,t}} 
\\
\label{eq: Nelson beta-b-3}
-2  &\Im \SCP{\Psi_{N,t}}{  \left( N^{-1/2} \vphp(x_1) - \vphpc(x_1,t) \right) q^{\varphi_t}_1 \Psi_{N,t}}. 
\end{align}
In the following, we estimate each line separately.
\begin{align}
\abs{\eqref{eq: Nelson beta-b-1}}
&\leq 2 \abs{ \int d^3y \, \SCP{\left( \eta * \abs{\varphi_t}^2 \right)(y,t) q^{\varphi_t}_1 \Psi_{N,t}}{\left( N^{-1/2} \Fp(y) - \Fpc(y,t) \right) \Psi_{N,t}}} 
\nonumber \\
&\leq \; \; \;  \int d^3y \, \SCP{q^{\varphi_t}_1 \Psi_{N,t}}{\abs{ \left(\eta * \abs{\varphi_t}^2 \right)(y,t)}^2 q^{\varphi_t}_1 \Psi_N}  
\nonumber\\
&+  \; \; \;   \int d^3y \, \norm{\left( N^{-1/2} \Fp(y) - \Fpc(y,t) \right) \Psi_{N,t}}^2 
\nonumber  \\
&\leq  \; \;\; \norm{ \eta * \abs{\varphi_t}^2}_2^2 \SCP{\Psi_{N,t}}{q^{\varphi_t}_1 \Psi_{N,t}} + \beta^b(t)
\leq C \Lambda^2 \beta(t).
\end{align}
Here we have used that
\begin{align}
\norm{ \eta * \abs{\varphi_t}^2}_2
\leq& \norm{\eta}_2 \norm{\abs{\varphi_t}^2}_1 = \norm{\eta}_2 \norm{\varphi_t}_2^2 = C \Lambda 
\end{align}
holds due to Young's inequality and \eqref{eq: Nelson cutoff functions norm}.
Lemma~\ref{lemma: Nelson field difference versus beta-b} leads to
\begin{align}
\abs{\eqref{eq: Nelson beta-b-2}}
\leq& 2 \abs{\SCP{q^{\varphi_t}_1 \Psi_N}{  \left( N^{-1/2} \vphp(x_1)  - \vphpc(x_1,t) \right) p^{\varphi_t}_1 \Psi_N}} 
\nonumber \\
\leq& \;  \norm{\left( N^{-1/2} \vphp(x_1)  - \vphpc(x_1,t) \right) p^{\varphi_t}_1 \Psi_N}^2
+ \norm{q^{\varphi_t}_1 \Psi_N}^2 
\leq C \Lambda^2 \beta(t)
\end{align}
and
\begin{align}
\abs{\eqref{eq: Nelson beta-b-3}} 
\leq& 2 \abs{ \SCP{\left( N^{-1/2} \vphm(x_1) - \vphmc(x_1,t) \right) \Psi_N}{   q^{\varphi_t}_1 \Psi_N}}
\nonumber \\
\leq& \; \norm{\left( N^{-1/2} \vphm(x_1) - \vphmc(x_1,t) \right) \Psi_N}^2
+ \norm{q^{\varphi_t}_1 \Psi_N}^2
\nonumber\\
\leq& C \Lambda^2 \left( \beta(t) + N^{-1} \right).
\end{align}
In total we have
\begin{align}
\abs{d_t \beta^b \left( t \right)}
\leq& C \Lambda^2 \left(  \beta \left( t \right) + N^{-1}  \right).
\end{align}
Now we can put the terms together to get
\begin{align}
d_t \beta \left( t \right) &\leq \abs{d_t \beta^a \left( t \right)} + \abs{d_t \beta^b \left( t \right)}
\leq C \Lambda^2 \left( \beta \left( t \right) + N^{-1}  \right).
\end{align}
Applying Gronwall's lemma proves
\begin{align}
\beta\left( t \right)
\leq& e^{C \Lambda^2 t} \left( \beta\left( 0 \right) + N^{-1} \right).
\end{align}

\end{proof}

\subsection{Control of the kinetic energy}
\label{subsec:: Control of the kinetic energy}

In order to prove the convergence of the one-particle reduced density matrix of the charges in Sobolev norm it is necessary to control the kinetic energy of the particles which are not in the condensate (see Section~\ref{sec: Nelson Relation to reduced density matrices}). To this end we include $\beta^c(\Psi_{N,t},\varphi_t) \coloneqq \norm{\nabla_1 q^{\varphi_t}_1 \Psi_{N,t}}^2$ in the definition of the functional and perform a Gronwall estimate for the redefined functional $\beta_2(\Psi_{N,t},\varphi_t, \alpha_t)$.

\begin{lemma}
\label{lemma: Nelson time derivative of beta-c preparation}
Let $(\varphi_t,\alpha_t) \in \mathcal{G}_2$ and $\Psi_{N,t}$ be the unique solution of \eqref{eq: Nelson Schroedinger equation microscopic} with initial data $ \Psi_{N,0} \in \left( L_s^2(\mathbb{R}^{3N})  \otimes \mathcal{F} \right) \cap \mathcal{D} \left( \mathcal{N} \right) \cap \mathcal{D} \left(  \mathcal{N}  H_N \right) \cap \mathcal{D}  \left( H_N^2  \right)$ such that $\norm{\Psi_{N,0}} =1$. 
Then
\begin{align}
\label{eq: Nelson time derivative of beta-c preparation}
d_t \beta^c(\Psi_{N,t},\varphi_t)
&= 2 \Im \SCP{p^{\varphi_t}_1 \left( N^{-1/2} \vph(x_1) - \vphc(x_1,t) \right) \Psi_{N,t}}{(- \Delta_1) q^{\varphi_t}_1 \Psi{N,t}}
\nonumber\\
&- 2 \Im \SCP{\left( N^{-1/2} \vph(x_1) - \vphc(x_1,t) \right) p^{\varphi_t}_1 \Psi_{N,t}}{ (- \Delta_1) q_1^{\varphi_t} \Psi_{N,t}}
\nonumber\\
&- 2 \Im \SCP{N^{-1/2} \vph(x_1) q^{\varphi_t}_1 \Psi_{N,t}}{(- \Delta_1) q^{\varphi_t}_1 \Psi_{N,t}} .
\end{align}
\end{lemma}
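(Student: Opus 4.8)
The plan is to differentiate the identity $\beta^c(\Psi_{N,t},\varphi_t)=\norm{\nabla_1 q_1^{\varphi_t}\Psi_{N,t}}^2=\SCP{\Psi_{N,t}}{q_1^{\varphi_t}(-\Delta_1)q_1^{\varphi_t}\Psi_{N,t}}$ in time, proceeding formally exactly as in the proof of Lemma~\ref{lemma: Nelson time derivative of beta preparation} and deferring, as there, the required invariance of the domains under the flow to \cite{leopold2} (now with $\mathcal{D}(H_N^2)$ also entering). Using $i\partial_t\Psi_{N,t}=H_N\Psi_{N,t}$, $d_t q_1^{\varphi_t}=-i[H_1^{eff},q_1^{\varphi_t}]$ with $H_1^{eff}=-\Delta_1+\vphc(x_1,t)$, and writing $B\coloneqq -\Delta_1$, $W\coloneqq N^{-1/2}\vph(x_1)-\vphc(x_1,t)$, one obtains $d_t\beta^c(t)=i\SCP{\Psi_{N,t}}{[H_N,q_1^{\varphi_t}Bq_1^{\varphi_t}]\Psi_{N,t}}-i\SCP{\Psi_{N,t}}{[H_1^{eff},q_1^{\varphi_t}]Bq_1^{\varphi_t}\Psi_{N,t}}-i\SCP{\Psi_{N,t}}{q_1^{\varphi_t}B[H_1^{eff},q_1^{\varphi_t}]\Psi_{N,t}}$.

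The first simplification is that everything in $H_N$ acting on the particles $2,\dots,N$ and on $\mathcal{F}$ commutes with $q_1^{\varphi_t}Bq_1^{\varphi_t}$, so that $[H_N,q_1^{\varphi_t}Bq_1^{\varphi_t}]=[H_1^{eff}+W,q_1^{\varphi_t}Bq_1^{\varphi_t}]$. Expanding $[H_1^{eff},q_1^{\varphi_t}Bq_1^{\varphi_t}]$ by the Leibniz rule, the two outer pieces $[H_1^{eff},q_1^{\varphi_t}]Bq_1^{\varphi_t}$ and $q_1^{\varphi_t}B[H_1^{eff},q_1^{\varphi_t}]$ cancel against the $d_t q_1^{\varphi_t}$-contributions, leaving $d_t\beta^c(t)=i\SCP{\Psi_{N,t}}{q_1^{\varphi_t}[H_1^{eff},B]q_1^{\varphi_t}\Psi_{N,t}}+i\SCP{\Psi_{N,t}}{[W,q_1^{\varphi_t}Bq_1^{\varphi_t}]\Psi_{N,t}}$. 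Since $[H_1^{eff},B]=[\vphc(x_1,t),B]$ and the middle term of the second commutator is $q_1^{\varphi_t}[W,B]q_1^{\varphi_t}=N^{-1/2}q_1^{\varphi_t}[\vph(x_1),B]q_1^{\varphi_t}-q_1^{\varphi_t}[\vphc(x_1,t),B]q_1^{\varphi_t}$, the classical-field pieces cancel and there remain the diagonal term $iN^{-1/2}\SCP{\Psi_{N,t}}{q_1^{\varphi_t}[\vph(x_1),B]q_1^{\varphi_t}\Psi_{N,t}}$ and the two off-diagonal terms $i\SCP{\Psi_{N,t}}{[W,q_1^{\varphi_t}]Bq_1^{\varphi_t}\Psi_{N,t}}+i\SCP{\Psi_{N,t}}{q_1^{\varphi_t}B[W,q_1^{\varphi_t}]\Psi_{N,t}}$.

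It then remains to cast these into the stated form. Because $\vph(x_1)$, $\vphc(x_1,t)$, $B$, $p_1^{\varphi_t}$ and $q_1^{\varphi_t}$ are self-adjoint, one has $(q_1^{\varphi_t}B[W,q_1^{\varphi_t}])^*=-[W,q_1^{\varphi_t}]Bq_1^{\varphi_t}$ and $(q_1^{\varphi_t}\vph(x_1)Bq_1^{\varphi_t})^*=q_1^{\varphi_t}B\vph(x_1)q_1^{\varphi_t}$, so the off-diagonal pair collapses to $-2\Im\SCP{\Psi_{N,t}}{[W,q_1^{\varphi_t}]Bq_1^{\varphi_t}\Psi_{N,t}}$ and the diagonal term to $-2\Im\SCP{N^{-1/2}\vph(x_1)q_1^{\varphi_t}\Psi_{N,t}}{(-\Delta_1)q_1^{\varphi_t}\Psi_{N,t}}$, the third summand of \eqref{eq: Nelson time derivative of beta-c preparation}. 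Finally a short computation with $1=p_1^{\varphi_t}+q_1^{\varphi_t}$ gives $[W,q_1^{\varphi_t}]=p_1^{\varphi_t}Wq_1^{\varphi_t}-q_1^{\varphi_t}Wp_1^{\varphi_t}$, and combining this with $q_1^{\varphi_t}Bq_1^{\varphi_t}=Bq_1^{\varphi_t}-p_1^{\varphi_t}Bq_1^{\varphi_t}$ and $p_1^{\varphi_t}W(p_1^{\varphi_t}+q_1^{\varphi_t})=p_1^{\varphi_t}W$ reduces $\SCP{\Psi_{N,t}}{[W,q_1^{\varphi_t}]Bq_1^{\varphi_t}\Psi_{N,t}}$ to $\SCP{Wp_1^{\varphi_t}\Psi_{N,t}}{(-\Delta_1)q_1^{\varphi_t}\Psi_{N,t}}-\SCP{p_1^{\varphi_t}W\Psi_{N,t}}{(-\Delta_1)q_1^{\varphi_t}\Psi_{N,t}}$, which produces the remaining two summands.

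The main obstacle is not this bookkeeping but the analytic justification: $\vph(x_1)$ is unbounded, and the commutator $[\vph(x_1),-\Delta_1]$ produces a term carrying an extra factor of $k$ under the integral, so that its $L^2$-weight is finite only thanks to the ultraviolet cutoff $\Lambda$; moreover one must know that $\Psi_{N,t}$ stays in $\mathcal{D}(H_N^2)\cap\mathcal{D}(\mathcal{N}H_N)$ so that $(-\Delta_1)q_1^{\varphi_t}\Psi_{N,t}$, $\vph(x_1)q_1^{\varphi_t}\Psi_{N,t}$ and the double commutators are meaningful and $d_t$ may be interchanged with the inner product. As in Lemma~\ref{lemma: Nelson time derivative of beta preparation}, I would isolate the formal core here and refer the domain-invariance argument — which the smoothing of the cutoff makes routine — to the companion reference \cite{leopold2}.
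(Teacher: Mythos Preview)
Your proof is correct and follows essentially the same route as the paper: differentiate $\beta^c(t)=\SCP{\Psi_{N,t}}{q_1^{\varphi_t}(-\Delta_1)q_1^{\varphi_t}\Psi_{N,t}}$, drop the contributions from particles $2,\dots,N$ and from $H_f$, and then algebraically rearrange the remaining particle-$1$ terms into the three stated summands. The only organizational difference is that you eliminate the irrelevant terms by observing that they commute with $q_1^{\varphi_t}(-\Delta_1)q_1^{\varphi_t}$ and then run everything through the Leibniz rule for commutators, whereas the paper keeps the expression in the form $-2\Im\SCP{q_1^{\varphi_t}H_N\Psi_{N,t}}{(-\Delta_1)q_1^{\varphi_t}\Psi_{N,t}}$, discards the same terms by noting they are real, and reaches the final form via the identity $q_1^{\varphi_t}\mathcal{O}=\mathcal{O}p_1^{\varphi_t}+\mathcal{O}q_1^{\varphi_t}-p_1^{\varphi_t}\mathcal{O}$; these are equivalent bookkeeping devices, and both defer the domain-invariance justification to \cite{leopold2}.
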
 

\begin{proof}
We infer $\Psi_{N,t} \in \left( L_s^2(\mathbb{R}^{3N})  \otimes \mathcal{F} \right) \cap \mathcal{D} \left( \mathcal{N} \right) \cap \mathcal{D} \left(  \mathcal{N}  H_N \right) \cap \mathcal{D}  \left( H_N^2  \right)$ for all $t \in \mathbb{R}_0^+$ from  $\Psi_{N,0} \in \left( L_s^2(\mathbb{R}^{3N})  \otimes \mathcal{F} \right) \cap \mathcal{D} \left( \mathcal{N} \right) \cap \mathcal{D} \left(  \mathcal{N}  H_N \right) \cap \mathcal{D}  \left( H_N^2  \right)$ 
by Stone's Theorem and the invariance of $\mathcal{D} \left( \mathcal{N} \right) \cap \mathcal{D} \left(  \mathcal{N}  H_N \right)$ during the time evolution (see \cite[Appendix 2.11]{leopold2}).
 This ensures that the following expressions are well defined. The derivative of $\beta^c(t)$ is determined by
\begin{align}
d_t \beta^c(t) 
&= \qquad \, i \SCP{q^{\varphi_t}_1  H_N \Psi_{N,t}}{ (- \Delta_1) q^{\varphi_t}_1 \Psi_{N,t}}
 \qquad \;  - i \SCP{q^{\varphi_t}_1 \Psi_{N,t}}{(- \Delta_1) q^{\varphi_t}_1  H_N \Psi_{N,t}}
\nonumber\\
& \qquad \; + i \SCP{ \left[ H_1^{eff}, q^{\varphi_t}_1 \right] \Psi_{N,t}}{ (- \Delta_1) q^{\varphi_t}_1 \Psi_{N,t}}
- i \SCP{q^{\varphi_t}_1 \Psi_{N,t}}{(- \Delta_1)  \left[ H_1^{eff}, q^{\varphi_t}_1  \right] \Psi_{N,t}}
\nonumber\\
&= \qquad \,  i \SCP{q^{\varphi_t}_1  H_N \Psi_{N,t}}{ (- \Delta_1) q^{\varphi_t}_1 \Psi_{N,t}}
 \qquad \;   - i \SCP{(- \Delta_1) q^{\varphi_t}_1 \Psi_{N,t}}{ q^{\varphi_t}_1  H_N \Psi_{N,t}}
\nonumber\\
& \qquad \;  +i  \SCP{ \left[ H_1^{eff}, q^{\varphi_t}_1 \right] \Psi_{N,t}}{ (- \Delta_1) q^{\varphi_t}_1 \Psi_{N,t}}
-i  \SCP{(- \Delta_1) q^{\varphi_t}_1 \Psi_{N,t}}{  \left[ H_1^{eff}, q^{\varphi_t}_1  \right] \Psi_{N,t}}
\nonumber\\
&= -2 \Im \SCP{q^{\varphi_t}_1 H_N \Psi_{N,t}}{ (- \Delta_1) q^{\varphi_t}_1 \Psi_{N,t}}
\nonumber\\
 & \; \; \, \,  -2 \Im \SCP{\left[ H_1^{eff}, q^{\varphi_t}_1 \right] \Psi_{N,t}}{ (- \Delta_1) q^{\varphi_t}_1 \Psi_{N,t}}.
\end{align}
Since
$
\SCP{q^{\varphi_t}_1 \left( - \Delta_i + N^{-1/2} \vph(x_i) \right) \Psi_{N,t}}{ (- \Delta_1 ) q^{\varphi_t}_1 \Psi_{N,t}} 
$ and $\SCP{q^{\varphi_t}_1 H_f \Psi_{N,t}}{(- \Delta_1) q^{\varphi_t}_1 \Psi_{N,t}}$ are real numbers for $i \in \{ 2,3, \ldots, N \}$ this becomes
\begin{align}
d_t \beta^c(t)
= &- 2 \Im \SCP{q^{\varphi_t}_1 \left( - \Delta_1 + N^{-1/2} \vph(x_1)  \right) \Psi_{N,t}}{ (- \Delta_1) q^{\varphi_t}_1 \Psi_{N,t}}
\nonumber\\
&+ 2 \Im \SCP{q^{\varphi_t}_1 H_1^{eff} \Psi_{N,t}}{(- \Delta_1) q^{\varphi_t}_1 \Psi_{N,t}}
\nonumber\\
&- 2 \Im \SCP{H_1^{eff} q^{\varphi_t}_1  \Psi_{N,t}}{(- \Delta_1) q^{\varphi_t}_1 \Psi_{N,t}}
\nonumber\\
= &- 2 \Im \SCP{q^{\varphi_t}_1 \left( N^{-1/2} \vph(x_1) - \vphc(x_1,t) \right) \Psi_{N,t}}{(- \Delta_1) q^{\varphi_t}_1 \Psi_{N,t}}
\nonumber \\
&- 2 \Im \SCP{\vphc(x_1,t) q^{\varphi_t}_1 \Psi_{N,t}}{ (- \Delta_1) q^{\varphi_t}_1 \Psi_{N,t}}
\nonumber \\
&- 2 \Im \norm{(- \Delta_1) q_1^{\varphi_t} \Psi_{N,t}}^2 
\nonumber\\
= &- 2 \Im \SCP{q^{\varphi_t}_1 \left( N^{-1/2} \vph(x_1) - \vphc(x_1,t) \right) \Psi_{N,t}}{(- \Delta_1) q^{\varphi_t}_1 \Psi_{N,t}}
\nonumber \\
&- 2 \Im \SCP{\vphc(x_1,t) q^{\varphi_t}_1 \Psi_{N,t}}{ (- \Delta_1) q^{\varphi_t}_1 \Psi_{N,t}}.
\end{align}
The identity $q^{\varphi_t}_1 \mathcal{O} = \mathcal{O}p^{\varphi_t}_1 + \mathcal{O} q^{\varphi_t}_1 -  p^{\varphi_t}_1 \mathcal{O}$ (for any operator $\mathcal{O}$) and 
\begin{align}
- \SCP{\vphc(x_1,t) q^{\varphi_t}_1 \Psi_{N,t}}{( - \Delta_1) q^{\varphi_t}_1 \Psi_{N,t}}
&=  \SCP{\left( N^{-1/2} \vph(x_1) - \vphc(x_1,t) \right) q^{\varphi_t}_1 \Psi_{N,t}}{(- \Delta_1) q^{\varphi_t}_1 \Psi_{N,t}}
\nonumber\\
&- \SCP{N^{-1/2} \vph(x_1) q^{\varphi_t}_1 \Psi_{N,t}}{ (- \Delta_1) q^{\varphi_t}_1 \Psi_{N,t}}
\end{align}
lead to
\begin{align}
d_t \beta^c(t) 
\label{eq: Nelson beta-c-1}
&= 2 \Im \SCP{p^{\varphi_t}_1 \left( N^{-1/2} \vph(x_1) - \vphc(x_1,t) \right) \Psi_{N,t}}{(- \Delta_1) q^{\varphi_t}_1 \Psi_{N,t}}
\\ 
\label{eq: Nelson beta-c-2}
&- 2 \Im \SCP{ \left( N^{-1/2} \vph(x_1) - \vphc(x_1,t) \right) p^{\varphi_t}_1 \Psi_{N,t}}{(- \Delta_1) q^{\varphi_t}_1 \Psi_{N,t}}
\\
\label{eq: Nelson beta-c-3}
&- 2 \Im \SCP{ N^{-1/2} \vph(x_1) q^{\varphi_t}_1 \Psi_{N,t}}{ (- \Delta_1) q^{\varphi_t}_1 \Psi_{N,t}}.
\end{align}

\end{proof}

\begin{lemma}
\label{lemma: Nelson time derivative of beta-2}
Let $(\varphi_t,\alpha_t) \in \mathcal{G}_2$ and $\Psi_{N,t}$ be the unique solution of \eqref{eq: Nelson Schroedinger equation microscopic} with initial data $ \Psi_{N,0} \in \left( L_s^2(\mathbb{R}^{3N})  \otimes \mathcal{F} \right) \cap \mathcal{D} \left( \mathcal{N} \right) \cap \mathcal{D} \left(  \mathcal{N}  H_N \right) \cap \mathcal{D}  \left( H_N^2  \right)$ such that $\norm{\Psi_{N,0}} =1$. 
Then, there exists a positive monotone increasing function $C(s)$ of the norms $\norm{\alpha_s}_{L^2(\mathbb{R}^3)}$ and $\norm{\varphi_s}_{H^1(\mathbb{R}^3)}$ such that
\begin{align}
\label{eq: Nelson time derivative of beta-2}
\abs{d_t \beta_2\left( \Psi_{N,t}, \varphi_t, \alpha_t \right)  }
\leq&   \Lambda^4 C(t) \left( \beta_2\left( \Psi_{N,t}, \varphi_t, \alpha_t \right) + N^{-1} \right), 
\nonumber\\
\beta_2\left( \Psi_{N,t}, \varphi_t, \alpha_t \right)
\leq& e^{\Lambda^4 \int_0^t C(s) ds} \left( \beta_2\left( \Psi_{N,0}, \varphi_0, \alpha_0 \right) + N^{-1}  \right)
\end{align}
hold for any $t \in \mathbb{R}_0^+$.
\end{lemma}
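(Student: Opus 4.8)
The plan is to run a single Gr\"onwall estimate for $\beta_2 = \beta + \beta^c$. The $\beta$-contribution is already controlled, by Lemma~\ref{lemma: Nelson time derivative of beta}: $\abs{d_t\beta(t)} \leq C\Lambda^2(\beta(t)+N^{-1}) \leq C\Lambda^2(\beta_2(t)+N^{-1})$. So the only new task is to control $\abs{d_t\beta^c(t)}$ by $\Lambda^4 C(t)(\beta_2(t)+N^{-1})$ with $C(t)$ monotone in $\norm{\alpha_t}_{L^2(\mathbb{R}^3)}$ and $\norm{\varphi_t}_{H^1(\mathbb{R}^3)}$. For this I would start from the three-term identity \eqref{eq: Nelson time derivative of beta-c preparation} of Lemma~\ref{lemma: Nelson time derivative of beta-c preparation}, i.e.\ the lines \eqref{eq: Nelson beta-c-1}, \eqref{eq: Nelson beta-c-2} and \eqref{eq: Nelson beta-c-3}, and estimate each separately.

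The common mechanism is to write $-\Delta_1 = \nabla_1^*\nabla_1$ and move one gradient onto the left entry of each inner product; this always produces a factor $\norm{\nabla_1 q_1^{\varphi_t}\Psi_{N,t}} = \sqrt{\beta^c(t)}$ (to be absorbed through $ab\leq 1/2(a^2+b^2)$), and the remaining factor is the $L^2$-norm of $\nabla_1$ applied to a field, or a field-difference, times a projector times $\Psi_{N,t}$. To control these I would establish a gradient analogue of Lemma~\ref{lemma: Nelson field difference versus beta-b}: running its proof with the auxiliary fields replaced by their $x$-derivatives amounts to replacing the form factor $\tilde\eta(k)$ by $k\,\tilde\eta(k)$, and since $\abs{k}\leq\Lambda$ one has $\norm{\,\abs{k}\,\tilde\eta\,}_2^2 \leq \Lambda^2\norm{\eta}_2^2 \leq C\Lambda^4$ by \eqref{eq: Nelson cutoff functions norm} — this is precisely where the power $\Lambda^2$ gets upgraded to $\Lambda^4$. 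One thereby obtains, e.g., $\norm{(N^{-1/2}\nabla_1\vph(x_1) - \nabla_1\vphc(x_1,t))\Psi_{N,t}}^2 \leq C\Lambda^4(\beta^b(t)+N^{-1})$ together with the analogues carrying a $p_1^{\varphi_t}$ on the right. Lines \eqref{eq: Nelson beta-c-1} and \eqref{eq: Nelson beta-c-2} are then routine: after the integration by parts one uses $\nabla_1 p_1^{\varphi_t} = \ket{\nabla\varphi_t}\bra{\varphi_t}_1$ — and, for \eqref{eq: Nelson beta-c-2}, the Leibniz rule $\nabla_1(A\,p_1^{\varphi_t}) = (\nabla_1 A)\,p_1^{\varphi_t} + A\,\nabla_1 p_1^{\varphi_t}$ for the field-difference $A = N^{-1/2}\vph(x_1)-\vphc(x_1,t)$ — then the gradient estimate, Cauchy--Schwarz and $ab\leq 1/2(a^2+b^2)$, absorbing the resulting $\norm{\varphi_t}_{H^1(\mathbb{R}^3)}$-factors into $C(t)$.

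The delicate term is \eqref{eq: Nelson beta-c-3}, $-2\,\Im\SCP{N^{-1/2}\vph(x_1)q_1^{\varphi_t}\Psi_{N,t}}{(-\Delta_1)q_1^{\varphi_t}\Psi_{N,t}}$, because it involves the \emph{full} field and not a field-difference. After the same integration by parts it splits into $-2\,\Im\SCP{N^{-1/2}(\nabla_1\vph)(x_1)q_1^{\varphi_t}\Psi_{N,t}}{\nabla_1 q_1^{\varphi_t}\Psi_{N,t}}$ and $-2\,\Im\SCP{N^{-1/2}\vph(x_1)\nabla_1 q_1^{\varphi_t}\Psi_{N,t}}{\nabla_1 q_1^{\varphi_t}\Psi_{N,t}}$; the second summand simply drops, because $\vph(x_1)$ is self-adjoint, so $\SCP{\vph(x_1)g}{g}$ is real for $g = \nabla_1 q_1^{\varphi_t}\Psi_{N,t}$ (componentwise) and $\Im$ annihilates it. For the first summand I would split $N^{-1/2}(\nabla_1\vph)(x_1) = (N^{-1/2}(\nabla_1\vph)(x_1)-(\nabla_1\vphc)(x_1,t)) + (\nabla_1\vphc)(x_1,t)$: the difference is handled by the gradient estimate above, whereas $(\nabla_1\vphc)(x_1,t)$ is a bounded multiplication operator with $\norm{\nabla\vphc(\cdot,t)}_{\infty} \leq C\Lambda^2\norm{\alpha_t}_{L^2(\mathbb{R}^3)}$, so that $\norm{(\nabla_1\vphc)(x_1,t)q_1^{\varphi_t}\Psi_{N,t}}^2 \leq C\Lambda^4\norm{\alpha_t}_{L^2(\mathbb{R}^3)}^2\,\beta^a(t)$ — here the projector $q_1^{\varphi_t}$ supplies the required smallness $\norm{q_1^{\varphi_t}\Psi_{N,t}}^2 = \beta^a(t)$. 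Both noticing the self-adjointness cancellation and tracking the $\Lambda^4$-dependence are, I expect, the main obstacle here.

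Collecting the three estimates yields $\abs{d_t\beta^c(t)} \leq \Lambda^4 C(t)(\beta_2(t)+N^{-1})$ with $C(t)$ a positive monotone increasing function of $\norm{\alpha_t}_{L^2(\mathbb{R}^3)}$ and $\norm{\varphi_t}_{H^1(\mathbb{R}^3)}$; adding the (smaller) bound for $\abs{d_t\beta(t)}$ gives the first inequality of \eqref{eq: Nelson time derivative of beta-2}, and Gr\"onwall's lemma applied to $t\mapsto\beta_2(t)+N^{-1}$ gives the second. As in the previous lemmas, a rigorous treatment invokes the domain-invariance statement of \cite[Appendix 2.11]{leopold2} so that $\beta^c(t)$ and its time derivative are well defined, exactly as in the proof of Lemma~\ref{lemma: Nelson time derivative of beta-c preparation}.
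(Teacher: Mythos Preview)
Your proposal is correct and follows essentially the same route as the paper's proof: the paper too establishes the gradient analogue of Lemma~\ref{lemma: Nelson field difference versus beta-b} (via the form factor $\tilde\Theta(k)=k\,\tilde\eta(k)$ with $\sum_i\norm{\Theta^i}_2^2\leq\Lambda^4/(16\pi^2)$), integrates by parts in each of \eqref{eq: Nelson beta-c-1}--\eqref{eq: Nelson beta-c-3}, uses $\norm{\nabla_1 p_1^{\varphi_t}}_{\op}\leq\norm{\nabla\varphi_t}_2$ and the Leibniz splitting for \eqref{eq: Nelson beta-c-2}, drops the self-adjoint piece of \eqref{eq: Nelson beta-c-3} and handles the remainder by the add--subtract of $(\nabla\vphc)(x_1,t)$ together with $\norm{(\nabla\vphc)(\cdot,t)}_\infty\leq C\Lambda^2\norm{\alpha_t}_2$, and then combines with Lemma~\ref{lemma: Nelson time derivative of beta} and Gr\"onwall. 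Nothing substantive differs.
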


\begin{proof}
In order to estimate $d_t \beta^c(t)$ by $\beta$ and $\norm{\nabla_1 q_1^{\varphi_t} \Psi_{N,t}}$ we will integrate by parts and apply Schwarz's inequality. The gradiant will hereby occasionally act on the radiation fields, which will give rise to the vector fields
\begin{align}
( \nabla \vph )(x) = \int d^3k \, \tilde{\eta}(k) k  i 
\left( e^{ikx} a(k) - e^{-ikx} a^*(k)  \right),
\nonumber\\
(\nabla \vphc)(x,t) = \int d^3k \, \tilde{\eta}(k) k i 
\left( e^{ikx} \alpha_t(k) - e^{-ikx} \alpha^*_t(k)  \right).
\end{align}
We define the vector field  $\tilde{\Theta}(k) \coloneqq \tilde{\eta}(k) k$ and its Fourier transform $\Theta$ with
$ \sum_{i=1}^3 ||\Theta^i ||_2^2 \leq \Lambda^4/(16 \pi^2)$. This allows us to obtain the relation
\begin{equation}
( \nabla \vphp )(x) = i \left( \Theta * \Fp  \right)(x),
\quad
( \nabla \vphpc )(x,t) = i \left( \Theta * \Fpc  \right)(x)
\end{equation}
between the positive frequency part of the vector fields and the auxiliary fields~\eqref{eq: Nelson auxilliary fields}.
In analogy to Lemma~\ref{lemma: Nelson field difference versus beta-b} one proves the estimates
\begin{align}
\label{eq: Nelson nabla feld q1 versus beta estimate}
\norm{\left( N^{-1/2} (\nabla \vph )(x_1)  - \left( \nabla \vphc \right)(x_1,t) \right) p_1 \Psi_N}^2
\leq& C \Lambda^4 \left( \beta^b(t) + N^{-1} \right),  
\nonumber \\
\norm{\left( N^{-1/2} (\nabla \vph )(x_1) - \left( \nabla \vphc \right)(x_1,t) \right) q_1 \Psi_N}^2
\leq& C \Lambda^4 \left( \beta^b(t) + N^{-1} \right),  
\nonumber \\
\norm{\left( N^{-1/2} \vph(x_1) - \vphc(x_1,t) \right) \nabla_1 p_1 \Psi_N}^2
\leq& C \Lambda^2 \norm{\nabla \varphi}^2_2 \left( \beta^b(t) + N^{-1} \right).
\end{align}

\noindent
The first term of $d_t \beta^c(t)$ is estimated by
\begin{align}
\abs{\eqref{eq: Nelson beta-c-1}}
&\leq 2  \abs{\SCP{ p^{\varphi_t}_1 \left( N^{-1/2} \vph(x_1) - \vphc(x_1,t) \right) \Psi_{N,t}}{ (- \Delta_1) q^{\varphi_t}_1 \Psi_{N,t}}} 
\nonumber \\
&= 2  \abs{\SCP{\nabla_1 p^{\varphi_t}_1 \left( N^{-1/2} \vph(x_1) - \vphc(x_1,t) \right) \Psi_{N,t}}{ \nabla_1 q^{\varphi_t}_1 \Psi_{N,t}}} 
\nonumber \\
&\leq  \norm{\nabla_1 p_1 \left( N^{-1/2} \vph(x_1) - \vphc(x_1,t) \right) \Psi_N}^2 + \norm{\nabla_1 q_1 \Psi_N}^2 
\nonumber \\
&\leq \norm{\nabla \varphi_t}^2
\norm{\left( N^{-1/2} \vph(x_1) - \vphc(x_1,t) \right) \Psi_N}^2
+ \norm{\nabla_1 q_1 \Psi_N}^2.
\end{align}
Lemma~\ref{lemma: Nelson field difference versus beta-b} gives rise to
\begin{align}
\abs{\eqref{eq: Nelson beta-c-1}}
&\leq C \Lambda^2 \norm{\nabla \varphi_t}^2 \left( \beta^b + N^{-1} \right) + \norm{\nabla_1 q_1 \Psi_N}^2
\nonumber\\
&\leq   \Lambda^2  C(\norm{\varphi_t}_{H^1}) \left( \beta_2(t) + N^{-1} \right).
\end{align}
Likewise, we estimate
\begin{align}
\abs{\eqref{eq: Nelson beta-c-2}}
\leq& 2 \abs{\SCP{ \left( N^{-1/2} \vph(x_1) - \vphc(x_1,t) \right) p^{\varphi_t}_1 \Psi_{N,t}}{ (- \Delta_1) q^{\varphi_t}_1 \Psi_{N,t}}} 
\nonumber \\
=& 2 \abs{\SCP{\nabla_1 \left( N^{-1/2} \vph(x_1)  - \vphc(x_1,t) \right) p_1 \Psi_N}{ \nabla_1 q_1 \Psi_N}} 
\nonumber \\
\leq& \norm{\nabla_1 \left( N^{-1/2} \vph(x_1)  - \vphc(x_1,t) \right) p_1 \Psi_N}^2
+ \norm{\nabla_1 q_1 \Psi_N}^2.
\end{align}
Due to triangular inequality, $\left(a +b \right)^2 \leq 2 \left(a^2 + b^2 \right)$ and \eqref{eq: Nelson nabla feld q1 versus beta estimate}
 this becomes
\begin{align}
\abs{\eqref{eq: Nelson beta-c-2}}
&\leq 2 \norm{\left( N^{-1/2} \vph(x_1)  - \vphc(x_1,t) \right) \nabla_1 p_1 \Psi_N}^2 
\nonumber \\
&+ 2 \norm{\left( N^{-1/2} (\nabla \vph )(x_1) -  ( \nabla \vphc \right)(x_1) ) p_1 \Psi_N}^2 + \norm{\nabla_1 q_1 \Psi_N}^2
\nonumber\\
&\leq  \Lambda^4 C(\norm{\varphi_t}_{H^1}) \left(\beta_2(t) + N^{-1} \right) .
\end{align}
Next, we consider line
\begin{align}
\eqref{eq: Nelson beta-c-3}
= &- 2 \Im \SCP{\nabla_1  N^{-1/2} \vph(x_1) q^{\varphi_t}_1 \Psi_{N,t}}{ \nabla_1 q^{\varphi_t}_1 \Psi_{N,t}}
\nonumber\\
= &- 2 \Im \SCP{  N^{-1/2}  (\nabla \vph)(x_1) q^{\varphi_t}_1 \Psi_{N,t}}{ \nabla_1 q^{\varphi_t}_1 \Psi_{N,t}}
\nonumber\\
& - 2 \Im \SCP{  N^{-1/2} \vph(x_1) \nabla_1 q^{\varphi_t}_1 \Psi_{N,t}}{ \nabla_1 q^{\varphi_t}_1 \Psi_{N,t}}.
\end{align}
The scalar product in the last line is easily shown to be real. This yields
\begin{align}
\eqref{eq: Nelson beta-c-3}
= &- 2 \Im \SCP{  N^{-1/2}  (\nabla \vph)(x_1) q^{\varphi_t}_1 \Psi_{N,t}}{ \nabla_1 q^{\varphi_t}_1 \Psi_{N,t}}
\nonumber\\
= &- 2 \Im \SCP{ \left( N^{-1/2}  (\nabla \vph)(x_1) - (\nabla \vphc)(x_1,t)  \right) q^{\varphi_t}_1 \Psi_{N,t}}{ \nabla_1 q^{\varphi_t}_1 \Psi_{N,t}}
\nonumber\\
&-2  \Im \SCP{   (\nabla \vphc)(x_1,t) q^{\varphi_t}_1 \Psi_{N,t}}{ \nabla_1 q^{\varphi_t}_1 \Psi_{N,t}}.
\end{align}
and allows us to estimate
\begin{align}
\abs{\eqref{eq: Nelson beta-c-3}}
&\leq 2 \abs{\SCP{ \left( N^{-1/2}  (\nabla \vph)(x_1) - (\nabla \vphc)(x_1,t)  \right) q^{\varphi_t}_1 \Psi_{N,t}}{ \nabla_1 q^{\varphi_t}_1 \Psi_{N,t}}}  
\nonumber\\
&+ 2 \abs{\SCP{   (\nabla \vphc)(x_1,t) q^{\varphi_t}_1 \Psi_{N,t}}{ \nabla_1 q^{\varphi_t}_1 \Psi_{N,t}}}
\nonumber \\
&\leq \norm{ \left( N^{-1/2}  (\nabla \vph)(x_1) - (\nabla \vphc)(x_1,t)  \right) q^{\varphi_t}_1 \Psi_{N,t}}^2
+ \norm{ (\nabla \vphc)(x_1,t) q^{\varphi_t}_1 \Psi_{N,t}}^2
\nonumber\\
&+ 2 \norm{\nabla_1 q^{\varphi_t}_1 \Psi_{N,t}}^2
\leq C \Lambda^4 \left( \beta^b(t) + N^{-1} \right) + C \Lambda^4 \norm{\alpha_t}_2^2 \beta^a(t) + 2 \beta^c(t)
\nonumber\\
&\leq  \Lambda^4 C(\norm{\alpha_t}_2)  \left( \beta_2(t) + N^{-1} \right).
\end{align}
Here, we used \eqref{eq: Nelson nabla feld q1 versus beta estimate} and the fact that
\begin{align}
\norm{\left( \nabla \vphc \right)(\cdot,t)}_{\infty} \leq C \Lambda^2 \norm{\alpha_t}_2
\end{align}
holds because of Schwarz's inequality.
In total, we have
\begin{align}
\abs{d_t \beta^c(t)} 
\leq&  \Lambda^4 C( \norm{\varphi_t}_{H^1}, \norm{\alpha_t} ) \left( \beta_2 + N^{-1} \right).
\end{align}
With Lemma~\ref{lemma: Nelson time derivative of beta} this  implies
\begin{align}
\label{eq: groenwall beta-2-a}
\abs{d_t \beta_2\left[ \Psi_{N,t}, \varphi_t, \alpha_t \right]  }
\leq&   \Lambda^4 C(\norm{\varphi_t}_{H^1}, \norm{\alpha_t}) \left( \beta_2\left[ \Psi_{N,t}, \varphi_t, \alpha_t \right] + N^{-1} \right)
\end{align}
Using the shorthand notation $C(t) \coloneqq C(\norm{\varphi_t}_{H^1}, \norm{\alpha_t} )$ we obtain
\begin{align}
\label{eq: groenwall beta-2-b}
\beta_2\left[ \Psi_{N,t}, \varphi_t, \alpha_t \right]
\leq& e^{\Lambda^4 \int_0^t C(s) ds} \left( \beta_2\left[ \Psi_{N,0}, \varphi_0, \alpha_0 \right] + N^{-1}  \right)
\end{align}
by means of Gronwall's lemma. 

\end{proof}

\section{Initial states}
\label{section Nelson initial states}
Subsequently, we are concerned with the initial states of Theorem~\ref{theorem: Nelson main theorem}. 
\begin{lemma}
\label{lemma: Nelson initial states}
Let $\Psi_{N,0} \in \left( L_s^2(\mathbb{R}^{3N}) \otimes \mathcal{F} \right) \cap \mathcal{D} \left( \mathcal{N} \right)$ with $\norm{\Psi_{N,0}} = 1$ and $(\varphi_0,\alpha_0) \in L^2(\mathbb{R}^3) \oplus L^2(\mathbb{R}^3)$ with $\norm{\varphi_0}= 1$. Then
\begin{align}
\beta^a(\Psi_{N,0}, \varphi_0) &\leq  \text{Tr}_{L^2(\mathbb{R}^3)} \abs{\gamma_{N,0}^{(1,0)} - \ket{\varphi_0} \bra{\varphi_0}} = a_N ,
\nonumber \\
\beta^b(\Psi_{N,0},\alpha_0) &= N^{-1} \SCP{ W^{-1}(\sqrt{N \alpha_0}) \Psi_{N,0}}{\mathcal{N} W^{-1}(\sqrt{N \alpha_0}) \Psi_{N,0} }  = b_N .
\end{align}
\end{lemma}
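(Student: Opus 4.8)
The two assertions are independent and each reduces almost immediately to something already in hand, so the plan is to treat them separately.

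For the first line, note that $a_N$ was defined in Theorem~\ref{theorem: Nelson main theorem} as $\tr_{L^2(\mathbb{R}^3)}\abs{\gamma_{N,0}^{(1,0)} - \ketbr{\varphi_0}}$, so the stated equality is simply this definition unfolded. The stated inequality $\beta^a(\Psi_{N,0},\varphi_0) \le a_N$ is exactly the lower bound of \eqref{eq: Nelson relation between beta and reduced density matrices 1} in Lemma~\ref{lemma: Nelson relation between beta and reduced density matrices}, evaluated at time $t=0$; the hypotheses imposed there --- $\varphi_0 \in L^2(\mathbb{R}^3)$ with $\norm{\varphi_0}=1$ and $\Psi_{N,0}\in\mathcal{H}^{(N)}\cap\mathcal{D}(\mathcal{N})$ --- are all included in the hypotheses of the present lemma, so nothing further is required.

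For the second line I would use the Weyl shift relation. Writing $W\coloneqq W(\sqrt N\alpha_0)$, the canonical commutation relations \eqref{eq: Nelson canonical commutation relation} give, as an identity of operator-valued distributions on $\mathcal{D}(\mathcal{N}^{1/2})$,
\[
a(k)\,W^{-1} \;=\; W^{-1}\big(a(k) - \sqrt N\,\alpha_0(k)\big).
\]
Since $\alpha_0 \in L^2(\mathbb{R}^3)$, conjugation by the Weyl operator leaves $\mathcal{D}(\mathcal{N})$ invariant, so $W^{-1}\Psi_{N,0}\in\mathcal{D}(\mathcal{N})$ and applying the displayed identity to $\Psi_{N,0}$ yields $a(k)\,W^{-1}\Psi_{N,0} = \sqrt N\,W^{-1}\big(N^{-1/2}a(k)-\alpha_0(k)\big)\Psi_{N,0}$. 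Now inserting $\mathcal{N}=\int d^3k\,a^*(k)a(k)$, using that $W^{-1}$ is unitary, and exchanging the $k$-integral with the scalar product (legitimate by Tonelli for the nonnegative integrand, since $\SCP{\Psi_{N,0}}{\mathcal{N}\Psi_{N,0}}<\infty$),
\begin{align*}
N^{-1}\SCP{W^{-1}\Psi_{N,0}}{\mathcal{N}\,W^{-1}\Psi_{N,0}}
&= N^{-1}\int d^3k\,\norm{a(k)\,W^{-1}\Psi_{N,0}}^2 \\
&= \int d^3k\,\norm{\big(N^{-1/2}a(k)-\alpha_0(k)\big)\Psi_{N,0}}^2 \;=\; \beta^b(\Psi_{N,0},\alpha_0),
\end{align*}
the last step being the definition of $\beta^b$; and the left-hand side is $b_N$ by the statement of Theorem~\ref{theorem: Nelson main theorem}. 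Equivalently, this is \eqref{eq: Nelson fluctuation dynamics and beta-b} specialized to $t=0$, where $\mathcal{U}_N(0;0)=\id$, applied with $\Psi = W^{-1}(\sqrt N\alpha_0)\Psi_{N,0}$.

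There is no genuine obstacle here; the only care needed lies in the operator-theoretic bookkeeping of the second part --- that $W^{-1}(\sqrt N\alpha_0)$ preserves $\mathcal{D}(\mathcal{N})$ (it is a bounded perturbation of $\mathcal{N}$ because $\norm{\alpha_0}_2<\infty$), that the pointwise shift identity for $a(k)$ holds on that domain, and that the $d^3k$-integral may be pulled through the inner product. Once one has observed $W^{-1}\Psi_{N,0}\in\mathcal{D}(\mathcal{N})$, each of these is routine and every manipulation above becomes a finite computation.
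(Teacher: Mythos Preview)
Your proposal is correct and follows essentially the same approach as the paper: the first inequality is reduced to Lemma~\ref{lemma: Nelson relation between beta and reduced density matrices} at $t=0$, and the second identity is obtained from the Weyl shift relation $W^*(f)a(k)W(f)=a(k)+f(k)$ combined with unitarity of $W$, exactly as in the paper's derivation of \eqref{eq: Nelson relation between beta-b and fluctuation dynamics}--\eqref{eq: Nelson relation between beta-b and fluctuation dynamics initial state}. Your added remarks on domain invariance under Weyl conjugation and on the use of Tonelli make explicit points the paper leaves implicit, but the argument is the same.
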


\begin{proof}
The first inequality is a consequence of Lemma~\ref{lemma: Nelson relation between beta and reduced density matrices}. 
Before we prove the second relation we  justify \eqref{eq: Nelson fluctuation dynamics and beta-b}. Therefore, is useful to note that the Weyl operator ($f \in L^2(\mathbb{R}^3)$)
\begin{align}
W(f) = \exp \left( \int d^3k \, f(k) a^*(k) - f^*(k) a(k)  \right)
\end{align}
 is unitary 
 \begin{align}
 W^{-1}(f) = W^*(f) = W(-f)
 \end{align}
 and satisfies\footnote{More information is given for instance in \cite[p.9]{rodnianskischlein}} 
 \begin{equation}
W^*(f) a(k) W(f) = a(k) + f(k) , \quad  W^*(f) a^*(k) W(f) = a^*(k) + f^*(k).
\end{equation}
 This leads to

\begin{align}
\beta^b(\Psi_{N,t},\alpha_t) 
&= \int d^3k \, \norm{\left( N^{-1/2} a(k) - \alpha_t(k) \right) \Psi_{N,t} }^2
\nonumber\\
&= \int d^3k \, \norm{W^*(\sqrt{N} \alpha_t)  \left( N^{-1/2} a(k) - \alpha_t(k) \right)  W(\sqrt{N} \alpha_t) W^*(\sqrt{N} \alpha_t)   \Psi_{N,t} }^2
\nonumber\\
&= \int d^3k \, \norm{N^{-1/2} a(k) W^*(\sqrt{N} \alpha_t ) \Psi_{N,t}}^2
\nonumber\\
&= N^{-1} \SCP{W^*(\sqrt{N} \alpha_t) e^{-i H_N t} \Psi_{N,0}}{ \mathcal{N} W^*(\sqrt{N} \alpha_t) e^{-i H_N t} \Psi_{N,0} }.
\end{align}
Let 
\begin{align}
\mathcal{U}_N(t;0) \coloneqq W^*(\sqrt{N} \alpha_t) e^{- i H_N t} W(\sqrt{N} \alpha_0)
\end{align}
denote the fluctuation dynamics then
\begin{align}
\label{eq: Nelson relation between beta-b and fluctuation dynamics}
\beta^b(\Psi_{N,t}, \alpha_t)
&= N^{-1} \SCP{\mathcal{U}_N(t;0) W^{-1}(\sqrt{N \alpha_0}) \Psi_{N,0}}{\mathcal{N} \mathcal{U}_N(t;0) W^{-1}(\sqrt{N \alpha_0}) \Psi_{N,0} }
\end{align}
follows from the unitarity of the Weyl operator. In particular, we have
\begin{align}
\label{eq: Nelson relation between beta-b and fluctuation dynamics initial state}
\beta^b(\Psi_{N,0},\alpha_0) = N^{-1} \SCP{ W^{-1}(\sqrt{N \alpha_0}) \Psi_{N,0}}{\mathcal{N} W^{-1}(\sqrt{N \alpha_0}) \Psi_{N,0} }  = b_N .
\end{align}

\end{proof}

\noindent
In the following, we consider initial states of product form~\eqref{eq: Nelson initial product state}.
\begin{lemma}
\label{lemma: Nelson coherent states as initial states}
Let $(\varphi_0,\alpha_0) \in H^2(\mathbb{R}^3) \oplus L_1^2(\mathbb{R}^3)$ with $\norm{\varphi_0}=1$ and $\Psi_{N,0} = \varphi_0^{\otimes N} \otimes W(\sqrt{N} \alpha_0) \Omega$. Then
\begin{align}
\label{eq: Nelson coherent states as initial states 1}
a_N &= \text{Tr}_{L^2(\mathbb{R}^3)} \abs{\gamma_{N,0}^{(1,0)} - \ket{\varphi_0} \bra{\varphi_0}} = 0 ,
\\
\label{eq: Nelson coherent states as initial states 2}
b_N &=  N^{-1} \SCP{ W^{-1}(\sqrt{N \alpha_0}) \Psi_{N,0}}{\mathcal{N} W^{-1}(\sqrt{N \alpha_0}) \Psi_{N,0} }  = 0
\; \text{and}
\\
\label{eq: Nelson coherent states as initial states 3}
\Psi_{N,0} &\in \left( L_s^2(\mathbb{R}^{3N})  \otimes \mathcal{F} \right) \cap \mathcal{D} \left( \mathcal{N} \right) \cap  \mathcal{D} \left(  \mathcal{N}  H_N \right) .
\end{align} 
Let $(\varphi_0,\alpha_0) \in H^4(\mathbb{R}^3) \oplus L_2^2(\mathbb{R}^3)$ with $\norm{\varphi_0}=1$
then 
\begin{align}
\label{eq: Nelson coherent states as initial states 5}
c_N &= \norm{\nabla_1 q_1^{\varphi_0} \Psi_{N,0}}^2 = 0
\\
\label{eq: Nelson coherent states as initial states 4}
\Psi_{N,0} &\in \left( L_s^2(\mathbb{R}^{3N})  \otimes \mathcal{F} \right) \cap \mathcal{D} \left( \mathcal{N} \right) \cap \mathcal{D} \left(  \mathcal{N}  H_N \right) \cap \mathcal{D}  \left( H_N^2  \right).
\end{align}
\end{lemma}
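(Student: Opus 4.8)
The plan is to prove each of the six assertions by a direct computation exploiting the product structure $\Psi_{N,0}=\varphi_0^{\otimes N}\otimes W(\sqrt N\alpha_0)\Omega$ (membership in $L^2_s(\mathbb{R}^{3N})\otimes\mathcal{F}$ is immediate); the substance lies entirely in the domain statements, which is where the regularity hypotheses on $(\varphi_0,\alpha_0)$ and the ultraviolet cutoff enter.

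For the three vanishing statements I would first observe that $\beta^a$, $\beta^b$ and $\beta^c$ all vanish on $\Psi_{N,0}$. Since $q^{\varphi_0}\varphi_0=0$ one has $q_1^{\varphi_0}\Psi_{N,0}=0$, so $\beta^a(\Psi_{N,0},\varphi_0)=\norm{q_1^{\varphi_0}\Psi_{N,0}}^2=0$ and likewise $\beta^c(\Psi_{N,0},\varphi_0)=\norm{\nabla_1 q_1^{\varphi_0}\Psi_{N,0}}^2=0$; the latter is exactly $c_N=0$, while the former combined with the upper bound in \eqref{eq: Nelson relation between beta and reduced density matrices 1} gives $a_N\le\sqrt{8\,\beta^a(\Psi_{N,0},\varphi_0)}=0$. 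For $b_N$ I would use that the coherent state is an eigenvector of the annihilation operator, $a(k)W(\sqrt N\alpha_0)\Omega=\sqrt N\,\alpha_0(k)W(\sqrt N\alpha_0)\Omega$, so that $\bigl(N^{-1/2}a(k)-\alpha_0(k)\bigr)\Psi_{N,0}=0$ for every $k$ and hence $\beta^b(\Psi_{N,0},\alpha_0)=0$; by Lemma~\ref{lemma: Nelson initial states} this equals $b_N$. (Equivalently, $W^{-1}(\sqrt N\alpha_0)\Psi_{N,0}=\varphi_0^{\otimes N}\otimes\Omega$ and $\mathcal{N}\Omega=0$.)

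For the domain statements the key ingredients are standard and I would assemble them as follows. First, a coherent state $W(g)\Omega$ with $g\in L^2(\mathbb{R}^3)$ lies in $\mathcal{D}(\mathcal{N}^k)$ for every $k$, with $\scp{W(g)\Omega}{\mathcal{N}^k W(g)\Omega}$ a polynomial in $\norm{g}_2^2$. Second, from the Weyl relation $W^*(g)H_f W(g)=H_f+\int d^3k\,\omega(k)\bigl(g^*(k)a(k)+g(k)a^*(k)\bigr)+\int d^3k\,\omega(k)\abs{g(k)}^2$ one reads off that $W(g)\Omega\in\mathcal{D}(H_f^n)$ whenever $\omega^n g\in L^2$, and since $\omega(k)^2=\abs{k}^2+m_b^2\le(1+m_b^2)(1+\abs{k}^2)$ the hypothesis $\alpha_0\in L^2_n$ guarantees $\omega^n\alpha_0\in L^2$. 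Third, the standard relative bounds $\norm{\vph(x)(\mathcal{N}+1)^{-1/2}}_{\op}\le C\norm{\eta}_2$ and more generally $\norm{\mathcal{N}^{m/2}\vph(x)\Psi}\le C\norm{\eta}_2\,\norm{(\mathcal{N}+1)^{(m+1)/2}\Psi}$ hold because $\norm{\eta}_2<\infty$ by \eqref{eq: Nelson cutoff functions norm}. Combining these with $\varphi_0\in H^{2n}(\mathbb{R}^3)$ (so that $\bigl(\sum_j-\Delta_j\bigr)^n\varphi_0^{\otimes N}\in L^2$) gives $\Psi_{N,0}\in\mathcal{D}(\mathcal{N})$; then $\Psi_{N,0}\in\mathcal{D}(H_N)=\mathcal{D}\bigl(\sum_j-\Delta_j+H_f\bigr)$ and $\mathcal{N}H_N\Psi_{N,0}\in\mathcal{H}^{(N)}$ for $n=1$, and, with the additional regularity $\varphi_0\in H^4$, $\alpha_0\in L^2_2$, also $H_N^2\Psi_{N,0}\in\mathcal{H}^{(N)}$. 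In the last two cases one expands $H_N\Psi_{N,0}$, resp. $H_N^2\Psi_{N,0}$, and bounds each summand with the three ingredients above together with the commutators $[\mathcal{N},\vph(x_j)]$, $[-\Delta_j,\vph(x_j)]$, $[H_f,\vph(x_j)]$, which merely replace the form factor $\eta$ by one of $\eta$, $\abs{\cdot}\eta$, $\abs{\cdot}^2\eta$, $\omega\eta$ — all in $L^2$ thanks to the cutoff. The hard part, in practice, is the bookkeeping of the cross terms $(-\Delta_j)\vph(x_j)$, $H_f\vph(x_j)$ and $\vph(x_j)\vph(x_k)$ appearing in $H_N^2\Psi_{N,0}$: none is actually dangerous, since each carries at most two particle derivatives and at most $\mathcal{N}^{3/2}$ on the Fock factor, both finite on $\Psi_{N,0}$ under the stated hypotheses, but they must be enumerated and estimated one by one.
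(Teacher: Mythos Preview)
Your proposal is correct. For the vanishing statements $a_N=b_N=c_N=0$ you argue essentially as the paper does (the paper simply observes that $\gamma_{N,0}^{(1,0)}=\ket{\varphi_0}\bra{\varphi_0}$ directly from the definition rather than routing through $\beta^a$ and Lemma~\ref{lemma: Nelson relation between beta and reduced density matrices}, but this is a cosmetic difference).

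For the domain statements your approach is genuinely different in style from the paper's. The paper writes out the Fock space components explicitly,
\[
\Psi_{N,0}^{(n)}(X_N,K_n)=\prod_{i=1}^N\varphi_0(x_i)\,e^{-N\norm{\alpha_0}^2/2}(n!)^{-1/2}\prod_{j=1}^n\sqrt{N}\,\alpha_0(k_j),
\]
and verifies the defining conditions \eqref{eq: Nelson number operator domain} and \eqref{eq: Nelson field energie domain} by direct computation of the relevant weighted sums over $n$. You instead work operator-theoretically, using the Weyl conjugation $W^*(g)H_fW(g)=H_f+\ldots$ to transfer the question to the vacuum, together with the standard $\mathcal{N}$-bounds on $\vph(x)$ and the commutator bookkeeping for the cross terms in $H_N^2\Psi_{N,0}$. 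Both routes lead to the same conclusion; yours has the advantage of making transparent exactly which form-factor norms ($\norm{\eta}_2$, $\norm{\abs{\cdot}\eta}_2$, $\norm{\abs{\cdot}^2\eta}_2$, $\norm{\omega\eta}_2$) and which Sobolev/weighted-$L^2$ conditions on $(\varphi_0,\alpha_0)$ are actually consumed at each step, while the paper's explicit-component computation is more self-contained and avoids invoking the relative bounds as black boxes.
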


\begin{proof}
From the definition of the one-particle reduced density matrix and \eqref{eq: Nelson relation between beta-b and fluctuation dynamics initial state} we directly obtain the relations~\eqref{eq: Nelson coherent states as initial states 1} and~\eqref{eq: Nelson coherent states as initial states 2}.  Equation~\eqref{eq: Nelson coherent states as initial states 5} holds because $\Psi_{N,0}$ is in the kernel of the projector $q_1^{\varphi_0}$.
In order to show~\eqref{eq: Nelson coherent states as initial states 3} we point out that
\begin{align}
\Psi_{N,0}^{(n)}(X_N,K_n) = \prod_{i=1}^N \varphi_0(x_i) e^{- N \norm{\alpha_0}^2/2} (n!)^{-1/2} \prod_{j=1}^n (N)^{1/2} \alpha_0(k_j)
\end{align} 
follows from the definition of the the Weyl operators~\cite[p.8]{rodnianskischlein}.
A direct calculation gives
\begin{align}
\sum_{n=1}^{\infty} n^2 \norm{\Psi_{N,0}^{(n)}}^2 = N \norm{\alpha_0}^2 + N^2 \norm{\alpha_0}^4.
\end{align}
Hence, $\Psi_{N,0}^{(n)} \in \mathcal{D}(\mathcal{N})$ (see \eqref{eq: Nelson number operator domain}).
Moreover, we have $\Psi_{N,0} \in \mathcal{D}(\sum_{i=1}^N - \Delta_i )$ because $\varphi_0 \in H^2(\mathbb{R}^3)$.
A straightforward estimate leads to 
\begin{align}
\sum_{n=1}^{\infty} \int d^{3N}x \, d^{3n}k \, \abs{\sum_{j=1}^n w(k_j)}^2  \abs{\Psi_{N,0}^{(n)}(X_N, K_n)}^2
\leq C(N, \norm{\alpha_0}_{L_1^2(\mathbb{R}^3)}).
\end{align}
From~\eqref{eq: Nelson field energie domain} we then conclude $\Psi_{N,0}^{(n)} \in \mathcal{D}(H_f)$ and $\Psi_{N,0}^{(n)} \in \mathcal{D}(H_N) = \mathcal{D}(\sum_{i=1}^N - \Delta_i ) \cap \mathcal{D}(H_f)$.
Similarly, one derives
\begin{align}
\sum_{n=1}^N n^2 \norm{(H_N \Psi_{N,0})^{(n)}}^2
&\leq C \sum_{n=1}^{\infty} n^2 \Big(  \Big| \Big| \sum_{j=1}^N \Delta_j \Psi_{N,0}^{(n)} \Big| \Big|^2
+ \Big| \Big| \sum_{j=1}^N N^{-1/2} ( \vph(x_j) \Psi_{N,0})^{(n)} \Big| \Big|^2  \Big)
\nonumber\\
&+ C \sum_{n=1}^{\infty} n^2 \norm{(H_f \Psi_{N,0})^{(n)}}^2
\leq C(N, \Lambda, \norm{\varphi_0}_{H^2(\mathbb{R}^3)}, \norm{\alpha_0}_{L_1^2(\mathbb{R}^3)}).
\end{align}
and concludes $\Psi_{N,0} \in \mathcal{D}(\mathcal{N} H_N) =  \big\{ \Psi_N \in \mathcal{D}(H_N) : H_N \Psi_N \in \mathcal{D}(\mathcal{N})  \big\}$.
In order to show~\eqref{eq: Nelson coherent states as initial states 4} we would like to note that $(\varphi_0, \alpha_0 ) \in (H^4(\mathbb{R}^3), L_2^2(\mathbb{R}^3))$, $\abs{\cdot}^2 \tilde{\eta} \in L^2(\mathbb{R}^3)$ and $\tilde{\eta} \in L^2(\mathbb{R}^3)$ imply $H_N \Psi_{N,0} \in \mathcal{D} (\sum_{i=1}^N - \Delta_i)$.
By means of the estimate
\begin{align}
\sum_{n=1}^{\infty} d^{3N}x \, d^{3n}k \,  \abs{\sum_{j=1}^n w(k_j)}^2  \abs{(H_N \Psi_{N,0})^{(n)}(X_N, K_n)}^2
&\leq C(N, \Lambda, \norm{\varphi_0}_{H^2(\mathbb{R}^3)}, \norm{\alpha_0}_{L_2^2(\mathbb{R}^3)})
\end{align}
one obtains $H_N \Psi_{N,0} \in \mathcal{D}(H_f)$. In total, we have
$H_N \Psi_{N,0} \in \mathcal{D}(H_N)$ and $\Psi_{N,0} \in \mathcal{D}(H_N^2)$.
\end{proof}

\section{Proof of Theorem~\ref{theorem: Nelson main theorem}}
In order to finish the proof of Theorem~\ref{theorem: Nelson main theorem} we remark that
Lemma~\ref{lemma: Nelson initial states} leads to
\begin{align}
\beta(\Psi_{N,0}, \varphi_0, \alpha_0) &\leq a_N + b_N ,
\nonumber\\
\beta_2 (\Psi_{N,0}, \varphi_0, \alpha_0) &\leq a_N + b_N + c_N .
\end{align}
We then choose for a given time $t \in \mathbb{R}_0^+$ the number $N$ of charged particles large enough such that the values of $\beta(\Psi_{N,t}, \varphi_t, \alpha_t)$ in \eqref{eq: Nelson time derivative of beta} and $\beta_2(\Psi_{N,t}, \varphi_t, \alpha_t)$ in \eqref{eq: Nelson time derivative of beta-2} are smaller than one and derive Theorem~\ref{theorem: Nelson main theorem} by means of Lemma~\ref{lemma: Nelson relation between beta and reduced density matrices}.

\section*{Acknowledgments}

We would like to thank Dirk Andr\'{e} Deckert, Marco Falconi and David Mitrouskas  for helpful discussions.
N.L. gratefully acknowledges financial support by the Cusanuswerk and the European Research Council (ERC) under the European Union's Horizon 2020 research and innovation programme (grant agreement No 694227).
The article appeard in slightly different form in one of the author's (N.L.) Ph.D. thesis \cite{leopold2}.

{}

\end{document}